\newcommand{\stkout}[1]{\ifmmode\text{\sout{\ensuremath{#1}}}\else\sout{#1}\fi}
\long\def\dm[#1]{\!\operatorname{d\mu}\left(#1\right)}
\newtheorem{rem}{Remark}
\newtheorem{thm}{Theorem}
\newtheorem{lemma}{Lemma}
\newtheorem{prop}{Proposition}
\newtheorem{cor}{Corollary}
\newtheorem{defn}{Definition}
\title{Supervised Hebbian learning}
\shorttitle{Supervised Hebbian learning} 
\author{Francesco Alemanno\inst{1,2} \and Miriam Aquaro\inst{3,4} \and Ido Kanter\inst{5} \and Adriano Barra\inst{1,2} \and Elena Agliari \inst{3,4}}
\institute{                    
  \inst{1} Dipartimento di Matematica e Fisica, Universit\`a del Salento, Campus Ecotekne, via Monteroni, Lecce 73100, Italy.\\
  \inst{2} Istituto Nazionale di Fisica Nucleare, Sezione di Lecce, Campus Ecotekne, via Monteroni, Lecce 73100, Italy.\\
  \inst{3} Dipartimento di Matematica, Sapienza Universit\`a di Roma, P.le A. Moro 5, 00185, Rome, Italy.\\
  \inst{4} Istituto Nazionale d'Alta Matematica (GNFM), F. Severi, P.le A. Moro 5, 00185, Rome, Italy.\\
  \inst{5} Department of Physics, Bar-Ilan University, Ramat-Gan, 52900, Israel.
}
\abstract{In neural network's Literature, {\it Hebbian learning} traditionally refers to the procedure by which the Hopfield model and its generalizations  {\it store} archetypes (i.e., definite patterns that are experienced just once to form the synaptic matrix). However, the term {\it learning} in Machine Learning refers to the ability of the machine to extract features from the supplied dataset (e.g., made of blurred examples of these archetypes), in order to make its own representation of the unavailable archetypes.
Here, given a sample of examples, we define a supervised learning protocol by which the Hopfield network can infer the archetypes, and we detect the correct control parameters (including size and quality of the dataset) to depict a phase diagram for the system performance.
We also prove that, for structureless datasets, the Hopfield model equipped with this supervised learning rule is equivalent to a restricted Boltzmann machine and this suggests an optimal  and interpretable  training routine. Finally, this approach is generalized to structured datasets: we highlight a quasi-ultrametric organization (reminiscent of replica-symmetry-breaking) in the analyzed datasets and, consequently, we introduce an additional {\em broken replica hidden layer} for its (partial) disentanglement, which is shown to improve MNIST classification from $\sim 75 \%$ to $\sim 95 \%$, and to offer a new perspective on deep architectures.}
\begin{document}

\maketitle

Forty years have elapsed since Hopfield's seminal work, yielding a model for biological information-processing \cite{Hopfield}; meanwhile, we have witnessed a striking development of artificial machine-learning (see e.g., \cite{DL0,Carleo,NoiJPhysA}) and we are finally in a stage where ideas, techniques and results stemming from biological and artificial sides can be fruitfully compared (see e.g., \cite{BarraEquivalenceRBMeAHN,Mezard,Monasson,Kanter1,Ventura}). Here we revise and leverage their analogies to unveil the internal mechanisms of a learning machine, focusing on two paradigmatic models, that is, respectively, the Hopfield neural network (HNN) and the restricted Boltzmann machine (RBM). In order for this comparison to be exhaustive, we first need to profoundly revise the assumptions underlying the theory developed by Amit, Gutfreund and Sompolinsky (AGS) \cite{AGS2}, who, in the eighties, gave a pioneering statistical-mechanical treatment of the HNN based on spin glasses \cite{MPV}. The problem lies in the fact that, in the AGS theory, the HNN actually does not learn, rather it stores definite patterns -- hereafter called {\em archetypes} -- by the so-called Hebb's rule (or countless variations on theme); on the other hand, in standard machine learning the network has to infer these archetypes by solely experiencing (a finite number of) their noisy versions -- hereafter called {\em examples} -- while the original archetypes remain unknown. Hence, in order to match biological and artificial information-processing, we must supply the HNN with examples rather than directly archetypes and therefore turn Hebb's rule into a genuine learning rule. In the following we will reach such a framework, whence we will show that standard machine-learning rules based on contrastive divergence algorithms collapse onto Hebb's learning rule, and we will highlight quantitative control parameters whose tuning determines the learning-machine failure or success. 
These results are obtained analytically by statistical-mechanics tools for random, unstructured datasets, where we can also establish a direct connection between the number of archetypes and the number of hidden neurons in the RBM. As for structured datasets, the robustness of these results is checked numerically for the MNIST and the fashion-MNIST datasets \cite{MNIST,FMNIST} and we also generalize the connection between the size of the hidden-layer(s) and the intrinsic complexity of the dataset, exploiting an iterative rule, reminiscent of the replica-symmetry-breaking (RSB) paradigma \cite{MPV}.

Let us start with the theoretical approach and introduce the information the network has to deal with: we define $K$ archetypes denoted with $\boldsymbol \xi^{\mu}$, $\mu \in \{ 1,...,K \}$, as binary vectors of length $N$ and whose entries are i.i.d. variables drawn from 
\begin{equation}
\mathcal P (\xi_{i}^{\mu}) = \frac{1}{2}\delta(\xi_{i}^{\mu}-1)+\frac{1}{2}\delta(\xi_{i}^{\mu}+1),
\end{equation}
for any $i \in \{1,\cdots,N\}$ and $\mu \in \{1, ..., K\}$, then, for each of them we generate $M$ examples $\boldsymbol \eta^{\mu a}$, $a \in \{1,...,M\}$, that we obtain by corrupting the archetype flipping its digits randomly as  
\begin{eqnarray} \label{eq:esempi}
\eta_{i}^{\mu a}&=&\xi_{i}^{\mu}\chi_{i}^{\mu a}, \\
\mathcal P (\chi_{i}^{\mu a}) &=&\frac{1+r}{2}\delta(\chi_{i}^{\mu a}-1)+\frac{1-r}{2}\delta(\chi_{i}^{\mu a}+1)
\end{eqnarray}
for any $i, \mu, a$, being $r \in (0,1]$ a parameter tuning the \emph{quality} of the sample.
We now feed the HNN on the dataset $\mathcal S = \{ \boldsymbol \eta^{\mu a}\}_{\mu=1,...,K}^{a=1,...,M}$ and, for this operation to be unambiguous, we also need to specify \emph{how} these examples are presented to the network, mirroring supervised and unsupervised learning.  
In fact, the HNN Hamiltonian reads as $\mathcal H^{\textrm{\tiny{(HNN)}}}(\boldsymbol \sigma | \boldsymbol J)= - \sum_{i<j}^{N,N}J_{ij} \sigma_i \sigma_j$, where $\boldsymbol \sigma = \{\sigma_i \}_{i=1,...,N}  \in \{ -1, +1\}^N$ are $N$ binary neurons and the synaptic connections $J_{ij}$'s incorporate the accessible information: in the original setting, where archetypes are available, the Hebbian (storing) rule reads as $J_{ij} \propto \sum_{\mu} \xi_i^{\mu}\xi_j^{\mu}$, while here $J_{ij} = J_{ij}(\mathcal S)$ and we envisage the following protocols:\\
\newline
{\em Supervised Hebbian learning}. A teacher discloses the example labels and they can therefore be combined as 
\begin{equation}\label{HebbSup}
J_{ij}^{\textrm{sup}} \propto \sum_{\mu=1}^K (\sum_{a=1}^M \eta_i^{\mu a}) (\sum_{b=1}^M \eta_j^{\mu b});   
\end{equation}
{\em Unsupervised Hebbian learning}. Without a teacher that tells how to cluster examples, we mix them up obtaining
\begin{equation}\label{HebbUnsup}
J_{ij}^{\textrm{unsup}} \propto  \sum_{\mu=1}^K \sum_{a=1}^M \eta_i^{\mu a} \eta_j^{\mu a}.
\end{equation}
Clearly, when $r=1$, $M$ becomes a dummy variable because examples coincide with the related archetype and we recover the classical Hebbian rule in both cases. 
%
Here we focus on the former \eqref{HebbSup}, while we refer to the Supplementary Material (SM) \cite{SM} for a discussion on the latter \eqref{HebbUnsup}.

A convenient control parameter to assess the information content in $\mathcal S$ is $ \rho := \frac{1-r^2}{M r^2}$. To see this, let us focus on the $\mu$-th pattern and the $i$-th digit, whose related block is $\boldsymbol \eta_i^{\mu}= (\eta_i^{\mu 1},\eta_i^{\mu 2},...,\eta_i^{\mu M})$; the error probability for any single entry is $\mathcal P(\chi_i^{\mu a} = -1) = (1-r)/2$ and, by applying the majority rule on the block, it is reduced to 
$\mathcal P (\textrm{sgn}(\sum_a \chi_i^{\mu a}) = -1 ) \underset{M \gg 1}{\approx} [1 - \textrm{erf}(1 / \sqrt{2 \rho})]$, thus, the
 conditional entropy $H(\xi_i^{\mu} | \boldsymbol \eta_i^{\mu})$, that quantifies the amount of information needed to describe the original message $\xi_i^{\mu}$, given the related $M$-length block $\boldsymbol \eta_i^{\mu}$, is monotonically increasing with $\rho$, saturating to 1 bit.
 Hence, in order for the dataset to retain information on the original archetypes, $\rho$ must be finite, that is, $M r^2$ must be non-vanishing. 

This scaling, arising from an information-theory perspective, is recovered and sharpened in the neural network framework.
We start with the signal-to-noise analysis on the HNN to check for local stability of the archetype-retrieval configurations in the noiseless limit, that is, we study the conditions under which the internal field $h_i(\boldsymbol \sigma)=\sum_{\underset{j\neq i}{j =1 }}^{N}J_{ij} (\mathcal S)\sigma_j$, namely the post-synaptic potential experienced by the neuron $i$, is aligned with the neural activity $\sigma_i$ while $\boldsymbol \sigma = \boldsymbol \xi^{\mu}$, for any arbitrary $\mu$ and $i$. As detailed in the SM \cite{SM}, this approach can be recast into the one-step contrastive divergence scheme \cite{Hinton1MC} and returns 
$\frac{K}{N} \left(1+\frac{1-r^{2}}{Mr^{2}}\right)^{2}+\frac{1-r^{2}}{Mr^{2}}\lesssim1$;
%
%
this relation advises on the suitable rescaling of the dataset size ($M \gtrsim r^{-2}$), as the dataset quality is impaired ($ r \to 0$), in order to preserve network's abilities: note that power-law scalings were already evidenced in the machine-learning context, see e.g. \cite{Kanter-PowerLaw}. 
To achieve a quantitative picture and control of the network behavior, we work out a statistical-mechanics investigation and we start by introducing
the Boltzmann-Gibbs measure for the system: 
\begin{equation}
\mathcal P_{\beta} (\boldsymbol \sigma | \mathcal S ) = \frac{1}{Z^{\textrm{\tiny{(HNN)}}}_{\beta}( \mathcal S)} e^{-\beta \mathcal H^{\textrm{\tiny{(HNN)}}}(\boldsymbol \sigma | \boldsymbol J (\mathcal S))}
\end{equation}
where $Z^{\textrm{\tiny{(HNN)}}}_{\beta}$ is the partition function and $\beta:=1/T \in \mathbb R^+$ tunes the distribution broadness; $\beta$ along with the load $\alpha := \lim_{N \to \infty} K/N$ and the dataset ``entropy'' $\rho=(1-r^2)/Mr^2$, make up the set of control parameters. Further, we introduce the macroscopic observables (order parameters) useful to describe the system behavior, namely 
\begin{eqnarray} \label{Mattis}
m &:=& \frac{1}{N} \sum_{i=1}^N \xi_i^1 \sigma_i,\\ \label{Mattel}
n&:=&\frac{1}{r(1+\rho)}\frac{1}{NM}\sum_{i,a=1}^{N,M}\eta_i^{1a}\sigma_i,\\ 
\label{overlaps_q}
q_{12}&:=&\frac{1}{N}\sum_{i=1}^N \sigma_i^{(1)} \sigma_i^{(2)},\
\end{eqnarray}
where we defined, respectively, the Mattis magnetization of the archetype  (eq. \ref{Mattis}), the typical magnetization of the example (eq. \ref{Mattel}), and the two-replica overlap (eq. \ref{overlaps_q}); for $m$ and $n$ we referred to $\mu=1$ without loss of generality.

\begin{figure}
\begin{center}
\includegraphics[scale=0.25]{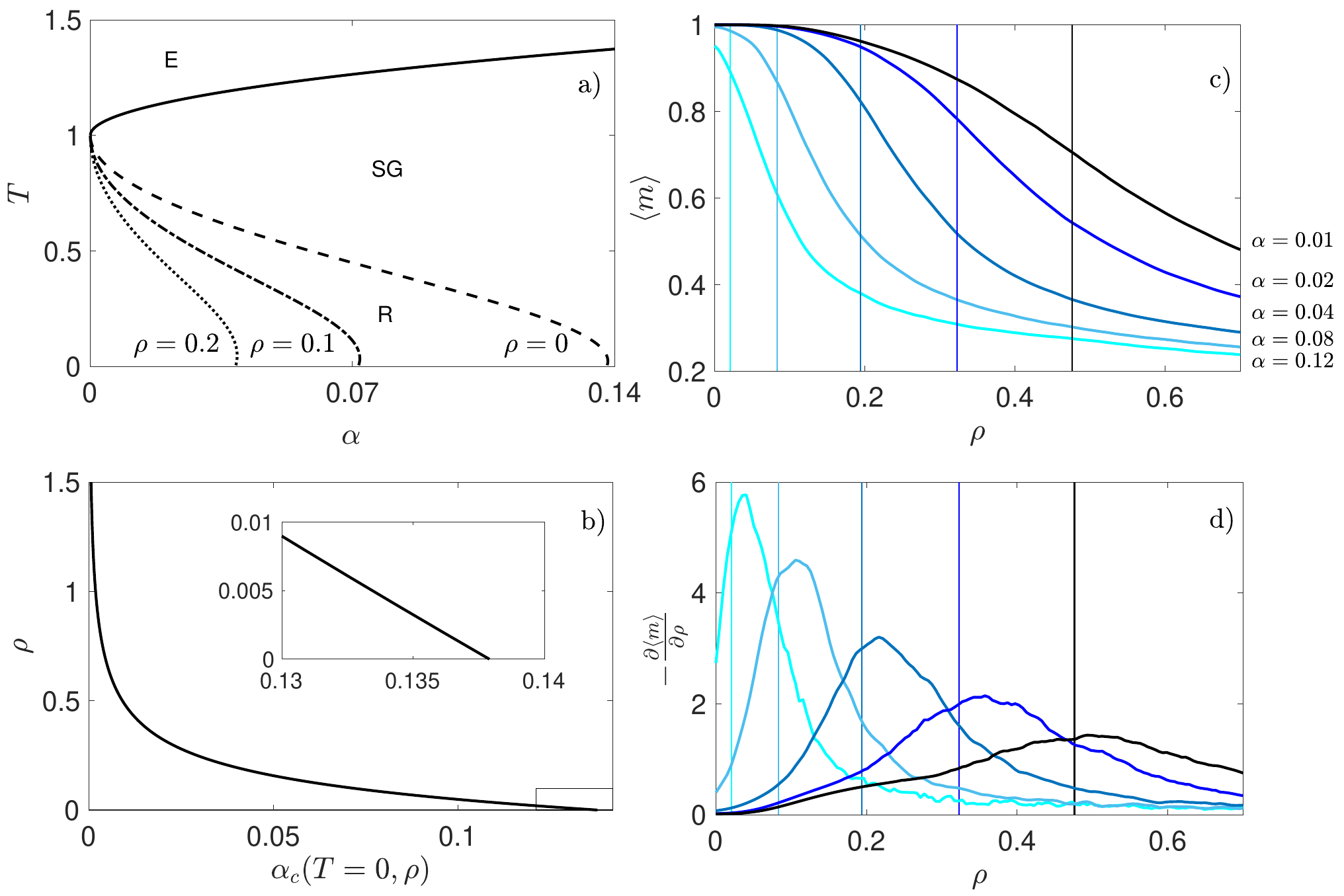}
\caption{Behaviour of the supervised HNN as the control parameter are varied. Panel $a$: phase diagram highlighting the ergodic (E), the spin-glass (SG) and the retrieval (R) phase versus $T$ and $\alpha$; the transition line between the SG phase and the R phase depends on $\rho$ and three cases are shown: $\rho =0$ (dashed line, corresponding to AGS theory), $\rho=0.1$ (dashed-dotted line), and $\rho=0.2$ (dotted line). Panel $b$: critical load $\alpha_c$ obtained for $T=0$ and as a function of $\rho$. Panel $c$: Estimate of the Mattis magnetization versus $\rho$ by MC simulations for systems of size $N=5000$; different loads are considered and plotted in different colors (brighter nuances correspond to larger values of $\alpha$, as reported on the right); the vertical lines represent the transition points as predicted by statistical mechanics. Panel $d$: From data presented in panel $c$ we derive the susceptibility w.r.t. $\rho$ and notice that the peaks approximately match the transition points (by a finite-size scaling we checked that the match gets closer as $N$ is made larger). }
\label{Fig:Uno}
\end{center}
\end{figure}

Under the replica-symmetry (RS) ansatz, all the order parameters do not fluctuate in the thermodynamic limit, i.e., being $\mathcal{P}(x)$ the probability distribution for the observable $x = (m, n,q_{12})$ and $\langle x \rangle$ its expectation, then $\lim_{N \to \infty}\mathcal{P}(x)=\delta(x-\langle {x} \rangle)$. These expectation values can be obtained by extremizing the quenched free-energy of the model w.r.t. the order parameters and, as explained in  the SM \cite{SM}, for $N \to \infty$ and $M \gg 1$, we obtain the following set of self-consistent equations 
%
%
%
\begingroup\makeatletter\def\f@size{8.5}\check@mathfonts
\def\maketag@@@#1{\hbox{\m@th\small\normalfont#1}}%
\begin{eqnarray}
\label{eq:sce_m}
\langle{m} \rangle &=&\mathbb{E}_{z}\tanh\left\{\beta \langle{n} \rangle+z\beta\sqrt{ \langle{n} \rangle^{2}\rho+\frac{\alpha \langle{q} \rangle}{\left[1-\beta\left(1- \langle{q} \rangle\right)\right]^{2}}}\right\}\\
\label{eq:sce_mu}
\langle{n} \rangle &=&\frac{ \langle{m} \rangle}{(1+\rho)-\rho \beta(1- \langle{q} \rangle)}\\
\label{eq:sce_ultima}
\langle{q} \rangle &=&\mathbb{E}_{z}\tanh^{2}\left\{\beta \langle{n} \rangle+z\beta\sqrt{ \langle{n} \rangle^{2}\rho+\frac{\alpha \langle{q} \rangle}{\left[1-\beta\left(1- \langle{q} \rangle\right)\right]^{2}}}\right\}
\end{eqnarray}
\endgroup
where $\mathbb E_{z}$ denotes the average w.r.t. the standard Gaussian variable $z$.
The inspection of eqs.~(\ref{eq:sce_m})-(\ref{eq:sce_ultima}) provides a quantitative picture of the system behavior in the space of the control parameters as reported in Fig.~\ref{Fig:Uno} $a$-$b$. In particular, like in the classical HNN, we recognize the emergence of an ergodic region corresponding to large values of $T$ and a retrieval region for relatively small values of $\alpha$ and $T$, yet, the Hebbian learning rule \eqref{HebbSup} makes the phenomenolgy much richer: here we have an additional tuneable parameter $\rho$ which controls the width of the retrieval region. Denoting with $\alpha_c(T, \rho)$ the first-order transition line between the spin-glass phase and the retrieval phase, we show that $\alpha_c(T=0, \rho)$ is a decreasing function of $\rho$ and, as expected, $\alpha_c(T=0, \rho=0) \approx 0.138$, consistently with the AGS theory. Signatures of this transition are also found by means of finite-size Monte Carlo (MC) simulations as shown in Fig.~\ref{Fig:Uno} $c$-$d$. 
%
%
Further, looking at eqs.~(\ref{eq:sce_m})-(\ref{eq:sce_ultima}) and requiring a non-vanishing magnetization $\langle m \rangle$, we derive that $\rho$ must be finite and therefore we recover the scaling $M \sim r^{-2}$; also, in the zero fast-noise limit $T \to 0$, these equations can be treated to get explicit expressions as achieved in  the SM \cite{SM}.

We now bridge this theory with the machine learning counterpart. We consider a RBM made of two layers, a visible one endowed with $N$ binary neurons $\boldsymbol \sigma = \{ \sigma_i \}_{i=1,...,N} \in \{-1,+1\}^N$, and a  hidden one built of $K$ real-valued neurons $\boldsymbol z = \{z_\mu\}_{\mu=1,...,K} \in \mathbb{R}^K$ with a Gaussian prior, and whose Hamiltonian reads as $\mathcal H^{\textrm{\tiny{(RBM)}}}(\boldsymbol \sigma , \boldsymbol z | \boldsymbol W)=-\sum_{i,\mu}^{N,K} W_{i,\mu}\sigma_i z_{\mu}$.  We choose the length of the hidden layer to match the number of archetypes in such a way that, as we will see, we can assign to each hidden neuron the recognition of a unique archetype. 
The Boltzmann-Gibbs distribution associated to $\mathcal H^{\textrm{\tiny{(RBM)}}}$ is
\begin{equation}
\label{eq:BG_RBM}
\mathcal {P}_{\beta} (\boldsymbol \sigma, \boldsymbol z | \boldsymbol W) =  \frac{1}{Z^{\textrm{\tiny{(RBM)}}}_{\beta}(\boldsymbol W)} e^{- \beta \mathcal H^{\tiny{\textrm{(RBM)}}}(\boldsymbol \sigma , \boldsymbol z | \boldsymbol W) - \beta \frac{\boldsymbol {z}^2}{2} }.
\end{equation}
Now, the goal is to find the weight setting such that this measure mimics the target one, referred to as $\mathcal {Q}$, which generated the examples in $\mathcal S$. 
Focusing on a classification task, we adopt the so-called grandmother-cell scheme: during training, the generic input-output pair is $(\boldsymbol \eta^{\nu a} , \boldsymbol z^{(\nu)})$, where  $\boldsymbol z^{(\nu)}$ is the one-hot vector whose $\nu$-th entry is the single non-null entry \cite{Leonelli,Giordano}.
{Thus, the target distribution reads as}
\begin{eqnarray} \label{Q_sup}
\mathcal Q(\boldsymbol \sigma , \boldsymbol z) &=& \sum_{\mu,a} \delta (\boldsymbol \eta^{\mu a} - \boldsymbol \sigma) \delta (\boldsymbol z^{(\mu)} - \boldsymbol z),
\end{eqnarray}
and, if training is successful, we expect that, initializing the visible layer as a test example $\tilde{\boldsymbol \eta}^{\nu}$ of the $\nu$-th archetype and letting the neurons evolve freely up to thermalization, the hidden layer will provide the estimated class as $\textrm{argmax} [ \langle \boldsymbol z(\tilde{\boldsymbol \eta}^{\nu}) \rangle]$.
\newline
The learning rule can be derived by a gradient descent on the Kullback-Leibler (KL) cross entropy $D_{\textrm{KL}}(\mathcal Q \Vert \mathcal P)$ between the distributions  $\mathcal Q$ and $\mathcal P$, that is, $W_{i,\mu}^{n+1} = W_{i,\mu}^{n} -\epsilon \frac{dD_{\textrm{KL}}(\mathcal Q \Vert \mathcal P)}{dW_{i,\mu}}$, where $n$ accounts for training iterations and $\epsilon$ is the learning rate; recalling \eqref{Q_sup} this yields
\begin{equation}\label{RBMsup}
W_{i,\mu}^{n+1} = W_{i,\mu}^{n} + \epsilon \left( \langle \sigma_i z_{\mu} \rangle_{\boldsymbol \sigma \& \boldsymbol z} -  \langle \sigma_i z_{\mu} \rangle \right),
\end{equation}
where the brackets denote the expectation under the Boltzmann-Gibbs measure \eqref{eq:BG_RBM} and the bracket subscript specifies the clamped variables. 
\newline
In the case of orthogonal patterns, the configuration where weight entries are set as the empirical average of example entries, i.e., $W_{i \mu} = \bar{\eta}_{i \mu}:= \frac{1}{M} \sum_{a=1}^M \eta_i^{\mu a}$, is a fixed point for the contrastive divergence and therefore compatible with a trained machine \cite{Leonelli,SM}.
Further, with this choice we can prove that the RBM is equivalent, in distribution, to the HNN with supervised Hebbian rule; in fact, by a Gaussian integration
\begin{eqnarray} 
Z^{\textrm{\tiny{(RBM)}}}_{\beta}(\boldsymbol W = \bar{\boldsymbol \eta})&=& \sum_{\sigma}^{2^N}\int e^{\frac{\beta}{\sqrt{N}} \sum_{\mu} (\sum_i \sigma_i \bar{\eta}_{i \mu} )z_{\mu} ) - \frac{\beta z_{\mu}^2}{2} } \\
\nonumber
&=& \sum_{\sigma}^{2^N} e^{\frac{\beta}{2N} \sum_{\mu} \sum_{ij} \sigma_i \bar{\eta}_{i \mu} \bar{\eta}_{j \mu} \sigma_j}=Z^{\textrm{\tiny{(HNN)}}}_{\beta}(\mathcal S).
\end{eqnarray}
%

\begin{figure}
\begin{center}
\includegraphics[scale=0.24]{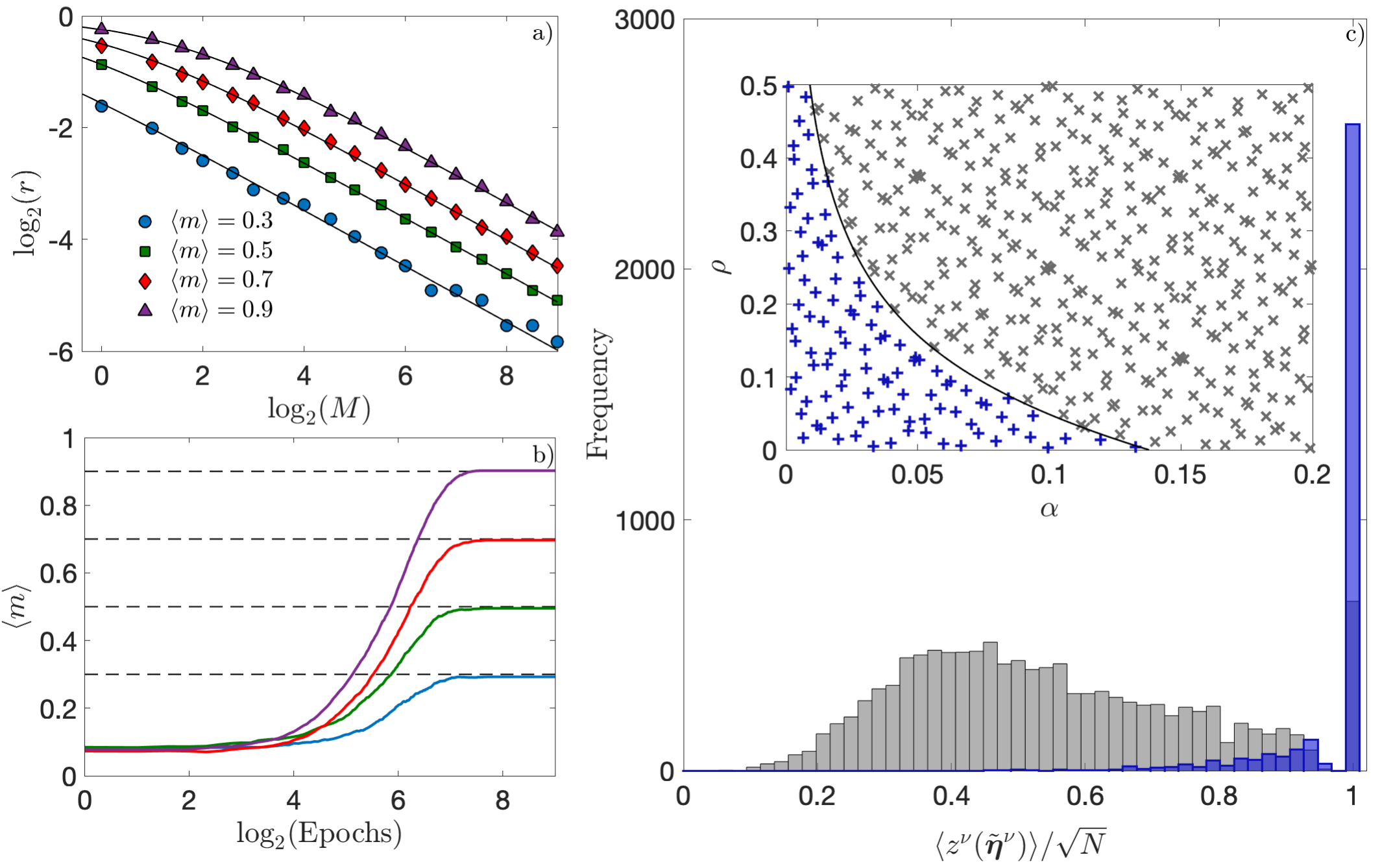}
\caption{Comparison between HNN and RBM performances. Panel $a$: we fix a certain value for the expected magnetization $\langle m \rangle$ and we derive from Eq.~\eqref{eq:sce_m}, obtained theoretically for the HNN, how $r$ and $M$ should be tuned in order to retain this value constant (solid line);  an analogous analysis is repeated numerically for the RBM where now $m$ is evaluated as the overlap between the visible layer and a given archetype (symbols); different values of magnetization are considered and represented with different symbols. Panel $b$: Expected value of the RBM magnetization versus the training time and for given values of $r$ and $M$, under on-line contrastive divergence (CD-1) \cite{Hinton1MC}; the long-time value corresponds to the theoretical estimate obtained for the HNN for the same choice of $r$ and $M$ (horizontal lines). Panel $c$: we sampled $1.5\times 10^4$ couples $(\alpha, \rho) \in (0,0.2) \times (0, 0.5)$ by Sobol's low-discrepancy sequence; for each extraction (represented by a cross in the inset) we build a RBM of size $N=5000$ and $K = \alpha N$, we generate a set $\mathcal S$ of examples and we set the machine weights as $\boldsymbol W = \bar{\boldsymbol \eta}$. Then, we initialize the visible layer as a test example $\tilde{\boldsymbol \eta}^{\nu}$, we run MC simulations and we evaluate $\langle z_{\nu} \rangle$, whose histogram is depicted in the main plot, distinguishing between cases inside (blue) and outside (grey) the retrieval region.}
\label{comparison}
\end{center}
\end{figure}
This equivalence implies that the phase diagram outlined for the HNN (see Fig.~\ref{Fig:Uno} $a$-$b$) also applies to the RBM, as confirmed in Fig.~\ref{comparison}. In particular, the retrieval region corresponds to a parameter setting where the trained RBM relaxes to configurations such that the overlap between the visible layer and the archetype are close to one. Remarkably, this is consistent with the usual  performance and score values \cite{SompoLearning} or error-based measures as in the Vapnik-Chervonenkis learning theory \cite{VCL} where one aims to minimize the distance between the output and the instances of a test set. In fact, $-(\boldsymbol \sigma - \boldsymbol \xi^{1})^2 \propto \boldsymbol \sigma \cdot \boldsymbol \xi^{1} = m$ and $-\overline{(\boldsymbol \sigma - \boldsymbol \eta^{1})^2} \propto n$: whenever the network is in the retrieval region, for some archetype $\mu$ it is minimizing one of these Loss functions $L^{\mu}_{\pm}=(1/2N) ||\bold{\xi^{\mu}} \pm \bold{\sigma}||^2 = 1 \pm m_{\mu}$ as the Hopfied Hamiltonian can be written as $\mathcal H^{\textrm{\tiny{(HNN)}}}(\boldsymbol \sigma | \boldsymbol J)= -N\sum_{\mu}^K (1 - L^{\mu}_+L^{\mu}_-)$ (in fact it learns both the pattern $\xi^{\mu}$ and its gauge symmetric copy $-\xi^{\mu}$).

In order to appreciate further the equivalence between HNN and RBM, we show that it can be reached from a different perspective, namely using the maximum-entropy principle, according to Jaynes' inferential interpretation \cite{CKS,Jaynes}. Let us look for the least structured probability distribution $\mathcal P(\boldsymbol \sigma, \boldsymbol z)$ that is compatible with the set of data $\{(\boldsymbol \eta^{\mu a}, \boldsymbol z^{(\mu)})\}_{\mu=1,...,K}^{a=1,...,M}$  to inspect which kind of correlations  the machine detects in the dataset. While extensive calculations are provided in  the SM \cite{SM}, here we report the main findings: the minimal constraints needed to recover the HNN and RBM's Boltzmann-Gibbs distribution concern the variance of hidden units and the correlations between visible and hidden units -- set equal to their empirical estimates $C_{z_{\mu}^2}$ and $C_{\sigma, z}^{i,\mu}$ for $i=1,...,N$ and $\mu =1,...,K$, respectively -- beyond those for $\mathcal{P}$ to be well defined. The constrained optimization problem therefore reads: $\max_{\{\lambda_0, \lambda_1, \Lambda_{i,\mu} \}_{i,\mu} }S[\mathcal{P}]$ with
%
%
\begin{eqnarray}
S[\mathcal{P}] &=&- \langle  \mathcal{P} \ln \mathcal{P} \rangle_{\mathcal{P}} + \lambda_0 \left(  \langle \mathcal{P} \rangle_{\mathcal{P}}  - 1 \right) +  \\ \nonumber
+ & \lambda_1 & (  \langle \sum_{\mu=1}^K z_{\mu}^2  \rangle_{\mathcal{P}}  -   C_{z_{\mu}^2}) + \sum_{i,\mu}^{N,K}\Lambda_{i,\mu}\left[ \langle \sigma_i z_{\mu}  \rangle_{\mathcal{P}} - C_{\sigma, z}^{i,\mu} \right]
\end{eqnarray}
where $\langle \cdot  \rangle_{\mathcal{P}}$ denotes the expectation over $ \mathcal{P}$.
The solution yields the following Lagrange multipliers:
\begingroup\makeatletter\def\f@size{8.5}\check@mathfonts
\def\maketag@@@#1{\hbox{\m@th\small\normalfont#1}}%
\begin{equation}\nonumber
 e^{\lambda_0 -1} =\sum_{\boldsymbol \sigma,~ \boldsymbol z} \mathcal P({\boldsymbol \sigma, \boldsymbol z}),\  \lambda_1 = 1,~
\Lambda_{i,\mu} = \sqrt{\frac{\beta}{N r^2(1+\rho) }} \frac{1}{M}\sum_{a=1}^M \eta_i^{\mu a}.
\end{equation}
\endgroup
Therefore, this machine captures correlations between the two classes of neurons and, under the supervised learning protocol chosen here, these are recast into empirical averages over examples. 
In particular, the hidden-layer size can be interpreted as a measure of the model flexibility: a larger $K$ allows for a larger number of degrees of freedom and for a finer inference, yet too large a flexibility can imply overfitting phenomena which in our framework are naturally recast as the emergence of a pure spin-glass phase. According to the phase diagram in Fig.~\ref{Fig:Uno}$a$, the maximum flexibility allowed is $K_c = \alpha_c(\rho) N$; this estimate is successfully checked in Fig.~\ref{comparison}$c$.

\begin{figure}
\begin{center}
\includegraphics[scale=0.43]{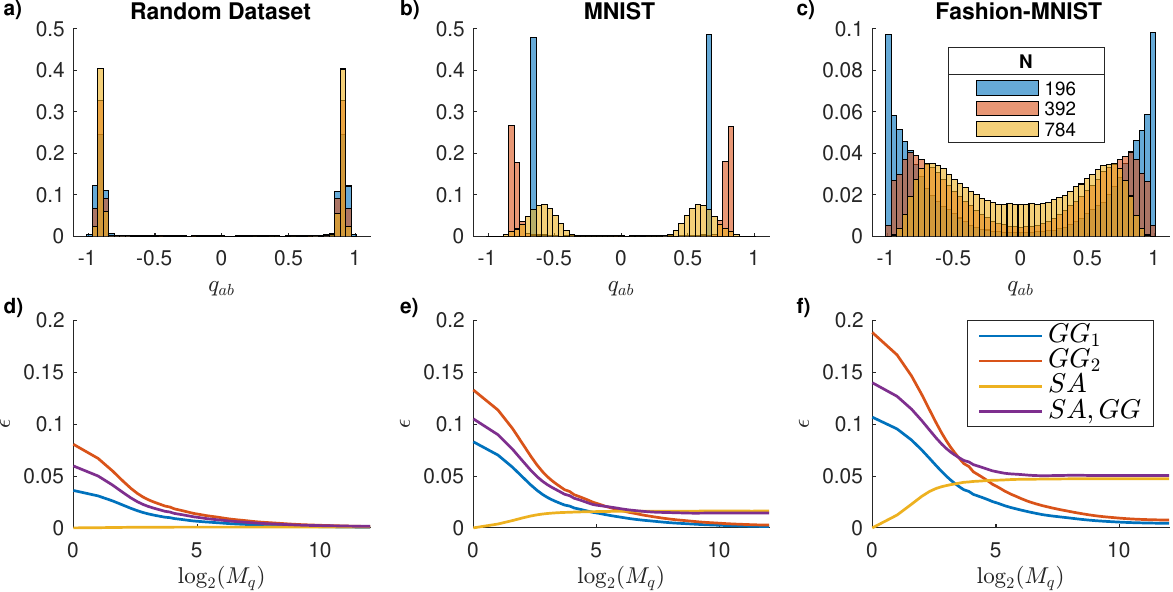}
\caption{Evidence of RSB in structured datasets. Upper Plots: we compare the empirical overlap distribution $\mathcal P(q)$ obtained for the random (panel $a$), the MNIST (panel $b$), and the fashion-MNIST (panel $c$) datasets; three different item sizes are also considered, see the legend. From left to right, we move from a RS scenario where $\mathcal P(q)$ exhibits two peaks that get sharper as the item size increases, to a RSB scenario where $\mathcal P(q)$ is bimodal but with increasing broadness as the item size increases. Lower plots:  we report the violation of the Ghirlanda-Guerra identities ($GG_1$, $GG_2$) and the violation of self-averaging $SA$ as obtained for the random (panel $d$), the MNIST (panel $e$) and the fashion-MNIST (panel $f$) datasets. Again from left to right we move from a replica-symmetric scenario where the self-averaging relations hold and the Ghirlanda-Guerra relations (corresponding to trivial  identities) are fast vanishing, to a picture resembling broken replica symmetry, where  self-averaging does not hold any longer but the Ghirlanda-Guerra relations are still preserved (this time in a not trivial manner). See the SM \cite{SM} for further explanation. 
}\label{fig:MNIST}
\end{center}
\end{figure}

Up to now, we proved that, when dealing with a random, structureless dataset, the HNN with supervised Hebbian rule and the RBM trained under a grandmother-cell scheme are equivalent, and that parameters that emerge naturally in a statistical mechanics framework can be related to standard quantifiers in a machine learning context. More challenging datasets can also be treated as long as the intrinsic structure is properly encoded in the system as we are going to explain. 
Let us denote with $\mathcal S = \{ \boldsymbol \zeta^{\mu a} \}^{\mu=1,...,K}_{a=1,...,M}$ the sample of examples, where the change of notation underlines that now, in general, there is no archetype available hence $\zeta^{\mu a}_i$ can not be obtained by flipping some pixels in the related archetype as in Eq.~\ref{eq:esempi}. 
Moreover, in the structureless case, scrolling though the various examples belonging to the same class, pixels are all homogeneously subject to a flipping probability, while in the structured case some pixels turn out to be more persistent than others. This recalls the difference between ergodic and glassy configurations in spin models. In particular, glassy configurations are characterized by peculiar statistical properties (e.g., lack of self-averaging) which are in turn related to a ultrametric organization. The existence of an analogous organization for dataset items may suggest effective strategies for their processing of a learning machine. In Fig.~\ref{fig:MNIST} we show some evidence in this sense: the distribution of item overlaps -- mirrorring replica overlaps in spin systems -- resembles the Parisi distribution \cite{MPV}, further (as deepened in the SM \cite{SM}) Ghirlanda-Guerra identities \cite{GGide} are numerically shown to hold.  
\newline
In the light of this result we extend the previous grandmother-cell scheme: We pre-treat each sub-sample $\mathcal S_{\mu} = \{ \boldsymbol \zeta^{\mu a} \}_{a=1,...,M}$ to assess its intrinsic structure (e.g., by principal component analysis), whence we determine $K_{\mu}$ disjoint and exhaustive sub-groups $\{ \mathcal S_{\mu}^{\ell}\}_{\ell =1, ..., K_{\mu}}$ and we allocate as many hidden neurons for each class, the overall size of the hidden layer therefore reads as $\hat{K} = \sum_{\mu=1}^K K_{\mu}$. The weight matrix $\boldsymbol W \in \mathbb{R}^{\hat{K} \times N}$ is determined by averaging over instances assigned to each sub-group $\mathcal S_{\mu}^{\ell}$ for $\ell = 1, ..., K$. Classification is finally performed over this hidden layer by an additional softmax layer $\boldsymbol \pi = \textrm{softmax}[\boldsymbol \Gamma \cdot (\boldsymbol W \cdot \boldsymbol  \sigma)^2] \in [0,1]^K$, where $\boldsymbol \Gamma$ can again be determined by simple, algebraic operations over the training set, see Fig. \ref{fig:RSB_RBM} and the SM \cite{SM}. 
\begin{figure}
\begin{center}
\includegraphics[scale=0.15]{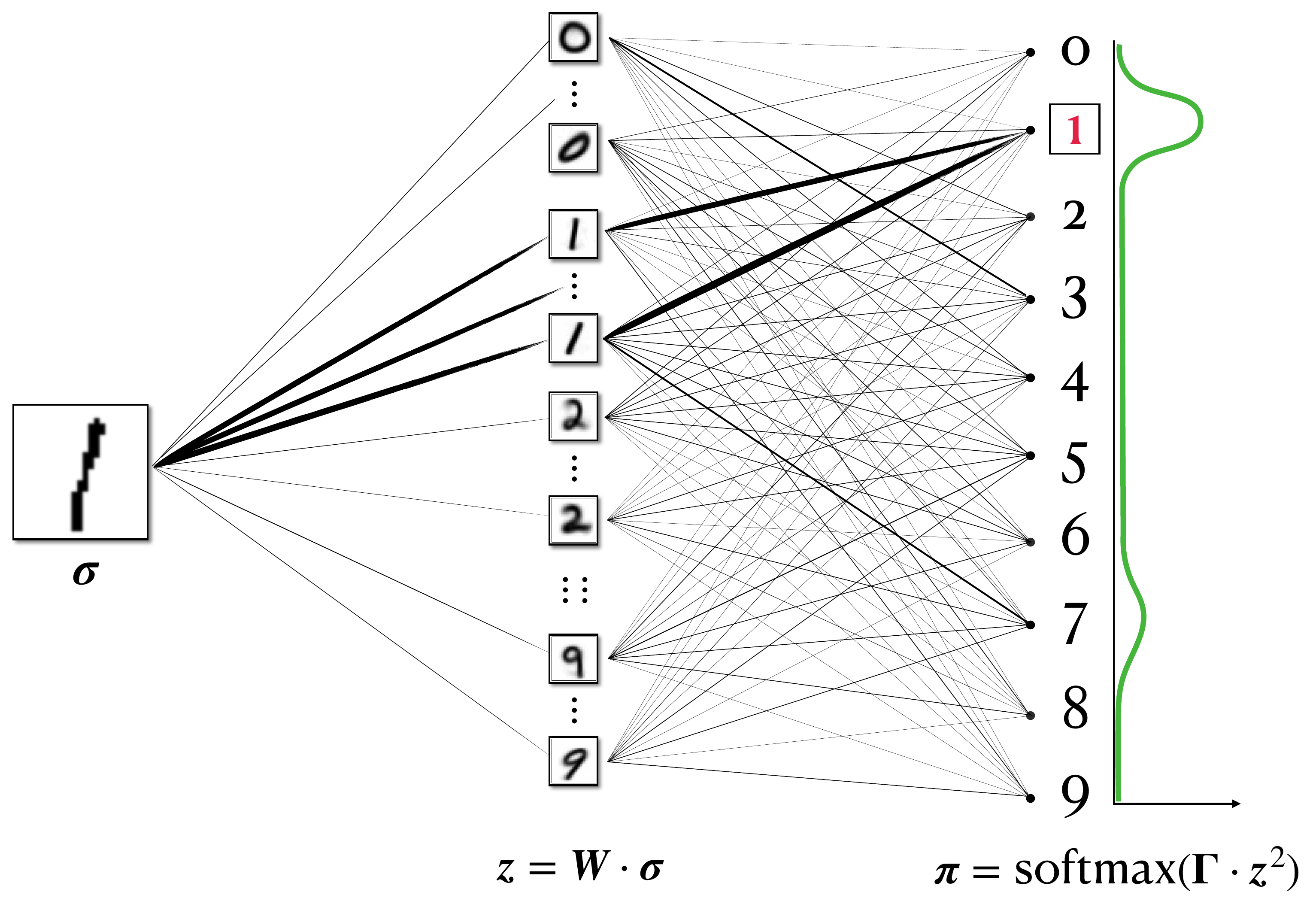}
\caption{Schematic representation of a three-layer RBM for the MNIST dataset based on RSB hierarchy. From left to right: visible layer $\bold{\sigma} \in [-1,+1]^N$ receiving digits to be classified (raw data); hidden layer $\bold{z}\in \mathcal{R}^K$ where each node corresponds to a pseudo archetype as sketched (1-RSB effective representation); softmax layer $\bold{\pi} \in [0,+1]^K$ for classification (RS effective representation).}\label{fig:RSB_RBM}
\end{center}
\end{figure}
The rationale underlying this scheme is that we want to achieve a ``simplified'' representation of data that can be supplied to the classifier: each sub-sample in the structureless case displays a RS representation that allows for an identification between the class and the archetype and therefore to a direct classification; conversely, in the MNIST and in the fashion-MNIST datasets each sub-sample exhibits an intrinsic organization, much as like there were several (pseudo) archetypes for each class in such a way that we need (at least) one extra layer to lift them before classifying them. This procedure can be iterated so to establish a connection between more and more abstract representations in deep learning layers and more and more general representations in RSB steps, hence moving from the leafs (items) toward the common ancestor (archetype).
\newline
The machine obtained in this way has been tested over the two benchmark datasets obtaining an accuracy of about $95 \%$ for MNIST and $84 \%$ for fashion-MNIST, to be compared with, respectively,  $75 \%$ and $63 \%$ obtained for the simple (RS) machine, see Fig.~S6-S7 in  the SM \cite{SM}.

\acknowledgments
The authors are grateful to MOST (Ministry of Science, Technology and Space in Israel) and MAECI  (Ministero degli Affari Esteri e della Cooperazione Internazionale in Italy) for the shared grant ``BULBUL'' (F85F21006230001).
\newline
EA acknowledges financial support from Sapienza University of Rome (RM120172B8066CB0).
\newline
AB is also grateful to the MUR via the (PRIN:2017JFFHS).
\newline
FA acknowledge partial fundings by PON R\&I (ARS01-00876).

\cleardoublepage
\onecolumn
\gdef\epl@noabstrtrue{1}
\title{Supervised Hebbian learning: Supplementary Material}
\abstract{}
\maketitle
\section{Introduction}
In this Supplementary Material we discuss in details the analytical and computational techniques underlying the results presented in the main text.
\newline
In particular, in Sec.~\ref{sec:trade-off} we inspect the trade-off between the dataset quality and the dataset size from a probabilistic perspective. Then, Secs.~\ref{Signal2Noise} and \ref{sec:SM} are dedicated to the analysis of the Hopfield network (HN) equipped with the Hebbian learning rule (bio-inspired computing): in the former we revise the signal-to-noise approach, while in the latter we detail the statistical mechanics approach. Next, Secs.~\ref{sec:RBM} and \ref{sec:MAE} deal with Restricted Boltzmann machines (RBMs, artificial computing):  in the former we prove that the scaling laws for a successful learning by the HN hold also for the RBM, while in the latter we discuss why and how these two (apparently distant) learning schemes can be related, tackling their equivalence from Jayne's maximum-entropy perspective \cite{Jaynes}. The last Sec.~\ref{sec:structured} extends the treatment to include structured datasets like the MNIST and the fashion-MNIST \cite{MNIST,FMNIST}.
\newline
Before starting, it is useful to recall the basic definitions concerning the random dataset and the HN cost functions implementing the unsupervised and the supervised Hebbian learning.
\begin{defn} \label{def:dataset}
The \emph{archetype} dataset is made of $K$ binary vectors of length $N$, denoted as $\boldsymbol \xi^{\mu} = (\xi_1^{\mu},...,\xi_N^{\mu})$, for $\mu =1,...,K$, whose entries are i.i.d. Rademacher random variables
\begin{equation} \label{eq:Rade}
\mathcal P(\xi_{i}^{\mu}=+1)=\mathcal P(\xi_{i}^{\mu}=-1)=\frac{1}{2},
\end{equation}
for any $i =1, ..., N$ and $\mu=1,...,K$.\\
The dataset available to the machine, denoted as $\mathcal S:= \{ \boldsymbol{\eta}^{\mu a}\}_{\mu=1,...,K}^{a=1,...,M}$, is made of $M$ corrupted \emph{examples} of each archetype: $\boldsymbol{\eta}^{\mu a}$ is the $a$-th example of the $\mu$-th archetype and its entries are defined as 
\begin{equation} \label{eq:bern0}
\eta_{i}^{\mu a}=\chi_{i}^{\mu a}\xi_{i}^{\mu}
\end{equation}
where $\chi_{i}^{\mu a}$ is a binary random variable drawn as
\begin{equation} \label{eq:bern}
\mathcal P(\chi_{i}^{\mu a}=\pm1)=\frac{1\pm r}{2},r\in(0,1]
\end{equation}
for any $i =1, ..., N$, $\mu=1,...,K$, and $a=1,...,M$.
\end{defn}

\begin{defn}\label{def:op_media} The averaging with respect to $\boldsymbol{\xi}^{\mu}$ and $\boldsymbol{\chi}^{\mu a}$ are
\begin{eqnarray}
\mathbb{E}_{\xi^{\mu}}f(\boldsymbol{\xi}^\mu) &:=& \int\prod_{i=1}^{N}\left\{ \frac{d\xi_{i}^{\mu}}{2}\left[\delta(\xi_{i}^{\mu}-1)+\delta(\xi_{i}^{\mu}+1)\right]\right\} f(\boldsymbol{\xi}^\mu), \\
\mathbb{E}_{\chi^{\mu}}f(\boldsymbol{\chi}^\mu) &:=& \int\prod_{i,a=1}^{N,M}\left\{ \frac{d\chi_{i}^{\mu a}}{2}\left[(1+r)\delta(\chi_{i}^{\mu a}-1)+(1-r)\delta(\chi_{i}^{\mu a}+1)\right]\right\} f(\boldsymbol{\chi}^\mu),\\
\mathbb{E}_{\xi} &:=& \prod_{\mu=1}^{K} \mathbb{E}_{\xi^{\mu}},\qquad\mathbb{E}_{\chi} := \prod_{\mu=1}^{K} \mathbb{E}_{\chi^{\mu}},
\end{eqnarray}
	where $f$ is a generic function.
\end{defn}

\begin{defn}
\label{def:H}The Hamiltonian of the \emph{unsupervised} Hopfield model is defined as
\begin{equation} 
\label{model_unsup}
\mathcal{H}^{\textrm{\tiny{HN},unsup}}(\boldsymbol{\sigma}|\mathcal{S}):=-\frac{N}{2M\mathcal{R}}\sum_{\mu=1}^{K}\sum_{a=1}^{M}\left(\frac{1}{N}\sum_{i=1}^{N}\eta_{i}^{\mu a}\sigma_{i}\right)^{2},
\end{equation}
while the Hamiltonian of the \emph{supervised} Hopfield model is defined as
\begin{equation}\label{model}
\mathcal{H}^{\textrm{\tiny{HN},sup}}(\boldsymbol{\sigma}|\mathcal{S}):=-\frac{N}{2\mathcal{R}}\sum_{\mu=1}^{K}\left(\frac{1}{NM}\sum_{i,a=1}^{N,M}\eta_{i}^{\mu a}\sigma_{i}\right)^{2}
\end{equation}
where $\sigma_{i} \in \{-1, +1\}$ for $i=1,\cdots,N$ is a binary neuron (i.e., an Ising spin) and $\mathcal{R}:=r^{2}+\frac{1-r^{2}}{M}$ is a normalization factor.
\end{defn}
\begin{rem}
In the physical model under investigation self-interaction terms are excluded, that is, a neuron $\sigma_i$ interacts with any other neuron $\sigma_k$ with $k \neq i$. This should be accounted for in \eqref{model_unsup} and \eqref{model} by inserting a corrective contribution that neutralizes diagonal terms, however, since this contribution is constant, here it is neglected in order to retain the notation simple.
\end{rem}


\section{Dataset quality and quantity trade-off}\label{sec:trade-off}

In this section we analyse the supervised and the unsupervised settings, looking for a trade-off between the dataset size $M$ and the dataset quality $r$ by simple probabilitistic arguments based on the 
\begin{thm} (Hoeffding's inequality for bounded random variables) Let $X_1 , . . . , X_N$ be independent r.v. such that $X_i \in [m_i , M_i ]$, with $- \infty < m_i \leq  M_i < +\infty$, $\forall i = 1, . . . , N$. Then, $\forall t \geq 0$
\begin{equation}
\mathcal P \left (  \sum_{i=1}^N (X_i - \mathbb E X_i) \geq t \right) \leq \exp \left( -\frac{2t^2}{\sum_{i=1}^N (M_i - m_i)^2} \right). 
\end{equation}
\end{thm}

Let us for first focus on supervised learning, where we have an a priori knowledge of the map $\mathcal M:\{1,\dots,K \times M \} \to \{1,\dots,K\}$ that assigns to each example
$\boldsymbol{\eta}^{\mu a},\forall a=1,\dots,K \times M$ the related archetype $\boldsymbol{\xi}^{\mu},\mu=1,\dots,K$.
For simplicity, let us assume that each sample $\{\boldsymbol \eta^{\mu a}\}_{a=1,...,M}$ has the
same cardinality $M$ independent of $\mu$.

We recall that the examples are generated according to \eqref{eq:bern0}  where $\chi_{i}^{\mu a} \in \{-1, +1\}$ is a binary random variable, whose value determines whether the $i$-th pixel of the $a$-th example of the $\mu$-th archetype shall be flipped or not, and it is drawn from \eqref{eq:bern}.
Now, the data available can be combined by a majority rule and the probability for a correct prediction on $\xi_i^{\mu}$ is 
\begin{equation}
\mathcal P \left( \sum_{a=1}^M \chi_i^{\mu a} >0 \right ) \geq 1 - \epsilon ~ \Rightarrow \mathcal P \left( \sum_{a=1}^M \chi_i^{\mu a} \leq 0 \right ) \leq  \epsilon,
\end{equation}
for some $\epsilon \in (0,1)$, where in the right-hand side we moved to the complimentary event which allows for a direct application of Hoeffding's inequality:
\begin{eqnarray}
\mathcal P \left( \sum_{a=1}^M \chi_i^{\mu a} \leq 0 \right ) &=& \mathcal P \left( - \sum_{a=1}^M \chi_i^{\mu a} \geq 0 \right ) = \mathcal P \left[  - \left(\sum_{a=1}^M\chi_i^{\mu a} -r \right ) > Mr \right ] \\
 &\leq& \exp \left( - \frac{2M^2 r^2}{4 M} \right) = \exp \left( - \frac{M r^2}{2} \right).
\end{eqnarray}
For a correct generalization we therefore ask that
\begin{equation} \label{eq:scalscal}
\exp \left( - \frac{Mr^2}{ 2}\right) \leq \epsilon \Rightarrow M \geq  \frac{2}{r^2} \log \frac{1}{\epsilon}
\end{equation}
whence we derive the scaling $M r^2 \sim \mathcal O(1)$.

For a higher-accuracy estimate of the success probability, in the case $M \gg 1$, we can exploit the relation
\begin{equation}
\sum_{a=1}^M \chi_i^{\mu a} \sim M r + \lambda \sqrt{ M(1 - r^2 )}, ~\textrm{with} ~  \lambda \sim \mathcal N(0, 1).
\end{equation}
Then, recalling $\rho = \frac{1-r^2}{M r^2}$,
\begin{equation}
\mathcal P \left( M r + \lambda \sqrt{M (1 - r^2)}  \geq  0 \right) =  \frac{1}{\sqrt{2 \pi M (1 - r^2)}} \int_0^{\infty} \exp \left ( -\frac{(\lambda - Mr)^2}{2M(1-r^2)} \right) d \lambda  = \frac{1}{2} \left[ 1 + \textrm{erf} \left( \frac{1}{\sqrt{2 \rho}} \right)  \right]. 
\end{equation}
Therefore, the probability of correctly reconstructing the pixel of the archetype given the set of examples is
\begin{equation}
\mathcal P ( \boldsymbol \xi^{\mu}| \{ \boldsymbol \eta^{\mu a} \}_{a=1}^M ) = \frac{1}{2} \left[ 1 + \textrm{erf} \left(  \frac{1}{\sqrt{2 \rho}} \right )  \right]
\end{equation}
and in order for the sample to play a role, i.e., $\mathcal P ( \boldsymbol \xi^{\mu}| \{ \boldsymbol \eta^{\mu a} \}_{a=1}^M )>1/2$ to guess better than purely random, $\rho$ must be finite, and so we recover $M r^2 \sim \mathcal O(1)$.

As for the unsupervised setting, here the examples are provided without disclosing
the class label and a possible way to clusterize them is by means of the $k$-means algorithm which allows us to group the $K \times M$ examples into $K$ classes and to realize a posteriori a map $\tilde{\mathcal M} :\{1,\dots,K \times M \} \to \{1,\dots, K\}$ under some margin of error.
\newline
For simplicity, also in this case, we introduce a constraint on the cardinality
of the classes that must be class-independent and equal to $M$, so at the end we will
obtain $K$ classes including $M$ examples each. 
\newline
The error margin can be quantified in terms of the fraction mismatched examples.

Once examples have been clusterized, and therefore once
each example has been assigned a label $\mu$, we estimate the archetypes as the centroids of the various classes that are denoted as $\hat{\boldsymbol \xi}^{\mu}$ for $\mu=1,...,K$
and obtained
by averaging over the examples assigned to the $\mu$-th class.
%
Let us assume that the algorithm has placed in the first cluster $\mu=1$, $c \times M$ examples that actually belong to the first cluster, and $(1-c)\times M$
examples that would actually belong to other clusters (without loss of generality these mismatched examples may correspond to the second archetype); the parameter $c\in[0,1]$ regulates the quantity of examples correctly assigned to the class $\mu =1$. 

The centroid pixels are determined as 
\begin{equation}
\hat{\xi}_{i}^{1}=\frac{1}{M}\left(\sum_{a=1}^{cM}\chi_{i}^{1a}\xi_{i}^{1}+\sum_{a=1}^{(1-c)\cdot M}\chi_{i}^{2a}\xi_{i}^{2}\right).
\end{equation}
In order to faithfully reproduce the archetype, we would like to have
that $\xi_{i}^{1}\hat{\xi}_{i}^{1}$ positive
with a high probability and therefore that, being $\epsilon\in (0,1)$, 
\begin{equation}
\sum_{a=1}^{cM}\chi_{i}^{1 a}+\sum_{a=1}^{(1-c)\cdot M}\xi_{i}^{1}\chi_{i}^{2 a}\xi_{i}^{2}\geq 1 - \epsilon
\end{equation}
with high probability.
We notice that $\xi_{i}^{1}\xi_{i}^{2}$ is still a
Rademacher variable and to lighten the notation we replace it with $\lambda_{i}:=\xi_{i}^{1}\xi_{i}^{2}$, where $\lambda_{i}$ is Rademacher's. Now, we use Hoeffding's inequality to estimate the probability $\mathcal P\left(\sum_{a=1}^{cM}\chi_{i}^{1a}+\sum_{a=1}^{(1-c)\cdot M}\chi_{i}^{2 a}\lambda_{i}\geq0\right)$
\begin{equation}
\begin{split}  \mathcal P \left(\sum_{a=1}^{c M}\chi_{i}^{1a}+\sum_{a=1}^{(1-c) M}\chi_{i}^{2a}\lambda_{i}\leq0\right)&= \mathcal P\left(-\sum_{a=1}^{cM}\left(\chi_{i}^{1a}-cr\right)-\sum_{a=1}^{(1-c) M}\lambda_{i}\chi_{i}^{2 a}\geq crM\right)\\
 & \leq\exp\left(-\frac{2c^{2}r^{2}M^{2}}{4cM+4(1-c)M}\right)\\
 &=\exp\left(-\frac{c^{2}r^{2}M}{2}\right)
\end{split}
\end{equation}
then,
\begin{equation}
\mathcal P\left(\sum_{a=1}^{cM}\chi_{i}^{1 a}+\sum_{a=1}^{(1-c)\cdot M}\xi_{i}^{1}\chi_{i}^{2 a}\xi_{i}^{2}>0\right)\geq1-\exp\left(\frac{c^{2}r^{2}M}{2}\right).
\end{equation}
Therefore, we need to ask
\begin{equation}
1-\exp\left(-\frac{c^{2}r^{2}M}{2}\right)\geq1-\epsilon\implies M\geq\frac{2}{c^{2}r^{2}}\log\frac{1}{\epsilon}.
\end{equation}
Now, by setting $c=r$ we obtain a scaling of the type $Mr^{4}\sim \mathcal O(1)$,
on the other hand, if $c=1$ we return to the supervised case in which the rate
of cluster contamination $(1-c)M$ takes on a null value.

\section{Signal-to-noise Analysis}\label{Signal2Noise}

By the signal-to-noise technique we find the conditions for the HN to successfully generalize from examples, namely to retrieve one of the archetypes although it was never presented to them, in the noiseless $\beta \to \infty$ limit. The idea is to assume that the network is in a retrieval state of an archetype, say $\boldsymbol \xi^1$, evaluate the internal fields $h_i(\boldsymbol \xi^1)$ acting on each neuron in such a configuration, and check that the constraints for stability $h_i(\boldsymbol \xi^1) \sigma_i > 0$ are all satisfied. The analysis is led in the high-storage regime, where the load $\alpha:= \lim_{N\to\infty} K/N$ is finite. Clearly, we expect an interplay between $\alpha$, $M$ and $r$ as, by raising $\alpha$ we need a large and good-quality dataset (i.e., large $M$ and $r$) to disentangle examples, and, for a given load, by reducing $r$ we need a larger $M$ to retain enough information in the dataset.
In fact, as proved in the following subsections \ref{ssec:unsup} and \ref{ssec:sup}, respectively,  the optimal tradeoff in the unsupervised regime reads as 
\begin{equation}
\alpha\left(1+\frac{1-r^{4}}{Mr^{4}}\right)+\frac{1-r^{2}}{Mr^{2}}\lesssim1,\label{astronzi}
\end{equation}
while in the supervised regime it reads as 
\begin{equation}
\alpha\left(1+\frac{1-r^{2}}{Mr^{2}}\right)^{2}+\frac{1-r^{2}}{Mr^{2}}\lesssim1.\label{2stability}
\end{equation}
Before proceeding, we anticipate that these scalings are in excellent agreement with Monte Carlo (MC) simulations (see Fig.~\ref{S2N}) and obviously in full accordance with the statistical mechanical predictions reported in the Sec.~\ref{sec:SM}. Moreover, 
in the present treatment, the standard signal-to-noise approach is recast into the one-step Hinton's prescription for fast MC sampling \cite{Hinton1MC}: the reward in this reformulation of the signal-to-noise technique is that it is predictive also for those numerical algorithms that, in turn, we implement in the simulations presented in Sec.~\ref{sec:RBM}, see Eq.~\eqref{eq:1-one}.

\begin{figure}[tb]
\noindent \begin{centering}
\includegraphics[width=0.65\textwidth]{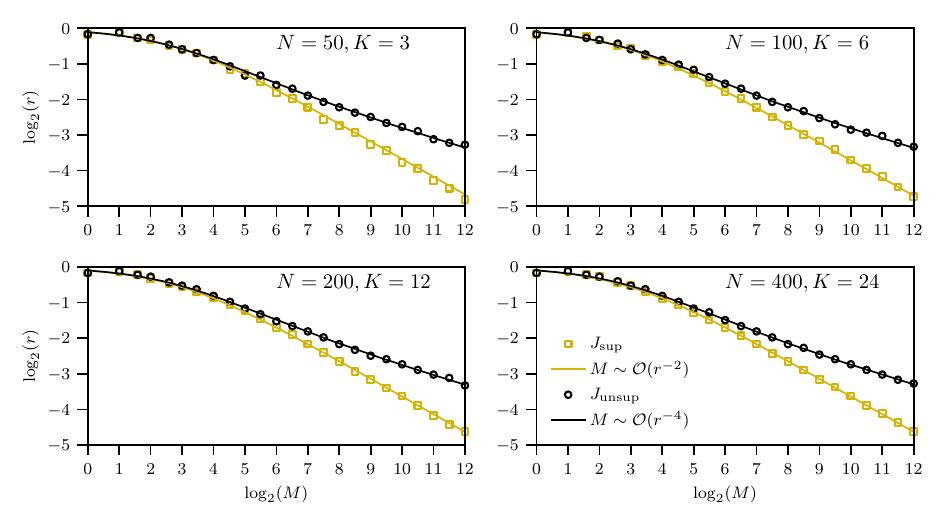}
\par\end{centering}
\caption{Comparison between signal-to-noise predictions (solid lines) and MC simulations (symbols) for the unsupervised (dark color, $\circ$) and the supervised (bright color, $\square$) Hebbian learning in the noiseless $\beta \to \infty$ limit. More precisely,  the theoretical estimates are obtained by eq. \eqref{astronzi} and eq. \eqref{2stability}, while the numerical estimates are obtained by fixing a certain number $M$ of examples and determining the minimal value of the dataset quality $r$ such that the mean overlap between the neural configuration $\boldsymbol \sigma$ and a chosen archetype, say $\boldsymbol \xi^1$, i.e., the magnetization $\langle m \rangle$, is (approximately) unitary, the operation is then repeated for several values of $M$ ranging exponentially from $2$ to $2^{12}$. Note that the theoretical estimates perfectly interpolate the numerical outcomes.}\label{S2N}
\end{figure}

%
\subsection{Signal vs Noise Analysis: Unsupervised Hebbian Learning} \label{ssec:unsup}
 
The unsupervised case has been previously addressed by signal-to-noise technique in \cite{Giordano} whose results are here recovered and refined, further, the previous investigation is now reformulated in analogy with Hinton's one-step MC recipe \cite{Hinton1MC}.

\begin{prop}
In the unsupervised setting, the condition determining if the HN \eqref{model_unsup} successfully retrieves one of the archetypes is
\begin{equation}
\frac{1}{\sqrt{2\left[\alpha\left(1+\frac{1-r^{4}}{r^{4}M}\right)+\frac{1-r^{2}}{Mr^{2}}\right]}}>\theta\label{eq:stability-1}
\end{equation}
where $\theta \in \mathbb{R}^+$ is a tolerance level, $\textrm{erf}(\theta)$ representing a lower-bound for the overlap between the neural configuration and the retrieved archetype
(we omit plots in this case as the results are qualitatively the same as those already reported in \cite{Giordano}).
\end{prop}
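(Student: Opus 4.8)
The plan is to implement the signal-to-noise scheme sketched in the preamble of Sec.~\ref{Signal2Noise}, recast as a single asynchronous Monte Carlo update \`a la Hinton. First I would compute the internal field acting on the $i$-th neuron from the unsupervised cost \eqref{model_unsup}, namely $h_i(\boldsymbol\sigma)=-\partial_{\sigma_i}\mathcal{H}^{\textrm{\tiny{HN},unsup}}=\frac{1}{NM\mathcal{R}}\sum_{\mu=1}^K\sum_{a=1}^M\eta_i^{\mu a}\sum_{j\neq i}\eta_j^{\mu a}\sigma_j$, the $j=i$ term being dropped as prescribed by the Remark. Evaluating it in the candidate retrieval configuration $\boldsymbol\sigma=\boldsymbol\xi^1$ and writing $\eta_i^{\mu a}=\chi_i^{\mu a}\xi_i^\mu$, the product $h_i(\boldsymbol\xi^1)\,\xi_i^1$ splits into a \emph{signal} piece (the $\mu=1$ contribution, aligned with $\xi_i^1$ on average) and a \emph{noise} piece (the cross-talk from the $K-1$ non-retrieved archetypes together with the finite-$M$, finite-$r$ fluctuations of the signal itself).

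Next I would treat $h_i(\boldsymbol\xi^1)\,\xi_i^1$ as a random variable over the quenched disorder $\{\boldsymbol\xi,\boldsymbol\chi\}$ and compute its first two moments via the averaging rules of Definition~\ref{def:op_media}, using $\mathbb{E}_\chi[\chi]=r$ and $\chi^2\equiv 1$. The mean reduces, to leading order in $N$, to the signal $\mu_{\mathrm{sig}}=r^2/\mathcal{R}$, with $\mathcal{R}=r^2+\tfrac{1-r^2}{M}$. The variance $\Sigma^2$ collects three qualitatively distinct contributions: the genuine storage cross-talk from the other patterns, proportional to $\alpha$; the fluctuations induced by the corruption of the examples, of order $\tfrac{1-r^2}{Mr^2}$; and a mixed term in which the storage noise is amplified by the dataset imperfection, of order $\alpha\,\tfrac{1-r^4}{Mr^4}$. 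Collecting them yields $\Sigma^2=(r^4/\mathcal{R}^2)\big[\alpha(1+\tfrac{1-r^4}{r^4 M})+\tfrac{1-r^2}{Mr^2}\big]$, so that the signal-to-noise ratio $\mu_{\mathrm{sig}}/\Sigma$ equals $\big[\alpha(1+\tfrac{1-r^4}{r^4 M})+\tfrac{1-r^2}{Mr^2}\big]^{-1/2}$, free of the factor $\mathcal{R}$.

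To convert this into a statement about retrieval I would invoke a central-limit argument: since each noise contribution is a sum of many weakly dependent bounded terms, $h_i(\boldsymbol\xi^1)\,\xi_i^1$ is asymptotically Gaussian with mean $\mu_{\mathrm{sig}}$ and variance $\Sigma^2$. After the one-step update $\sigma_i\mapsto\mathrm{sign}\,h_i$, the expected overlap with the target archetype is $\langle m\rangle=\mathbb{E}[\mathrm{sign}(h_i)\,\xi_i^1]=\mathrm{erf}\!\big(\mu_{\mathrm{sig}}/(\sqrt2\,\Sigma)\big)$. Demanding that this overlap exceed the prescribed tolerance $\mathrm{erf}(\theta)$ is then equivalent, by monotonicity of $\mathrm{erf}$, to $\mu_{\mathrm{sig}}/(\sqrt2\,\Sigma)>\theta$, which is precisely \eqref{eq:stability-1}; sending $\theta\to1^-$ recovers the optimal scaling \eqref{astronzi}.

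The main obstacle is the variance computation, and specifically the bookkeeping of the finite-$M$ corrections. Because $\chi^2\equiv1$ while $\mathbb{E}_\chi[\chi]=r$, one must carefully distinguish same-example ($a=b$) contractions, carrying moment $1$, from cross-example ($a\neq b$) contractions, carrying $r^2$; the resulting differences $1-r^2$ (and, for the storage term, $1-r^4$), divided by the number $M$ of examples, generate the corrections $\tfrac{1-r^2}{Mr^2}$ and $\alpha\,\tfrac{1-r^4}{Mr^4}$ that refine the naive $\alpha$ scaling. One must also retain the subleading $O(1/N)$ pieces that survive after multiplication by $K=\alpha N$, and keep the self-interaction excluded as in the Remark, since reinstating it would add a spurious $\alpha/\mathcal{R}$ self-reinforcement to the signal.
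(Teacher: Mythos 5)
Your proposal is correct and follows essentially the same route as the paper's own proof: a one-step zero-temperature Monte Carlo update initialized at $\boldsymbol{\xi}^{1}$, evaluation of the first two moments of $h_{i}(\boldsymbol{\xi}^{1})\xi_{i}^{1}$ over the quenched disorder with the same signal/cross-talk splitting and the same $a=b$ versus $a\neq b$ bookkeeping (producing the $1-r^{2}$ and $1-r^{4}$ corrections), a central-limit reduction to $\mathrm{erf}\bigl(\mu_{\mathrm{sig}}/(\sqrt{2}\,\Sigma)\bigr)$, and comparison with $\mathrm{erf}(\theta)$; retaining the $1/\mathcal{R}$ normalization (which the paper silently drops) is immaterial, since, as you note, it cancels in the signal-to-noise ratio. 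The only imprecision is tangential to the proposition: the scaling \eqref{astronzi} with constant $1$ on the right-hand side corresponds to the choice $\theta=1/\sqrt{2}$ (the standard signal-to-noise condition), not to $\theta\to1^{-}$, though under the ``$\lesssim$'' convention this only shifts an $O(1)$ constant.
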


\begin{proof}
First we recall the unnormalized log-density of the model:
\begin{equation}
-\beta\mathcal{H}^{\textrm{\tiny{HN}, unsup}}(\boldsymbol{\sigma}|\boldsymbol{\eta})=\frac{\beta N}{2M}\sum_{\mu,a=1}^{K,M}\left(\frac{1}{N}\sum_{i=1}^{N}\eta_{i}^{\mu a}\sigma_{i}\right)^{2}=\frac{\beta}{2NM}\sum_{\mu=1}^{K}\sum_{i,j=1}^{N,N}\sum_{a=1}^{M}\eta_{i}^{\mu a}\eta_{j}^{\mu a}\sigma_{i}\sigma_{j}=\frac{1}{2}\sum_{k=1}^{N}h_{k}(\boldsymbol{\sigma})\sigma_{k}
\end{equation}
where
\begin{equation}
h_{k}(\boldsymbol{\sigma}):=\frac{\beta}{NM}\sum_{\mu=1}^{K}\sum_{i\neq k}^{N}\sum_{a=1}^{M}\eta_{i}^{\mu a}\eta_{k}^{\mu a}\sigma_{i}
\end{equation}
is the internal field acting on $\sigma_{k}$.
\newline
The discrete time MC dynamics is given by the following update
rule
\begin{equation}
\sigma_{k}^{(n+1)}=\sigma_{k}^{(n)}\mathrm{sign}\left\{ \tanh\left[h_{k}^{(n)}(\boldsymbol{\sigma}^{(n)})\sigma_{k}^{(n)}\right]+\zeta_{k}^{(n)}\right\} ,\quad\zeta_{k}^{(n)}\sim \mathcal U(-1,+1).\label{lim-1}
\end{equation}
where $n\in\mathbb{N}$,  $k=1,\dots,N$ and $\mathcal U(-1,1)$ is the uniform probability density in the interval $(-1,+1)$.
\newline
By performing the zero fast-noise limit $\beta\to\infty$, equation
(\ref{lim-1}) becomes
\begin{equation}
\sigma_{k}^{(n+1)}=\sigma_{k}^{(n)}\mathrm{sign}\left(h_{k}^{(n)}(\boldsymbol{\sigma}^{(n)})\sigma_{k}^{(n)}\right).
\end{equation}
Since we want to study the stability of the archetype retrieval, we set as the initial configuration for the MC dynamics
$\sigma_{k}^{(1)}=\xi_{k}^{1}$ for $k=1,...,N$.
The one-step MC approximation for the magnetization is then
\begin{equation}
m_{1}^{(2)}=\frac{1}{N}\sum_{k=1}^N\xi_{k}^{1}\sigma_{k}^{(2)}=\frac{1}{N}\sum_{k=1}^{N}\mathrm{sign}\left(h_{k}^{(1)}(\boldsymbol{\xi}^{1})\xi_{k}^{1}\right) \label{magn-1}
\end{equation}
and, if $N\gg1$, by the central limit theorem, equation (\ref{magn-1})
can be approximated as
\begin{equation}
m_{1}^{(2)}\underset{N\gg1}{\approx}\int\frac{dz}{\sqrt{2\pi}}\exp\left(-\frac{z^{2}}{2}\right)\mathrm{sign}\left(\mu_1+z\sqrt{\mu_{2}-\mu_{1}^{2}}\right)=\mathrm{erf}\left(\frac{\mu_{1}}{\sqrt{2(\mu_{2}-\mu_{1}^{2})}}\right),\label{magnet-1}
\end{equation}
where we posed
\begin{eqnarray}
\mu_{1} &:=& \mathbb{E}_{\xi}\mathbb{E}_{\chi}\left[  h_{k}^{(n)}(\boldsymbol{\xi}^{1})\xi_{k}^{1} \right],\\
\mu_{2} &:=& \mathbb{E}_{\xi}\mathbb{E}_{\chi}\left[  h_{k}^{(n)}(\boldsymbol{\xi}^{1})^{2}\right].
\end{eqnarray}
First, we calculate $\mu_{1}$: 
\begin{equation}
\mu_{1} = \frac{\beta}{NM}\mathbb{E}_{\xi}\mathbb{E}_{\chi}\left[ \sum_{\mu=1}^{K}\sum_{a=1}^{M} \sum_{i\neq k}^{N} \xi_{k}^{\mu}\chi_{k}^{\mu a}\xi_{i}^{\mu}\chi_{i}^{\mu a}\xi_{i}^{1}\xi_{k}^{1}\right] = \frac{\beta}{NM}\mathbb{E}_{\xi}\mathbb{E}_{\chi}\left[ \sum_{\mu>1}^{K}\sum_{a=1}^{M}\sum_{i\neq k}^{N}\xi_{k}^{\mu}\xi_{i}^{\mu}\chi_{k}^{\mu a}\chi_{i}^{ \mu a}+\sum_{a=1}^{M}\sum_{i\neq k}^{N}\chi_{k}^{1 a}\chi_{i}^{1 a}\right].
\end{equation}
Exploiting the fact that, by construction, archetypes are orthogonal in the average, that is $\mathbb{E}_{\xi}[\xi^{\mu}_i \xi^{\mu}_k]=0$ for any $i \neq k$
\begin{equation}
\mathbb{E}_{\xi}\mathbb{E}_{\chi}\left[ \sum_{\mu>1}^{K}\sum_{a=1}^{M}\sum_{i\neq k}^{N}\xi_{k}^{\mu}\xi_{i}^{\mu}\chi_{k}^{\mu a}\chi_{i}^{\mu a}\right] =0,
\end{equation}
and, recalling \eqref{eq:bern},
\begin{equation}
\mathbb{E}_{\xi}\mathbb{E}_{\chi}\left[ \sum_{a=1}^{M}\sum_{i\neq k}^{N}\chi_{k}^{1a}\chi_{i}^{1a}\right] =(N-1)Mr^{2},
\end{equation}
we get
\begin{equation}
\label{eq:two-1-1}
\mu_{1}=\frac{\beta}{N}(N-1)r^{2} \underset{N\gg1}{\approx} \mu_{1}=\beta r^{2}.
\end{equation}
%
Then, we calculate $\mu_{2}$: 
\begin{eqnarray}
\mu_{2} &=& \mathbb{E}_{\xi}\mathbb{E}_{\chi}\left[ h_{k}(\boldsymbol{\boldsymbol{\xi}^{1}})^{2}\right] =\frac{\beta^{2}}{N^{2}M^{2}}\mathbb{E}_{\xi}\mathbb{E}_{\chi}\left[ \left(\sum_{\mu>1}^{K}\sum_{a=1}^{M}\sum_{i\neq k}^{N}\xi_{k}^{\mu}\xi_{i}^{\mu}\chi_{k}^{\mu a}\chi_{i}^{\mu a}+\sum_{a=1}^{M}\sum_{i\neq k}^{N}\chi_{k}^{1a}\chi_{i}^{1a}\right)^{2}\right] =\\
 &=& \frac{\beta^{2}}{\mathcal{R}^{2}N^{2}M^{4}}\mathbb{E}_{\xi}\mathbb{E}_{\chi}\left[ \left(\sum_{\mu>1}^{K}\sum_{a=1}^{M}\sum_{i\neq k}^{N}\xi_{k}^{\mu}\xi_{i}^{\mu}\chi_{k}^{\mu a}\chi_{i}^{\mu a}\right)^{2}+\left(\sum_{a=1}^{M}\sum_{i\neq k}^{N}\chi_{k}^{1a}\chi_{i}^{1a}\right)^{2}\right]. 
\end{eqnarray}
For the sake of simplicity we pose $A:=\left(\sum_{\mu>1}^{K}\sum_{a=1}^{M}\sum_{i\neq k}^{N}\xi_{k}^{\mu}\xi_{i}^{\mu}\chi_{k}^{\mu a}\chi_{i}^{\mu a}\right)^{2}$
and $B:=\left(\sum_{a=1}^{M}\sum_{i\neq k}^{N}\chi_{k}^{1a}\chi_{i}^{1a}\right)^{2}$such
that 
\begin{equation}
\mu_{2}=\frac{\beta^{2}}{N^{2}M^{2}}\left(\mathbb{E}_{\xi}\mathbb{E}_{\chi}\left[ A\right] +\mathbb{E}_{\xi}\mathbb{E}_{\chi}\left[ B\right] \right).\label{mu2-1}
\end{equation}
After minimal manipulations we can write these terms as
\begin{eqnarray}
\mathbb{E}_{\xi}\mathbb{E}_{\chi}\left[ A\right] &=& \mathbb{E}_{\xi}\mathbb{E}_{\chi}\left[ \sum_{\mu>1}\sum_{a,b}\sum_{i\neq k}\chi_{k}^{\mu a}\chi_{i}^{\mu a}\chi_{k}^{ \mu b}\chi_{i}^{\mu b}\right], \label{eq:m1-1}\\
\mathbb{E}_{\xi}\mathbb{E}_{\chi}\left[ B\right] &=& \mathbb{E}_{\xi}\mathbb{E}_{\chi}\left[ \sum_{a,b}^{M}\sum_{i\neq k}^{N}\chi_{k}^{1a}\chi_{i}^{1a}\chi_{k}^{1b}\chi_{i}^{1b}\right] +\mathbb{E}_{\xi}\mathbb{E}_{\chi}\left[ \sum_{a,b}^{M}\sum_{i,j\neq k}^{N}\lambda_{ij}\chi_{k}^{1 a}\chi_{i}^{1a}\chi_{k}^{1b}\chi_{j}^{1b}\right],\label{eq:m2-1}
\end{eqnarray}
where $\lambda_{ij}:=1-\delta_{ij}$ . By merging the results in (\ref{eq:m1-1}) and (\ref{eq:m2-1}) we
get 
\begin{eqnarray}\label{eq:one-1}
\mathbb{E}_{\xi}\mathbb{E}_{\chi}\left[ A\right] +\mathbb{E}_{\xi}\mathbb{E}_{\chi}\left[ B\right] &=&\mathbb{E}_{\xi}\mathbb{E}_{\chi}\left[ \sum_{\mu=1}\sum_{a,b}\sum_{i\neq k}\chi_{k}^{\mu a}\chi_{i}^{ \mu a}\chi_{k}^{\mu b}\chi_{i}^{ \mu b}\right] +\mathbb{E}_{\xi}\mathbb{E}_{\chi}\left[ \sum_{a,b}^{M}\sum_{i,j\neq k}^{N}\lambda_{ij}\chi_{k}^{1 a}\chi_{i}^{1a}\chi_{k}^{1b}\chi_{j}^{1b}\right] =\\
 &=& K(N-1)\left[M+r^{4}M(M-1)\right]+(N-1)(N-2)\left[r^2 M+r^4M(M-1)\right]\\
 \label{finap-1}
&\underset{N \gg 1}{\approx}& KNr^{4}M^{2}\left(1+\frac{1-r^{4}}{r^{4}M}\right)+N^{2} r^4 M^{2}\left(1+\frac{1-r^{2}}{r^2 M}\right)
\end{eqnarray}
and, by direct substitution of \eqref{finap-1} into \eqref{mu2-1}, we get 
\begin{equation}
\mu_{2}\underset{N \gg 1}{\approx}\beta^{2}r^{4}\left[\alpha\left(1+\frac{1-r^{4}}{r^{4}M}\right)+1+\frac{1-r^{2}}{r^2 M}\right]\label{mudue-1}.
\end{equation}
Finally, by plugging equations \eqref{eq:two-1-1} and \eqref{mudue-1} into the expression of the magnetization \eqref{magnet-1},
we get
\begin{equation}
m_{1}^{(2)}\underset{N \gg 1}{\approx}\mathrm{erf}\left(\frac{1}{\sqrt{2\left[\alpha\left(1+\frac{1-r^{4}}{r^{4}M}\right)+\frac{1-r^{2}}{Mr^{2}}\right]}}\right)
\end{equation}
and, by requiring that this one-step MC magnetization is larger than $\textrm{erf}(\theta)$ we recover Eq.~\eqref{eq:stability-1}. 
\end{proof}

\subsection{Signal vs Noise Analysis: Supervised Hebbian learning} \label{ssec:sup}

We now repeat analogous passages for the supervised setting that lead to
\begin{prop}
In the supervised setting, the condition determining if the HN \eqref{model} successfully retrieves one of the archetypes is
\begin{equation}
\frac{1}{\sqrt{2\alpha\left(1+\rho\right)^{2}+2\rho}}>\theta\label{eq:stability_sup}
\end{equation}
where $\theta \in \mathbb{R}^+$ is a tolerance level, $\textrm{erf}(\theta)$ representing a lower-bound for the overlap between the neural configuration and the retrieved archetype (for instance, the case $\theta=1/\sqrt{2}$ is shown in Fig.~\ref{figSignal-vs-Noise}).

\begin{figure}[tb]
\begin{centering}
\includegraphics[width=0.9\textwidth]{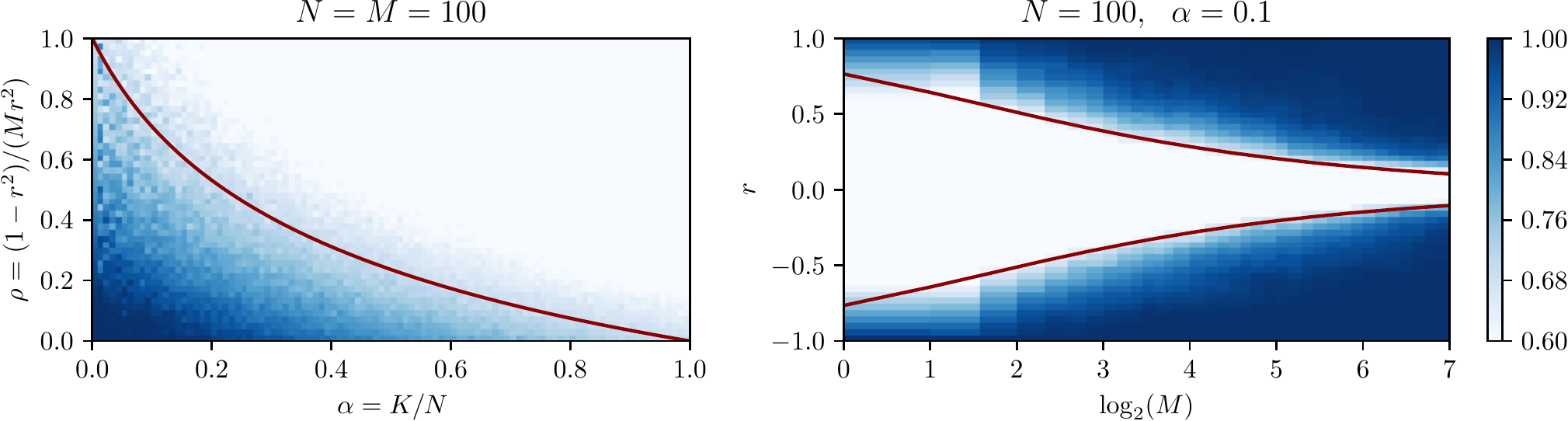}
\par\end{centering}
\caption{Signal-to-Noise analysis\label{figSignal-vs-Noise}. Left panel: 
Contour plot for the one-step-MC magnetization $m^{(2)}$ defined in \eqref{magnet} and evaluated by MC simulations, run at different values of $\alpha$ and $\rho$; the solid line corresponds to the curve $\alpha\left(1+\rho\right)^{2}+\rho=1$ obtained by eq.~ \eqref{eq:stability_sup} by setting as tolerance level $\theta=\frac{1}{\sqrt{2}}$.
Right panel: Contour plot for the one-step-MC magnetization $m^{(2)}$ defined in \eqref{magnet} and evaluated by MC simulations, run at different values of $r$ and $M$ (notice the log-scale); the solid lines are again obtained by eq.~\eqref{eq:stability_sup} but now plotted as a function of $(r,\log_2M)$ and we see that, as the quality $|r|$
reaches zero, more and more examples are needed to ensure the retrieval of the archetype. The colormap on the right is shared by the two panels. }
\end{figure}
\end{prop}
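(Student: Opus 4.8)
The plan is to reproduce the one-step signal-to-noise argument of the unsupervised proof, the only structural novelty being that in \eqref{model} the square sits \emph{outside} the sum over the $M$ examples. This suggests introducing the empirical mean example $\bar{\eta}_i^{\mu}:=\frac{1}{M}\sum_{a=1}^{M}\eta_i^{\mu a}$, so that expanding $-\beta\mathcal{H}^{\textrm{\tiny{HN}},\textrm{sup}}$ gives the internal field
\begin{equation}
h_k(\boldsymbol{\sigma})=\frac{\beta}{N\mathcal{R}}\sum_{\mu=1}^{K}\sum_{i\neq k}^{N}\bar{\eta}_i^{\mu}\,\bar{\eta}_k^{\mu}\,\sigma_i.
\end{equation}
I would then run verbatim the dynamics \eqref{lim-1}--\eqref{magn-1}, initialise on $\sigma_k^{(1)}=\xi_k^1$, and invoke the central limit theorem so that the one-step magnetization is again $m^{(2)}\approx\mathrm{erf}\big(\mu_1/\sqrt{2(\mu_2-\mu_1^2)}\big)$, exactly as in \eqref{magnet-1}, with $\mu_1,\mu_2$ the first two quenched moments of $h_k(\boldsymbol{\xi}^1)\xi_k^1$ and $h_k(\boldsymbol{\xi}^1)$. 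The whole task thus reduces to evaluating these two moments.

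The bookkeeping rests on two elementary facts about $\bar\eta$. First, $\bar{\eta}_i^1\xi_i^1=\frac{1}{M}\sum_a\chi_i^{1a}$ has $\chi$-average $r$; second, using \eqref{eq:bern}, $\mathbb{E}_{\chi}[(\bar{\eta}_i^{\mu})^2]=r^2+(1-r^2)/M=\mathcal{R}$, which explains a posteriori the normalisation chosen in \eqref{model}. For $\mu_1$ I would split the $\mu$-sum into $\mu=1$ and $\mu>1$; orthogonality of distinct archetypes ($\mathbb{E}_{\xi}[\xi_i^{\mu}\xi_i^1]=0$) kills the $\mu>1$ block, leaving $\mu_1=\frac{\beta}{N\mathcal{R}}(N-1)r^2\approx\beta r^2/\mathcal{R}$ for $N\gg1$.

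For $\mu_2=\mathbb{E}_{\xi}\mathbb{E}_{\chi}[h_k(\boldsymbol{\xi}^1)^2]$ I would again separate the signal ($\mu=1$) from the noise ($\mu>1$) contribution, the cross term vanishing by the same orthogonality. In the noise block only the diagonal index choices $\mu=\nu$ and $i=j$ survive the averages (each surviving factor contracting to $\mathbb{E}_{\chi}[(\bar\eta)^2]=\mathcal{R}$), yielding $(K-1)(N-1)\mathcal{R}^2\approx KN\mathcal{R}^2$ and hence a contribution $\beta^2\alpha$ to $\mu_2$. In the signal block $\bar{\eta}_k^1$ factorises out of the sum over $i$, so that with $u_i:=\bar{\eta}_i^1\xi_i^1$ (mean $r$, variance $(1-r^2)/M$) the leading piece is $\mathcal{R}\,(N-1)^2r^2\approx\mathcal{R}N^2r^2$, contributing $\beta^2 r^2/\mathcal{R}$. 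Collecting the two gives $\mu_2\approx\beta^2[\alpha+r^2/\mathcal{R}]$.

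Finally I would substitute into the $\mathrm{erf}$ argument and simplify using $\mathcal{R}=r^2(1+\rho)$ with $\rho:=(1-r^2)/(Mr^2)$, so that $\mu_1/\beta=1/(1+\rho)$ and all $\beta$'s cancel. Multiplying numerator and denominator by $(1+\rho)$ telescopes $\alpha+\frac{1}{1+\rho}-\frac{1}{(1+\rho)^2}$ into $\alpha(1+\rho)^2+\rho$, giving $m^{(2)}\approx\mathrm{erf}\big(1/\sqrt{2\alpha(1+\rho)^2+2\rho}\big)$; imposing $m^{(2)}>\mathrm{erf}(\theta)$ then yields \eqref{eq:stability_sup}. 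The main obstacle I anticipate is precisely the combinatorial accounting in $\mu_2$: because the example index $a$ is pre-averaged inside $\bar\eta$, the surviving $\chi$-contractions are organised differently from the unsupervised case, and one must correctly identify the leading scaling of each block --- recognising that the signal block retains an $O(N^2r^2)$ term through the factorised $\bar{\eta}_k^1$, while the noise block is $O(KN\mathcal{R}^2)$ --- before the final algebra collapses into the clean $(1+\rho)$ form.
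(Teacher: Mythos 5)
Your proposal is correct and follows essentially the same route as the paper's proof: the one-step MC signal-to-noise argument with the central limit theorem, the split into the $\mu=1$ signal block and $\mu>1$ noise block, and the same leading-order moments $\mu_{1}\approx\beta/(1+\rho)$ and $\mu_{2}\approx\beta^{2}\left(\alpha+\frac{1}{1+\rho}\right)$, collapsed by the same $(1+\rho)$ algebra into \eqref{eq:stability_sup}. Your only (cosmetic) deviation is doing the bookkeeping through the pre-averaged variables $\bar{\eta}_{i}^{\mu}=\frac{1}{M}\sum_{a}\eta_{i}^{\mu a}$ instead of the paper's explicit double sums over example indices $a,b$, which produces identical contractions and the same leading scalings.
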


\begin{proof}
First we recall the unnormalized log-density of the model:
\begin{equation}
-\beta\mathcal{H}^{\textrm{\tiny{HN}, sup}}(\boldsymbol{\sigma}|\boldsymbol{\eta})=\frac{\beta N}{2\mathcal{R}}\sum_{\mu=1}^{K}\left(\frac{1}{NM}\sum_{i,a=1}^{N,M}\eta_{i}^{\mu a}\sigma_{i}\right)^{2}=\frac{\beta}{2\mathcal{R}NM^{2}}\sum_{\mu=1}^{K}\sum_{i,j=1}^{N}\sum_{a,b=1}^{M}\eta_{i}^{\mu a}\eta_{j}^{\mu b}\sigma_{i}\sigma_{j}=\frac{1}{2}\sum_{k=1}^{N}h_{k}(\boldsymbol{\sigma})\sigma_{k}
\end{equation}
where
\begin{equation}
h_{k}(\boldsymbol{\sigma}):=\frac{\beta}{\mathcal{R}NM^{2}}\sum_{\mu=1}^{K}\sum_{i\neq k}^{N}\sum_{a,b=1}^{M}\eta_{i}^{\mu a}\eta_{k}^{\mu b}\sigma_{i}
\end{equation} 
is the internal field acting on $\sigma_{k}$. 
\newline
The discrete-time MC dynamics is given by the following update rule
\begin{equation}
\sigma_{k}^{(n+1)}=\sigma_{k}^{(n)}\mathrm{sign}\left\{ \tanh\left[h_{k}^{(n)}(\boldsymbol{\sigma}^{(n)})\sigma_{k}^{(n)}\right]+\zeta_{k}^{(n)}\right\} ,\quad\zeta_{k}^{(n)}\sim \mathcal U(-1,+1),\label{lim}
\end{equation}
where $n\in\mathbb{N}$ and $k=1,\dots,N$. By performing the low
temperature limit $\beta\to\infty$ equation (\ref{lim}) becomes
\begin{equation}
\sigma_{k}^{(n+1)}=\sigma_{k}^{(n)}\mathrm{sign}\left(h_{k}^{(n)}(\boldsymbol{\sigma}^{(n)})\sigma_{k}^{(n)}\right).
\end{equation}
We want to study the stability of the archetype retrieval thus
we set as an initial configuration for the MC dynamics $\sigma_{k}^{(1)}=\xi_{k}^{1}$; 
the one-step MC approximation for the magnetization is then
\begin{equation}
m_{1}^{(2)}=\frac{1}{N}\sum_{k=1}^N\xi_{k}^{1}\sigma_{k}^{(2)}=\frac{1}{N}\sum_{k=1}^{N}\mathrm{sign}\left(h_{k}^{(1)}(\boldsymbol{\xi}^{1})\xi_{k}^{1}\right).\label{magn}
\end{equation}
If $N\gg1$, by the central limit theorem, equation (\ref{magn})
can be approximated as follows
\begin{equation}
m_{1}^{(2)}\underset{N\gg1}{\approx}\int\frac{dz}{\sqrt{2\pi}}\exp\left(-\frac{z^{2}}{2}\right)\mathrm{sign}\left(\mu_1+z\sqrt{\mu_{2}-\mu_{1}^{2}}\right)=\mathrm{erf}\left(\frac{\mu_{1}}{\sqrt{2(\mu_{2}-\mu_{1}^{2})}}\right)\label{magnet}
\end{equation}
where we posed
\begin{eqnarray}
\mu_{1} &:=& \mathbb{E}_{\xi}\mathbb{E}_{\chi}\left[ h_{k}^{(n)}(\boldsymbol{\xi}^{1})\xi_{k}^{1}\right], \\
\mu_{2} &:=& \mathbb{E}_{\xi}\mathbb{E}_{\chi}\left[ h_{k}^{(n)}(\boldsymbol{\xi}^{1})^{2}\right].
\end{eqnarray}
First, we calculate $\mu_{1}$:
\begin{equation}
\mu_{1} =\frac{\beta}{\mathcal{R}NM^{2}}\mathbb{E}_{\xi}\mathbb{E}_{\chi}\left[ \sum_{\mu=1}^{K}\sum_{i\neq k}^{N}\sum_{a,b=1}^{M}\xi_{k}^{\mu}\chi_{k}^{\mu a}\xi_{i}^{\mu}\chi_{i}^{\mu b}\xi_{i}^{1}\xi_{k}^{1}\right] =\frac{\beta}{\mathcal{R}NM^{2}}\mathbb{E}_{\xi}\mathbb{E}_{\chi}\left[ \sum_{\mu>1}^{K}\sum_{a,b=1}^{M}\sum_{i\neq k}^{N}\xi_{k}^{\mu}\xi_{i}^{\mu}\chi_{k}^{ \mu a}\chi_{i}^{\mu b}+\sum_{a,b}^{M}\sum_{i\neq k}^{N}\chi_{k}^{1a}\chi_{i}^{1b}\right].
\end{equation}
Exploiting the fact that, by construction, archetypes are orthogonal in the average, that is $\mathbb{E}_{\xi}[\xi^{\mu}_i \xi^{\mu}_k]=0$ for any $i \neq k$,
\begin{equation}
\mathbb{E}_{\xi}\mathbb{E}_{\chi}\left[ \sum_{\mu>1}^{K}\sum_{a,b=1}^{M}\sum_{i\neq k}^{N}\xi_{k}^{\mu}\xi_{i}^{\mu}\chi_{k}^{\mu a}\chi_{i}^{ \mu b}\right] =0,
\end{equation}
and recalling \eqref{eq:bern}
\begin{equation}
\mathbb{E}_{\xi}\mathbb{E}_{\chi}\left[ \sum_{a,b}\sum_{i\neq k}\chi_{k}^{1a}\chi_{i}^{1b}\right] =(N-1)M^{2}r^{2},
\end{equation}
we get
\begin{equation}\label{eq:two-1}
\mu_{1}=\frac{\beta}{(1+\rho)N}(N-1)  \underset{N\gg1}{\approx}\mu_{1}=\frac{\beta}{(1+\rho)}.
\end{equation}
Then, we calculate $\mu_{2}$:
\begin{eqnarray}
  \mu_{2}&=&\mathbb{E}_{\xi}\mathbb{E}_{\chi}\left[ h_{k}(\boldsymbol{\boldsymbol{\xi}}^{1})^{2}\right] =\frac{\beta^{2}}{\mathcal{R}^{2}N^{2}M^{4}}\mathbb{E}_{\xi}\mathbb{E}_{\chi}\left[ \left(\sum_{\mu>1}^{K}\sum_{a,b=1}^{M}\sum_{i\neq k}^{N}\xi_{k}^{\mu}\xi_{i}^{\mu}\chi_{k}^{\mu a}\chi_{i}^{ \mu b}+\sum_{a,b}^{M}\sum_{i\neq k}^{N}\chi_{k}^{1a}\chi_{i}^{1b}\right)^{2}\right] =\\
  &=&\frac{\beta^{2}}{\mathcal{R}^{2}N^{2}M^{4}}\mathbb{E}_{\xi}\mathbb{E}_{\chi}\left[ \left(\sum_{\mu>1}^{K}\sum_{a,b=1}^{M}\sum_{i\neq k}^{N}\xi_{k}^{\mu}\xi_{i}^{\mu}\chi_{k}^{\mu a}\chi_{i}^{ \mu b}\right)^{2}+\left(\sum_{a,b}^{M}\sum_{i\neq k}^{N}\chi_{k}^{1 a}\chi_{i}^{1b}\right)^{2}\right]. 
\end{eqnarray}
For the sake of simplicity we pose $A:=\left(\sum_{\mu>1}^{K}\sum_{a,b=1}^{M}\sum_{i\neq k}^{N}\xi_{k}^{\mu}\xi_{i}^{\mu}\chi_{k}^{\mu a}\chi_{i}^{ \mu b}\right)^{2}$
and $B:=\left(\sum_{a,b}^{M}\sum_{i\neq k}^{N}\chi_{k}^{ 1a}\chi_{i}^{1b}\right)^{2}$such
that
\begin{equation}
\mu_{2}=\frac{\beta^{2}}{\mathcal{R}^{2}N^{2}M^{4}}\left(\mathbb{E}_{\xi}\mathbb{E}_{\chi}\left[ A\right] +\mathbb{E}_{\xi}\mathbb{E}_{\chi}\left[ B\right] \right).\label{mu2}
\end{equation}
After minimal manipulations we can write these terms as
\begin{eqnarray}
\mathbb{E}_{\xi}\mathbb{E}_{\chi}\left[ A\right] &=& \mathbb{E}_{\xi}\mathbb{E}_{\chi}\left[ \sum_{\mu>1}\sum_{a,b,c,d}\sum_{i\neq k}\chi_{k}^{\mu a}\chi_{i}^{\mu b}\chi_{k}^{\mu c}\chi_{i}^{ \mu d}\right], \label{eq:m1}\\
\mathbb{E}_{\xi}\mathbb{E}_{\chi}\left[ B\right] &=& \mathbb{E}_{\xi}\mathbb{E}_{\chi}\left[\sum_{a,b,c,d}^{M}\sum_{i\neq k}^{N}\chi_{k}^{1 a}\chi_{i}^{1b}\chi_{k}^{1c}\chi_{i}^{1d}\right] +\mathbb{E}_{\xi}\mathbb{E}_{\chi}\left[ \sum_{a,b,c,d}^{M}\sum_{i,j\neq k}^{N}\lambda_{ij}\chi_{k}^{1a}\chi_{i}^{1b}\chi_{k}^{1c}\chi_{j}^{1d}\right], \label{eq:m2}
\end{eqnarray}
where $\lambda_{ij}:=1-\delta_{ij}$ . By merging the results in (\ref{eq:m1})
and (\ref{eq:m2}) we get
\begin{eqnarray}
\label{eq:one}
\mathbb{E}_{\xi}\mathbb{E}_{\chi}\left[ A\right] +\mathbb{E}_{\xi}\mathbb{E}_{\chi}\left[ B\right] &=&\mathbb{E}_{\xi}\mathbb{E}_{\chi}\left[ \sum_{\mu=1}\sum_{a,b,c,d,}\sum_{i\neq k}\chi_{k}^{\mu a}\chi_{i}^{ \mu b}\chi_{k}^{ \mu c}\chi_{i}^{\mu d}\right] +\mathbb{E}_{\xi}\mathbb{E}_{\chi}\left[ \sum_{a,b,c,d}^{M}\sum_{i,j\neq k}^{N}\lambda_{ij}\chi_{k}^{1a}\chi_{i}^{1b}\chi_{k}^{1c}\chi_{j}^{1d}\right] =\\ 
\nonumber
&=& K(N-1)M^{4}r^{4}\left(1+\rho\right)^{2}+(N-1)(N-2)M^{4}r^{4}\left(1+\rho\right)\\
\label{finap}
&\underset{N \gg 1}{\approx}& KNM^{4}r^{4}\left(1+\rho\right)^{2}+N^{2}M^{4}r^{4}\left(1+\rho\right).
\end{eqnarray}
By direct substitution of \eqref{finap} into \eqref{mu2} we obtain
\begin{equation}
\mu_{2}\underset{N \gg 1}{\approx}\beta^{2}\left(\alpha+\frac{1}{1+\rho}\right).\label{mudue}
\end{equation}
Finally, by plugging equations \eqref{eq:two-1} and \eqref{mudue} into the expression of the magnetization \eqref{magnet}
we get
\begin{equation}
m_{1}^{(2)}\underset{N \gg 1}{\approx}\mathrm{erf}\left(\frac{1}{\sqrt{2\alpha\left(1+\rho\right)^{2}+2\rho}}\right),
\end{equation}
and, by requiring that this one-step MC magnetization is larger than $\textrm{erf}(\theta)$ we recover Eq.~\eqref{eq:stability_sup}. 
\end{proof}

\begin{rem}
Recalling Eq.~\eqref{magnet}, one can see that setting $\theta = \frac{1}{\sqrt{2}}$ in \eqref{eq:stability_sup} corresponds to the condition $\mathbb{E}_{\xi}\mathbb{E}_{\chi}\left[ h_{k}^{(n)}(\boldsymbol{\xi}^{1})\xi_{k}^{1}\right]>\sqrt{\textrm{Var}(h_{k}^{(n)}(\boldsymbol{\xi}^{1})\xi_{k}^{1})}$, that is the standard condition used in signal-to-noise analysis; this is also depicted in Fig.~\ref{figSignal-vs-Noise}
\end{rem}


\section{Statistical mechanics of Supervised Hebbian Learning} \label{sec:SM}
Here we follow a standard route in glassy statistical-mechanics \cite{Parisi}: first, we outline the control parameters of the system (i.e., $\alpha, \beta, \rho$), introduce the partition function $\mathcal{Z}^{\textrm{\tiny{HN}}}_{\beta}(\mathcal{S})$ and quenched free-energy $f_{\alpha,\beta,\rho}$ related to the cost function (\ref{model}), as well as the order parameters necessary to describe the macroscopic behaviour of the system. Then, we achieve an explicit expression of the quenched free-energy in terms of these parameters: this will be accomplished in the replica symmetric (RS) regime, namely  under the assumption that the order parameters do not fluctuate in the thermodynamic limit. Finally, we extremize the free energy w.r.t. the order parameters to force the thermodynamic requirements of minimum energy and maximum entropy. This extremization returns a set of self-consistent equations for the evolution of the order parameters in the space of the control parameters whose inspection allows us to draw a phase diagram for the network and thus to obtain an exhaustive characterization of its emergent information-processing skills.
\newline
We will address this investigation in a mathematical rigorous way, by relying on the generalized Guerra's interpolation scheme \cite{Guerra1,Guerra2,AABF-NN2020}, rather than pseudo-heuristic tools like replica-trick: the basic idea of this approach is to introduce a suitable functional that interpolates between the original model and a series of simpler (and solvable) models and then to solve for these simpler models and propagate the solution back to the other extremum of the interpolation, namely the model under study.
\newline
The statistical mechanical treatment of the unsupervised HN \eqref{model_unsup} has been already addressed in \cite{Giordano} hence here we will just focus on the supervised case \eqref{model}  and we can drop the superscript ``sup'' without ambiguity. 
\begin{defn} The control parameters that tune the performance of the HN implementing a supervised Hebbian learning  \eqref{model} are
\begin{itemize}
\item the network \emph{storage} $\alpha := \lim_{N \to \infty}\frac{K}{N}$, that is, the ratio between the number $K$ of archetypes that we aim to retrieve and the number $N$ of neurons employed for this task,
\item the inverse \emph{noise} $\beta:=T^{-1}$ tuning the stochasticity in the network dynamics,
\item the dataset \emph{entropy}\footnote{Strictly speaking, $\rho$ is not an entropy, yet here we allow ourselves for this slight abuse of language because, as discussed in the main text, the conditional entropy $H( \xi_i^{\mu} | \boldsymbol \eta_i^{\mu})$ is a monotonically increasing function of $\rho$.} $\rho:=\frac{1-r^{2}}{Mr^{2}}$, that is the amount of information needed to describe the archetype set $\{ \boldsymbol \xi^{\mu} \}_{\mu=1,...,K}$ given the sample $\mathcal S$.     
\end{itemize}
\end{defn}
\begin{defn}
The partition function of the supervised HN \eqref{model} is defined as
\begin{equation}
\mathcal{Z}^{\textrm{\tiny{HN}}}_{\beta}(\mathcal{S}):=\sum_{\{\sigma\}}^{2^N}\exp\left[-\beta\mathcal{H}^{\textrm{\tiny{HN}}}(\boldsymbol{\sigma}| \mathcal{S})\right]=\sum_{\{\sigma\}}^{2^N}\exp\left[\frac{\beta N}{2\mathcal{R}}\sum_{\mu=1}^{K}\left(\frac{1}{NM}\sum_{i,a=1}^{N,M}\eta_{i}^{\mu a}\sigma_{i}\right)^{2}\right] \label{eq:Z_raw}
\end{equation}
and the related quenched free-energy is defined as
\begin{equation}
\label{eq:f_raw}
f_{\alpha, \beta, \rho}:= - \frac{1}{\beta N} \mathbb{E}_{\xi}\mathbb{E}_{\chi}\left[ \log \mathcal{Z}^{\textrm{\tiny{HN}}}_{\beta}(\mathcal{S})\right]
\end{equation}
where the average $\mathbb{E}_{\xi}\mathbb{E}_{\chi}$ is specified in Definition \eqref{def:op_media}. 
\end{defn}
\begin{defn} \label{def:q}
\label{def:The-order-parameters}The order parameters required to describe the performance of the HN implementing a supervised Hebbian learning  \eqref{model} are
\begin{itemize}
\item the \emph{archetype magnetization} $m :=\frac{1}{N}\sum_{i=1}^{N}\xi_{i}^{1}\sigma_{i}$,
\item the \emph{mean example magnetization} $n  :=\frac{r}{\mathcal{R}}\frac{1}{NM}\sum_{i,a=1}^{N,M}\eta_{i}^{1a}\sigma_{i}$, 
\item the two-replica \emph{overlap} for the Ising neurons $q_{ab} :=\frac{1}{N}\sum_{i=1}^{N}\sigma_{i}^{(a)}\sigma_{i}^{(b)},$ 
\item the two-replica \emph{overlap} for the Gaussian neurons (vide infra) $p_{ab} :=\frac{1}{K-1}\sum_{\mu=2}^{K}z_{\mu}^{(a)}z_{\mu}^{(b)}$.
\end{itemize}
Notice that $m$ and $n$ are referred to the first archetype without loss of generality.\\
Further, being $x$ the generic order parameter, we denote with $\langle x \rangle$ its expectation under the Boltzmann-Gibbs measure $\mathcal P_{\beta}(\boldsymbol \sigma | \mathcal S)= [\mathcal{Z}^{\textrm{\tiny{HN}}}_{\beta}(\mathcal{S})]^{-1} \exp\left[-\beta\mathcal{H}^{\textrm{\tiny{HN}}}(\boldsymbol{\sigma}| \mathcal{S})\right]$.
\end{defn}
\begin{prop} \label{prop:int_rep}
The integral representation of the partition function of the HN implementing supervised Hebbian learning \eqref{model} reads as
\begin{equation}\label{eq:Z_lin_prop}
\mathcal{Z}^{\textrm{\tiny{HN}}}_{\beta}(\boldsymbol{\lambda},\boldsymbol{\eta}^{1})=\sum_{\{\sigma\}}\int\prod_{\mu=2}^{K}\left(\frac{dz_{\mu}}{\sqrt{2\pi}}\right)\exp\left[-\sum_{\mu=2}^{K}\frac{z_{\mu}^{2}}{2}+\sqrt{\frac{\beta}{N}}\sum_{\mu=2}^{K}\sum_{i=1}^{N}\lambda_{i}^{\mu}z_{\mu}\sigma_{i}+\frac{\beta N}{2}{n}^{2}(1+\rho)\right],
\end{equation}
where $\boldsymbol \lambda = \{ \lambda_{i}^{\mu} \}_{i=1,...,N}^{\mu=1,...,K}$ represent i.i.d standard-Gaussian random fields.
\end{prop}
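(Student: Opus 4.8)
The plan is to derive \eqref{eq:Z_lin_prop} by treating the ``signal'' channel $\mu=1$ and the $K-1$ ``noise'' channels $\mu\ge 2$ differently: the former by a purely algebraic rewriting in terms of the order parameter $n$, the latter by a Hubbard--Stratonovich (Gaussian) linearization that trades each quadratic term for an auxiliary scalar field $z_\mu$, after which the induced coupling constants are recognized as effective Gaussian fields $\lambda_i^\mu$.

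First I would split off the $\mu=1$ term in the exponent of \eqref{eq:Z_raw} from the remaining sum over $\mu\ge 2$. For the isolated $\mu=1$ contribution I would use the definition of the mean example magnetization $n=\frac{r}{\mathcal{R}}\frac{1}{NM}\sum_{i,a}\eta_i^{1a}\sigma_i$ from Definition~\ref{def:The-order-parameters} to substitute $\frac{1}{NM}\sum_{i,a}\eta_i^{1a}\sigma_i=\frac{\mathcal{R}}{r}n$. Plugging this back and using the elementary identity $\mathcal{R}/r^2=1+\rho$ (immediate from $\mathcal{R}=r^2+\frac{1-r^2}{M}$ and $\rho=\frac{1-r^2}{Mr^2}$) turns this term into exactly $\frac{\beta N}{2}n^2(1+\rho)$, reproducing the last term in \eqref{eq:Z_lin_prop}.

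Next I would linearize the remaining quadratic terms. Writing $M_\mu:=\frac{1}{NM}\sum_{i,a}\eta_i^{\mu a}\sigma_i$, the Gaussian identity $e^{a^2/2}=\int\frac{dz}{\sqrt{2\pi}}e^{-z^2/2+az}$ with $a=\sqrt{\beta N/\mathcal{R}}\,M_\mu$ introduces one field $z_\mu$ per $\mu\ge 2$ and yields the factor $-\frac{z_\mu^2}{2}$. I would then pull the site and example sums inside the linear term $\sqrt{\beta N/\mathcal{R}}\,M_\mu z_\mu$ and check the prefactor bookkeeping, namely that $\sqrt{\beta N/\mathcal{R}}\,\frac{1}{NM}=\sqrt{\beta/N}\,\frac{1}{M\sqrt{\mathcal{R}}}$. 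This recasts the coupling as $\sqrt{\beta/N}\sum_i\lambda_i^\mu z_\mu\sigma_i$ with $\lambda_i^\mu:=\frac{1}{M\sqrt{\mathcal{R}}}\sum_a\eta_i^{\mu a}$, giving the first two terms of the exponent in \eqref{eq:Z_lin_prop}.

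The last and most delicate step is to justify calling the $\lambda_i^\mu$ ``i.i.d.\ standard Gaussian fields.'' Independence across $(i,\mu)$ is immediate, since distinct indices involve disjoint collections of the underlying $\xi_i^\mu,\chi_i^{\mu a}$. Using $\eta_i^{\mu a}=\chi_i^{\mu a}\xi_i^\mu$ with $\mathbb{E}[\xi_i^\mu]=0$ for $\mu\ge 2$, $\mathbb{E}[\chi]=r$ and $\mathbb{E}[\chi^2]=1$, I would verify $\mathbb{E}_\xi\mathbb{E}_\chi[\lambda_i^\mu]=0$ and $\mathbb{E}_\xi\mathbb{E}_\chi[(\lambda_i^\mu)^2]=\frac{1}{\mathcal{R}}\left(r^2+\frac{1-r^2}{M}\right)=1$, so that the prefactor $1/(M\sqrt{\mathcal{R}})$ is tuned exactly to make each field centered with unit variance. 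The true law of $\lambda_i^\mu$ is, however, a symmetric two-component mixture rather than a genuine Gaussian, so the identification is not a pointwise distributional equality; this is where I expect the real obstacle to lie. The resolution is that the fields enter \eqref{eq:Z_lin_prop} only through the $N^{-1/2}$-weighted overlaps $\frac{1}{\sqrt{N}}\sum_i\lambda_i^\mu\sigma_i$, so that only their first two moments survive in the $N\to\infty$ free energy, while the $\sigma$--$\lambda$ correlation induced by the Gibbs measure --- which obstructs a naive central-limit argument --- is precisely what the generalized Guerra interpolation of Sec.~\ref{sec:SM} is designed to control. I would therefore treat \eqref{eq:Z_lin_prop} as the representation on which that scheme operates and defer the rigorous universality statement to the interpolation itself.
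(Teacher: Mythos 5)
Your proposal follows essentially the same route as the paper: split off the $\mu=1$ signal and rewrite it as $\frac{\beta N}{2}n^{2}(1+\rho)$, linearize the $\mu\geq 2$ quadratic terms via the Gaussian (Hubbard--Stratonovich) identity, and identify the effective field $\frac{1}{M}\sum_{a}\eta_{i}^{\mu a}$ (your $\lambda_i^{\mu}$ after normalizing by $\sqrt{\mathcal{R}}$) through its first two moments, which is exactly the paper's computation. Your closing caveat --- that the field's true law is a symmetric two-component mixture, so the Gaussian substitution is a free-energy-level universality statement controlled by the interpolation scheme rather than a pointwise distributional identity --- is if anything more careful than the paper's in-proof appeal to $M\gg 1$, and it matches the justification the paper itself supplies in the remark on quenched-noise universality \cite{Carmona,Giuseppe,Barattolo} placed immediately after the proposition.
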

\begin{proof}
First, we split the Hamiltonian $\mathcal{H}^{\textrm{\tiny{HN}}}(\boldsymbol \sigma | \mathcal{S})$ into two contributions: one containing the terms related to examples with label $\mu=1$ (playing as a ``signal'') and the other containing the terms related to examples with labels $\mu \neq 1$ (playing as slow noise). Then, exploiting the relation $\int dz\exp(-z^{2}/2+Az)=\sqrt{2\pi}\exp(A^{2}/2)$, we linearise the noise terms inside \eqref{eq:Z_raw} obtaining
\begin{equation}
\mathcal{Z}^{\textrm{\tiny{HN}}}_{\beta}(\mathcal{S})=\sum_{\{\sigma\}}\int\prod_{\mu=2}^{K}\left(\frac{dz_{\mu}}{\sqrt{2\pi}}\right)\exp\left[-\sum_{\mu=2}^{K}\frac{z_{\mu}^{2}}{2}+\sqrt{\frac{\beta}{N\mathcal{R}}}\frac{1}{M}\sum_{\mu=2}^{K}\sum_{i,a=1}^{N,M}\eta_{i}^{\mu a}\sigma_{i}z_{\mu}+\frac{\beta N}{2\mathcal{R}}\left(\frac{1}{NM}\sum_{i,a=1}^{N,M}\eta_{i}^{1a}\sigma_{i}\right)^{2}\right],
\end{equation}
and, recalling Definition \eqref{def:The-order-parameters}, we rewrite the signal contribution, namely
\begin{equation}
\mathcal{Z}^{\textrm{\tiny{HN}}}_{\beta}(\mathcal{S})=\sum_{\{\sigma\}}\int\prod_{\mu=2}^{K}\left(\frac{dz_{\mu}}{\sqrt{2\pi}}\right)\exp\left[-\sum_{\mu=2}^{K}\frac{z_{\mu}^{2}}{2}+\sqrt{\frac{\beta}{N\mathcal{R}}}\frac{1}{M}\sum_{\mu=2}^{K}\sum_{i,a=1}^{N,M}\eta_{i}^{\mu a}\sigma_{i}z_{\mu}+\frac{\beta N}{2}(1+\rho){n}^{2}\right].\label{eq:Z_lin}
\end{equation}
We now handle the noise contribution trying to make it analytically more treatable: from \eqref{eq:Z_lin} we extract the noise contribution, namely 
\begin{equation}
\frac{1}{M}\sqrt{\frac{\beta}{N\mathcal{R}}}\sum_{\mu=2}^{K}\sum_{i,a=1}^{N,M}\eta_{i}^{\mu a}z_{\mu}\sigma_{i}=\sqrt{\frac{\beta}{N\mathcal{R}}}\sum_{\mu=2}^{K}\sum_{i=1}^{N}\left(\frac{1}{M}\sum_{a=1}^{M}\eta_{i}^{\mu a}\right)z_{\mu}\sigma_{i}
\end{equation}
and we see that the random field acting on the pairs $\sigma_{i}z_{\mu}$
is $\frac{1}{M}\sum_{a=1}^{M}\eta_{i}^{\mu a}$, hence, in the large dataset scenario $M \gg 1$, this random field
can be replaced by a Gaussian-distributed random field with the same mean and variance that turn out to be, respectively,
\begin{equation}
\mathbb{E}_{\chi}\mathbb{E}_{\xi}\frac{1}{M}\sum_{a=1}^{M}\eta_{i}^{\mu a}=\frac{1}{M}\sum_{a=1}^{M}\mathbb{E}_{\xi}\xi_{i}^{\mu}\,\mathbb{E}_{\chi}\chi_{i}^{\mu a}=0
\end{equation}
\begin{equation}
\begin{split}\mathbb{E}_{\chi}\mathbb{E}_{\xi}\left(\frac{1}{M}\sum_{a=1}^{M}\eta_{i}^{\mu a}\right)^{2}= & \mathbb{E}_{\chi}\mathbb{E}_{\xi}\left(\frac{1}{M}\sum_{a=1}^{M}\chi_{i}^{\mu a}\xi_{i}^{\mu}\right)^{2}=\\
= & \mathbb{E}_{\chi}\left(\frac{1}{M}\sum_{a=1}^{M}\chi_{i}^{\mu a}\right)^{2}=\\
= & \frac{1}{M^{2}}\mathbb{E}_{\chi}\left[\sum_{a=1}^{M}\left(\chi_{i}^{\mu a}\right)^{2}+\sum_{a\neq b=1}^{M,M}\chi_{i}^{\mu a}\chi_{i}^{\mu b}\right]=\\
= & \frac{1}{M^{2}}\left[M+M(M-1)r^{2}\right]=r^{2}+\frac{1-r^{2}}{M}=\mathcal{R}.
\end{split}
\end{equation}
%
%
Therefore, the noise term in the free energy can be replaced by
\begin{equation}
\sqrt{\frac{\beta}{N}}\sum_{\mu=2}^{K}\sum_{i=1}^{N}\lambda_{i}^{\mu}z_{\mu}\sigma_{i}~~\mathrm{with}~~\lambda_{i}^{\mu}\sim\mathcal{N}(0,1)
\end{equation}
recovering eq.~\eqref{eq:Z_lin_prop}.
\end{proof}
\begin{rem}
The partition function $\mathcal{Z}^{\textrm{\tiny{HN}}}_{\beta}(\mathcal{S})$ given in \eqref{eq:Z_lin} can be seen as the partition function of a two-species spin-glass model, where one species is made of binary spins $\boldsymbol \sigma \in \{-1, +1\}^N$ and the other species is made of Gaussian spins $\boldsymbol z \sim \mathcal N(0,1)^{K-1}$; spins of different nature interact pairwisely by a random coupling given by $\sqrt{\frac{\beta}{ N } }\lambda_i^{\mu}$, binary spins interact  pairwisely by a Hebbian-like coupling $\frac{\beta}{2 \mathcal R} \bar{\eta}_i^1 \bar{\eta}_j^1$; interactions between real spins are absent.
\end{rem}
\begin{rem}
The previous proposition is consistent with the universality of the quenched noise proved for the Sherrington-Kirkpatrick spin-glasses \cite{Carmona} and extended to bipartite spin-glasses and neural networks \cite{Giuseppe,Barattolo}. Remarkably, it allows us to substitute the digital i.i.d. entries that contribute to the slow noise with Gaussian random variables and this, in turn, allows us to apply the Wick-Isserlis theorem to analytically treat the expression of the quenched free-energy.
\end{rem}
As a result of the previous proposition we can recast the quenched free-energy \eqref{eq:f_raw} as
%
\begin{equation}
f_{\alpha,\beta,\rho}:=-\frac{1}{\beta N}\mathbb{E}\log\left\{ \sum_{\{\sigma\}}\int\prod_{\mu=2}^{K}\left(\frac{dz_{\mu}}{\sqrt{2\pi}}\right)\exp\left[-\sum_{\mu=2}^{K}\frac{z_{\mu}^{2}}{2}+\sqrt{\frac{\beta}{N}}\sum_{\mu=2}^{K}\sum_{i=1}^{N}\lambda_{i}^{\mu}z_{\mu}\sigma_{i}+\frac{\beta N}{2}{n}^{2}(1+\rho)\right]\right\} \label{eq:f_lin}
\end{equation}
where the operator $\mathbb{E}:=\mathbb{E}_{\chi^{1}}\mathbb{E}_{\xi^{1}}\mathbb{E}_{\lambda}$ is defined by Definition \eqref{def:op_media} and by
\begin{eqnarray}
\mathbb{E}_{\lambda}f(\boldsymbol{\lambda}) &:=& \int\prod_{a=1}^{M}\prod_{i=1}^{N}\left\{ \frac{d\lambda_{i}^{\mu}}{\sqrt{2\pi}}\exp\left[-\frac{\left(\lambda_{i}^{\mu}\right)^{2}}{2}\right]\right\} f(\boldsymbol{\lambda}),
\end{eqnarray}
where $f$ is a generic function.

We are now ready to introduce the Guerra functional that will guide us toward the solution of the model. In a nutshell, the idea is that the Guerra functional interpolates between two extrema: one corresponds to the original model (and the functional coincides with the model free-energy), the other corresponds to a solvable model (and the functional coincides with the free-energy of a one-body model); therefore, one can solve for the Guerra functional in this second extremum and then propagate the solution back to the original model via the fundamental theorem of calculus. We start this journey by giving the following 
\begin{defn}
The Guerra interpolating functional $G_{\alpha,\beta,\rho}(t,J_{m})$ is defined as: 

\begin{equation}
\begin{split}G_{\alpha,\beta,\rho}(t,J_{m})=-\frac{1}{\beta N}\mathbb{E}\log\Big\{\sum_{\{\sigma\}}\int\prod_{\mu=2}^{K}\left(\frac{dz_{\mu}}{\sqrt{2\pi}}\right)\exp\Big[-\frac{1-(1-t)\beta(1-\langle q\rangle)}{2}\sum_{\mu=2}^{K}z_{\mu}^{2}+\sqrt{t}\sqrt{\frac{\beta}{N}}\sum_{\mu=2}^{K}\sum_{i=1}^{N}\lambda_{i}^{\mu}z_{\mu}\sigma_{i}+\\
+t\frac{\beta N}{2}{n}^{2}(1+\rho)+\beta\langle n\rangle(1+\rho)\left(1-t\right)N{n}+\sqrt{\alpha\beta\langle p\rangle\left(1-t\right)}\sum_{i=1}^{N}\theta_{i}\sigma_{i}+\sqrt{\beta\langle q\rangle\left(1-t\right)}\sum_{\mu=2}^{K}\psi_{\mu}z_{\mu}-J_{m}\beta Nm\Big]\Big\}
\end{split}
\label{eq:guerra_f}
\end{equation}
where $t \in [0,1]$ is the interpolation parameter, $J_m $ is an auxiliary field coupled to the Mattis magnetization of the archetype, the operator $\mathbb{E}$ is defined as 
\begin{equation}
\mathbb{E}:=\mathbb{E}_{\chi^{1}}\mathbb{E}_{\xi^{1}}\mathbb{E}_{\lambda}\mathbb{E}_{\theta}\mathbb{E}_{\psi}
\end{equation}
with
\begin{eqnarray}
\mathbb{E_{\theta}}f(\boldsymbol{\theta}) &:=& \int\prod_{\mu=2}^{K}\left\{ \frac{d\theta_{\mu}}{\sqrt{2\pi}}\exp\left[-\theta_{\mu}^{2}/2\right]\right\} f(\boldsymbol{\theta}),\\
\mathbb{E_{\psi}}f(\boldsymbol{\psi}) &:=& \int\prod_{i=1}^{N}\left\{ \frac{d\psi_{i}}{\sqrt{2\pi}}\exp\left[-\psi_{i}^{2}/2\right]\right\} f(\boldsymbol{\psi}),
\end{eqnarray}
and the brackets $\langle \cdot \rangle$ represent the Boltzmann-Gibbs measure, as further specified in Definitions \eqref{def:aves}-\eqref{def:rsansatz}.
\end{defn}
\begin{rem}
By setting $t=1$ and $J_{m}=0$ in Guerra's functional, we recover immediately the correct expression for the free energy of the model \eqref{eq:f_lin}: $G_{\alpha,\beta,\rho}(t=1,J_{m}=0) = f_{\alpha,\beta,\rho}$. On the other hand, by setting $t=0$ we end up with a one-body system that is exactly solvable. The fictitious field yielded by $J_m$ will allow us to determine the expectation of the magnetization by deriving the free energy, namely, $\langle m\rangle = \left. \frac{dG_{\alpha,\beta,\rho}(t=1,J_{m})}{d J_m} \right |_{J_m=0}$. 
\end{rem}
\begin{defn}\label{def:aves} 
The functional \eqref{eq:guerra_f} yields an interpolating Boltzmann-Gibbs measure under which the expectation of the generic function $f(\boldsymbol{\sigma},\boldsymbol{z})$
reads as
\begin{equation}
\left\langle f(\boldsymbol{\sigma},\boldsymbol{z})\right\rangle _{t}:=\frac{\sum_{\{\sigma\}}\int\prod_{\mu=2}^{K}\left(\frac{dz_{\mu}}{\sqrt{2\pi}}\right)f(\boldsymbol{\sigma},\boldsymbol{z})\cdot\exp\left[\mathcal{B}(\boldsymbol{\sigma},\boldsymbol{z})\right]}{\sum_{\{\sigma\}}\int\prod_{\mu=2}^{K}\left(\frac{dz_{\mu}}{\sqrt{2\pi}}\right)\exp\left[\mathcal{B}(\boldsymbol{\sigma},\boldsymbol{z})\right]}
\end{equation}
where $\mathcal{B}(\boldsymbol{\sigma},\boldsymbol{z})$ is the generalized Boltzmann-Gibbs weight given by
\begin{equation}
\begin{aligned}\mathcal{B}(\boldsymbol{\sigma},\boldsymbol{z}):= & -\frac{1-(1-t)\beta(1-\langle q\rangle)}{2}\sum_{\mu=2}^{K}z_{\mu}^{2}+\sqrt{t}\sqrt{\frac{\beta}{N}}\sum_{\mu=2}^{K}\sum_{i=1}^{N}\lambda_{i}^{\mu}z_{\mu}\sigma_{i}+\\  
\nonumber
 & +t\frac{\beta N}{2}{n}^{2}(1+\rho)+\beta\langle n\rangle(1+\rho)\left(1-t\right)N{n}+\sqrt{\alpha\beta\langle p\rangle\left(1-t\right)}\sum_{i=1}^{N}\theta_{i}\sigma_{i}+\sqrt{\beta\langle q\rangle\left(1-t\right)}\sum_{\mu=2}^{K}\psi_{\mu}z_{\mu}-J_{m}\beta Nm.
\end{aligned}
\end{equation}
\end{defn}
%
%
\begin{defn}\label{def:rsansatz} 
\textbf{Theoretical setting (Replica Symmetric Ansatz):} In the replica-symmetric framework we assume that, in the thermodynamic limit $N\to\infty$, the following variance and covariance 
\begin{eqnarray}
\Delta\left[n_{1}^{2}\right] &:=& \mathbb{E}\left\langle \left({n}-\langle n\rangle\right)^{2}\right\rangle _{t}   =\mathbb{E}\left\langle {n}^{2}\right\rangle _{t}+2\langle n\rangle\mathbb{E}\left\langle n\right\rangle _{t}+\langle n\rangle^{2}\\
\Delta\left[q_{12}p_{12}\right] &:=& \mathbb{E}\left\langle \left(q_{12}-\langle q\rangle\right)\left(p_{12}-\langle p\rangle\right)\right\rangle _{t}  =\mathbb{E}\left\langle q_{12}p_{12}\right\rangle _{t}-\langle q\rangle\mathbb{E}\left\langle p_{12}\right\rangle _{t}-\langle p\rangle\mathbb{E}\left\langle q_{12}\right\rangle _{t}+\langle p\rangle\langle q\rangle
\label{eq:deltas}
\end{eqnarray}
go to $0$, namely $\lim_{N \to \infty}\Delta\left[n_{1}^{2}\right] =0$ and $\lim_{N \to \infty}\Delta\left[q_{12}p_{12} \right] =0$. 
Which implies
\begin{eqnarray}
\lim_{N\to\infty} \langle n \rangle_t &=& \langle n \rangle,\\
\lim_{N\to\infty} \langle q_{12} \rangle_t &=& \langle q \rangle,\\
\lim_{N\to\infty} \langle p_{12} \rangle_t &=& \langle p \rangle.
\label{eq:deltas}
\end{eqnarray}
\end{defn}

%
As we want to calculate the free energy (i.e., Guerra's functional at $t=1$) by evaluating $G_{\alpha,\beta,\rho}(t=0,J_{m})$ and then propagating it to $t=1$, a technical aspect we need to preliminary address is the $t-$derivative, or the streaming, of Guerra's interpolating functional and this is achieved in the next
\begin{lemma}  \label{lemma1}
The streaming equation for Guerra's functional $G_{\alpha,\beta,\rho}(t,J_{m})$ is
\begin{equation}
\frac{d}{dt}G_{\alpha,\beta,\rho}(t,J_{m})=\frac{\langle n\rangle^{2}}{2}(1+\rho)+\frac{\alpha}{2}\langle p\rangle(1-\langle q\rangle)+\Delta\left[q_{12}p_{12}\right]-\frac{1}{2}\Delta\left[{n}^{2}\right].\label{eq:streaming-1}
\end{equation}
\end{lemma}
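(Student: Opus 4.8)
The plan is to differentiate Guerra's functional directly. Writing $G_{\alpha,\beta,\rho}(t,J_m) = -\frac{1}{\beta N}\mathbb{E}\log \mathcal{Z}_t$, where $\mathcal{Z}_t = \sum_{\{\sigma\}}\int\prod_{\mu=2}^{K}(dz_\mu/\sqrt{2\pi})\,e^{\mathcal{B}(\boldsymbol\sigma,\boldsymbol z)}$ is the normalization of the interpolating measure $\langle\cdot\rangle_t$, the chain rule and differentiation under the expectation and the logarithm give $\frac{d}{dt}G_{\alpha,\beta,\rho}(t,J_m) = -\frac{1}{\beta N}\mathbb{E}\langle \partial_t\mathcal{B}\rangle_t$. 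Thus the entire problem reduces to computing $\partial_t\mathcal{B}$ and then taking its quenched Boltzmann average.

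First I would differentiate $\mathcal{B}$ term by term and sort the outcome into two families. The genuinely \emph{deterministic} pieces come from $t\frac{\beta N}{2}n^2(1+\rho)$ and its counterterm $\beta\langle n\rangle(1+\rho)(1-t)Nn$, which differentiate to $\frac{\beta N}{2}n^2(1+\rho)$ and $-\beta\langle n\rangle(1+\rho)Nn$; inserting these into $-\frac{1}{\beta N}\mathbb{E}\langle\cdot\rangle_t$ and invoking the definition of $\Delta[n^2]$ in Definition \eqref{def:rsansatz} directly reproduces the terms $\frac{\langle n\rangle^2}{2}(1+\rho)$ and the $-\frac12\Delta[n^2]$ fluctuation. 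The $J_m\beta N m$ term is $t$-independent and drops. What remains are four pieces that are linear in the interpolating Gaussian fields (or in $z_\mu^2$): the two-species coupling $\sqrt{t}\sqrt{\beta/N}\sum\lambda_i^\mu z_\mu\sigma_i$, the $z_\mu^2$ coefficient, and the $\theta$- and $\psi$-counterterms.

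The core step is to remove the Gaussian fields through integration by parts (Stein's lemma), $\mathbb{E}[Jg]=\mathbb{E}[\partial_J g]$, combined with $\partial_J\langle O\rangle_t = \langle O\,\partial_J\mathcal{B}\rangle_t - \langle O\rangle_t\langle\partial_J\mathcal{B}\rangle_t$. Applied to the $\lambda$-, $\theta$- and $\psi$-linear terms, each collapses into a difference between a diagonal two-point function and a squared one-point function. I would then introduce a second replica to write every squared average as $\langle O\rangle_t^2 = \langle O^{(1)}O^{(2)}\rangle_t$, and rewrite the resulting sums through the order parameters via $\sigma_i^2=1$, $\sum_i\sigma_i^{(1)}\sigma_i^{(2)}=Nq_{12}$ and $\sum_{\mu}z_\mu^{(1)}z_\mu^{(2)}=(K-1)p_{12}$, taking $(K-1)/N\to\alpha$ in the thermodynamic limit.

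The decisive and most delicate point is the cancellation of the diagonal self-overlap: the coefficient $-\frac12[1-(1-t)\beta(1-\langle q\rangle)]$ of $\sum z_\mu^2$ and the $\psi$-counterterm $\sqrt{\beta\langle q\rangle(1-t)}\sum\psi_\mu z_\mu$ are engineered precisely so that the $\langle z_\mu^2\rangle_t$ contributions produced by the coupling term, by the coefficient term, and by the $\psi$-term sum to zero. The surviving off-diagonal combination $\mathbb{E}\langle q_{12}p_{12}\rangle_t - \langle q\rangle\mathbb{E}\langle p_{12}\rangle_t - \langle p\rangle\mathbb{E}\langle q_{12}\rangle_t + \langle p\rangle\langle q\rangle$ is then recognized as the centered fluctuation $\Delta[q_{12}p_{12}]$, leaving alongside it the mean-field piece $\frac{\alpha}{2}\langle p\rangle(1-\langle q\rangle)$. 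Merging the $n$-sector and the $\alpha$-sector yields \eqref{eq:streaming-1}. The main obstacle is therefore bookkeeping rather than conceptual: keeping track of which integrations-by-parts feed the diagonal, verifying that the engineered counterterms annihilate it, and correctly centering the remaining off-diagonal terms into the two fluctuations $\Delta[q_{12}p_{12}]$ and $\Delta[n^2]$.
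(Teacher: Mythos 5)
Your proposal is correct and follows essentially the same route as the paper: direct $t$-differentiation of $G_{\alpha,\beta,\rho}$, Gaussian integration by parts (Wick--Isserlis/Stein) on the $\lambda$-, $\theta$- and $\psi$-linear terms, introduction of a second replica to express squared averages via $q_{12}$ and $p_{12}$, cancellation of the $\langle z_\mu^2\rangle_t$ diagonal by the engineered counterterms, and recentering of the survivors into $\Delta\left[q_{12}p_{12}\right]$ and $\Delta\left[n^{2}\right]$. In fact your write-up makes explicit the integration-by-parts and diagonal-cancellation steps that the paper's proof compresses into the phrase ``introduce the order parameters,'' so nothing is missing.
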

 
\begin{proof}
By a direct evaluation 
\begin{eqnarray}
\frac{d}{dt}G_{\alpha,\beta,\rho}(t,J_{m})&=&\frac{1}{\beta N}\mathbb{E}\Biggr(\frac{\beta\left(1-\langle q\rangle\right)}{2}\sum_{\mu=2}^{K} \langle z_{\mu}^{2} \rangle_{t}-\frac{1}{2\sqrt{t}}\sqrt{\frac{\beta}{N}}\sum_{\mu=2}^{K}\sum_{i=1}^{N}\lambda_{i}^{\mu} \langle z_{\mu}\sigma_{i} \rangle_{t}+\\ \nonumber
&-&\frac{\beta N(1+\rho)}{2} \langle{n}^{2} \rangle_{t}+\beta\langle n\rangle(1+\rho)N \langle n \rangle_{t}+\frac{\sqrt{\beta\alpha\langle p\rangle}}{2\sqrt{1-t}}\sum_{i=1}^{N}\theta_{i} \langle\sigma_{i} \rangle_{t}+\frac{\sqrt{\beta\langle q\rangle}}{2\sqrt{1-t}}\sum_{\mu=2}^{K}\psi_{\mu}\langle z_{\mu} \rangle_{t}\Biggr),
\label{eq:streaming_step_der}
\end{eqnarray}
Now, we can introduce the order parameters in the above expression,  obtaining
\begin{eqnarray}
\frac{d}{dt}G_{\alpha,\beta,\rho}(t,J_{m})&=& \mathbb{E}\Biggr(\frac{\beta\alpha\left(1-\langle q\rangle\right)}{2}\left\langle p_{11}\right\rangle _{t}-\frac{\alpha\beta}{2}\left(\left\langle p_{11}\right\rangle _{t}-\left\langle p_{12}q_{12}\right\rangle _{t}\right)+\\ \nonumber
&-&\frac{1+\rho}{2}\left\langle {n}^{2}\right\rangle _{t}+(1+\rho)\langle n\rangle\left\langle n\right\rangle _{t}+\frac{\beta\alpha\langle p\rangle}{2}\left(1-\left\langle q_{12}\right\rangle _{t}\right)+\frac{\beta\alpha\langle q\rangle}{2}\sum_{\mu=2}^{K}\left(\left\langle p_{11}\right\rangle _{t}-\left\langle p_{12}\right\rangle _{t}\right)\Biggr).
\label{eq:streaming_step_wickeval-1}
\end{eqnarray}
By expressing in terms of mean values, variances and covariances  all the correlation functions contained in the above expression and by forcing all the $\Delta$'s to go to $0$ in the thermodynamic limit (i.e., by  assuming the RS ansatz), we obtain (\ref{eq:streaming-1}). \end{proof}
We are now ready to state the main
\begin{thm}
The replica symmetric free-energy of the HN implementing the supervised Hebbian learning (\ref{model}), in the infinite volume limit $N \to \infty$ and large dataset scenario $M \gg 1$, can be expressed in terms of control and order parameters as
\begin{equation}
\begin{aligned}f_{\alpha,\beta,\rho}= & \frac{\alpha}{2\beta}\log\left[1-\beta\left(1-\langle q\rangle\right)\right]-\frac{\alpha}{2}\frac{\langle q\rangle}{1-\beta\left(1-\langle q\rangle\right)}+\frac{\langle n\rangle^{2}}{2}(1+\rho)+\frac{\alpha}{2}\langle p\rangle(1-\langle q\rangle)\\
 & -\frac{1}{\beta}\log2-\frac{1}{\beta}\mathbb{E}_{\chi}\mathbb{E}_{\theta}\log\cosh \left.\Big(\beta \langle n\rangle \frac{1}{Mr}\sum_{a=1}^{M}\chi^{a}-J_{m}\beta+\sqrt{\alpha\beta\langle p\rangle}\theta\Big)  \right \vert_{J_m=0}.
\end{aligned}
\label{eq:rs_freeenergy}
\end{equation} 
\end{thm}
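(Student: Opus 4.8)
The plan is to exploit the fundamental theorem of calculus along the interpolation parameter $t$, writing
\begin{equation}
f_{\alpha,\beta,\rho} = G_{\alpha,\beta,\rho}(t=1,J_m=0) = G_{\alpha,\beta,\rho}(t=0,J_m=0) + \int_0^1 dt\, \frac{d}{dt}G_{\alpha,\beta,\rho}(t,J_m=0),
\end{equation}
and to evaluate the two contributions separately. The integral is immediate: by Lemma~\ref{lemma1} and the replica-symmetric ansatz of Definition~\ref{def:rsansatz}, the covariance terms $\Delta[q_{12}p_{12}]$ and $\Delta[n^2]$ vanish in the thermodynamic limit, so the integrand collapses to the $t$-independent constant $\frac{\langle n\rangle^2}{2}(1+\rho) + \frac{\alpha}{2}\langle p\rangle(1-\langle q\rangle)$, whose integral over $[0,1]$ reproduces precisely the third and fourth terms of \eqref{eq:rs_freeenergy}.

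The core of the proof is then the evaluation of the one-body term $G_{\alpha,\beta,\rho}(t=0,J_m)$. Setting $t=0$ in \eqref{eq:guerra_f}, both the quadratic self-interaction $t\frac{\beta N}{2}n^2(1+\rho)$ and the $\sigma$--$z$ coupling $\sqrt{t}\sqrt{\beta/N}\sum \lambda_i^\mu z_\mu\sigma_i$ drop out, so the Boltzmann weight factorizes completely into a product of independent one-dimensional Gaussian integrals over the $z_\mu$ and an independent single-site sum over the $\sigma_i$. First I would carry out the Gaussian integrals: each of the $K-1$ factors has the form $\int \tfrac{dz}{\sqrt{2\pi}}\exp[-\tfrac{1}{2}(1-\beta(1-\langle q\rangle))z^2 + \sqrt{\beta\langle q\rangle}\,\psi_\mu z]$, which (provided $1-\beta(1-\langle q\rangle)>0$) yields $[1-\beta(1-\langle q\rangle)]^{-1/2}\exp[\frac{\beta\langle q\rangle\psi_\mu^2}{2(1-\beta(1-\langle q\rangle))}]$; taking the logarithm, averaging over $\psi$ with $\mathbb{E}_\psi[\psi_\mu^2]=1$, and using $K/N\to\alpha$ produces exactly the first two terms of \eqref{eq:rs_freeenergy}.

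Next I would treat the spin part. Using Definition~\ref{def:The-order-parameters} together with the algebraic identity $(1+\rho)\,r/\mathcal{R} = 1/r$, which follows directly from $\mathcal R = r^2(1+\rho)$, the linear signal term becomes $\beta\langle n\rangle(1+\rho)Nn = \frac{\beta\langle n\rangle}{Mr}\sum_i \xi_i^1\sigma_i\sum_a\chi_i^{1a}$, so the $\sigma$-sum factorizes over sites and each site contributes $2\cosh$ of the effective field $\frac{\beta\langle n\rangle}{Mr}\xi_i^1\sum_a\chi_i^{1a} + \sqrt{\alpha\beta\langle p\rangle}\,\theta_i - J_m\beta\xi_i^1$. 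Exploiting the symmetry of the Gaussian $\theta_i$ under $\theta_i\mapsto\xi_i^1\theta_i$ and the evenness of $\cosh$, the factor $\xi_i^1$ can be pulled out of the full argument and discarded, leaving the $i$-independent $\log\cosh$ term of \eqref{eq:rs_freeenergy} after averaging over $\chi$ and $\theta$. Collecting the $z$-part, the spin-part, and the streaming integral, and setting $J_m=0$, assembles \eqref{eq:rs_freeenergy}.

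I expect the main obstacle to be the bookkeeping that converts the signal contribution into the correct $\log\cosh$ argument: one must keep the empirical average $\frac{1}{Mr}\sum_a\chi^a$ explicit rather than prematurely replacing it by its mean $r$, invoke the identity $\mathcal R = r^2(1+\rho)$ to cancel the prefactors, and justify the removal of $\xi_i^1$ through the change-of-variables symmetry argument. The only analytic subtlety elsewhere is ensuring the positivity $1-\beta(1-\langle q\rangle)>0$ needed for convergence of the $z$-Gaussians, which holds throughout the replica-symmetric region of interest.
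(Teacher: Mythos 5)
Your proposal is correct and follows essentially the same route as the paper's proof: the fundamental theorem of calculus along $t$, the streaming from Lemma~\ref{lemma1} collapsing to $\frac{\langle n\rangle^{2}}{2}(1+\rho)+\frac{\alpha}{2}\langle p\rangle(1-\langle q\rangle)$ under the RS ansatz, and the factorized evaluation of the Cauchy datum $G_{\alpha,\beta,\rho}(t=0,J_m)$ into the $z$-Gaussian and single-spin $\log\cosh$ contributions. Your explicit handling of the prefactor identity $\mathcal{R}=r^{2}(1+\rho)$ and of the removal of $\xi_i^1$ via the $\theta_i\mapsto\xi_i^1\theta_i$ symmetry makes rigorous two steps the paper performs implicitly.
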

\begin{proof}
Via the fundamental theorem of calculus we write
\begin{equation}
f_{\alpha,\beta,\rho}(J_m) = G_{\alpha,\beta,\rho}(t=1,J_{m})=G_{\alpha,\beta,\rho}(t=0,J_{m})+\int_{0}^{1}dt^{\prime}\frac{dG_{\alpha,\beta,\rho}}{dt}(t^{\prime},J_{m})\label{eq:calculus}
\end{equation}
so that we are left with the evaluation of two terms, the Cauchy datum $G_{\alpha,\beta,\rho}(t=0,J_{m})$, that can be faced by direct calculation, and the integral of the streaming of Guerra's functional, that was provided in Lemma \ref{lemma1}. It will be in this last computation that the RS assumption will allow us to obtain an explicit expression for the free energy in terms of the order parameters and thus solve for the model, as shown hereafter.
\newline
We start by evaluating $G_{\alpha,\beta,\rho}(t=0,J_{m})$:
\begin{eqnarray}\nonumber
G_{\alpha,\beta,\rho}(t=0,J_{m}) &=& -\frac{1}{\beta N}\mathbb{E}\log\Big\{\sum_{\{\sigma\}}\int\prod_{\mu=2}^{K}\left(\frac{dz_{\mu}}{\sqrt{2\pi}}\right)\exp\Big[-\frac{1-\beta\left(1-\langle q\rangle\right)}{2}\sum_{\mu=2}^{K}z_{\mu}^{2}+\sqrt{\beta\langle q\rangle}\sum_{\mu=2}^{K}\psi_{\mu}z_{\mu}\\
&+&\frac{\beta\langle n\rangle}{rM}\sum_{i,a=1}^{N,M}\xi_{i}^{1}\chi_{i}^{1a}\sigma_{i}+\sqrt{\alpha\beta\langle p\rangle}\sum_{i=1}^{N}\theta_{i}\sigma_{i}-J_{m}\beta\sum_{i=1}^{N}\xi_{i}^{1}\sigma_{i}\Big]\Big\},
\label{eq:one_body_1}
\end{eqnarray}
as this is a trivial one-body problem (i.e., its probability structure is completely factorized), we immediately can write
\begin{eqnarray}\nonumber
G_{\alpha,\beta,\rho}(t=0,J_{m}) &=& -\frac{1}{\beta N}\sum_{\mu=2}^{K}\mathbb{E}\log\Big\{\int\left(\frac{dz_{\mu}}{\sqrt{2\pi}}\right)\exp\Big[-\frac{1-\beta\left(1-\langle q\rangle\right)}{2}z_{\mu}^{2}+\sqrt{\beta\langle q\rangle}\psi_{\mu}z_{\mu}\Big]\Big\}\\
&-& \frac{1}{\beta N}\sum_{i=1}^{N}\mathbb{E}\log\Big\{\sum_{\sigma_{i}=\pm1}\exp\Big[\beta\langle n\rangle\frac{1}{Mr}\sum_{a=1}^{M}\xi_{i}^{1}\chi_{i}^{1a}\sigma_{i}+\sqrt{\alpha\beta\langle p\rangle}\theta_{i}\sigma_{i}-J_{m}\beta\xi_{i}^{1}\sigma_{i}\Big]\Big\},
\label{eq:one_body_2}
\end{eqnarray}
hence, by direct evaluation of the Gaussian integral in $z$ and sum in $\sigma$, we reach
\begin{eqnarray}\nonumber
G_{\alpha,\beta,\rho}(t=0,J_{m}) &=& -\frac{\alpha}{\beta}\mathbb{E}_{\psi}\log\Big\{\frac{1}{\sqrt{1-\beta\left(1-\langle q\rangle\right)}}\exp\left[\frac{\beta\langle q\rangle\psi^{2}}{2\left[1-\beta\left(1-\langle q\rangle\right)\right]}\right]\Big\}\\
&-&\frac{1}{\beta}\log2-\frac{1}{\beta}\mathbb{E}_{\chi}\mathbb{E}_{\theta}\log\Big\{\cosh\Big[\beta\langle n\rangle\frac{1}{Mr}\sum_{a=1}^{M}\chi^{a}-J_{m}\beta+\sqrt{\alpha\beta\langle p\rangle}\theta\Big]\Big\}
\label{eq:one_body_3}
\end{eqnarray}
where the first average can be carried out directly returning
\begin{eqnarray}\nonumber
G_{\alpha,\beta,\rho}(t=0,J_{m}) &=& \frac{\alpha}{2\beta}\log\left[1-\beta\left(1-\langle q\rangle\right)\right]-\frac{\alpha}{2\beta}\frac{\beta\langle q\rangle}{1-\beta\left(1-\langle q\rangle\right)}\\
&-& \frac{1}{\beta}\log2-\frac{1}{\beta}\mathbb{E}_{\chi}\mathbb{E}_{\theta}\log\cosh\left(\beta\langle n\rangle\frac{1}{Mr}\sum_{a=1}^{M}\chi^{a}-J_{m}\beta+\sqrt{\alpha\beta\langle p\rangle}\theta\right).
\label{eq:one_body_4}
\end{eqnarray}
Now, by using Eq.~\eqref{eq:streaming-1} into the initial scheme \eqref{eq:calculus} we reach
\begin{equation}
\begin{aligned}f_{\alpha,\beta,\rho}(J_{m})= & \frac{\alpha}{2\beta}\log\left[1-\beta\left(1-\langle q\rangle\right)\right]-\frac{\alpha}{2\beta}\frac{\beta\langle q\rangle}{1-\beta\left(1-\langle q\rangle\right)}+\frac{\langle n\rangle^{2}}{2}(1+\rho)+\frac{\alpha}{2}\langle p\rangle(1-\langle q\rangle)+\Delta\left[q_{12}p_{12}\right]-\frac{1}{2}\Delta\left[n_{1}^{2}\right]\\
 & -\frac{1}{\beta}\log2-\frac{1}{\beta}\mathbb{E}_{\chi}\mathbb{E}_{\theta}\log\cosh\left(\beta\langle n\rangle\frac{1}{M}\sum_{a=1}^{M}\chi^{a}-J_{m}\beta+\sqrt{\alpha\beta\langle p\rangle}\theta\right),
\end{aligned}
\end{equation}
such that, finally, sending $N\to\infty$, under the RS assumption, at $J_m=0$ we obtain Eq.~\eqref{eq:rs_freeenergy}.
\end{proof}
\begin{cor}
In the infinite volume limit $N \to \infty$ and large dataset scenario $M \gg 1$, the replica-symmetric self-consistent equations for the evolution of the order parameters of the HN implementing the supervised Hebbian learning \eqref{model} in the space of the tuneable parameters are
\begin{equation}
\begin{aligned}\langle p\rangle & =\frac{\beta\langle q\rangle}{\left[1-\beta\left(1-\langle q\rangle\right)\right]^{2}},\\
\langle n\rangle & =\frac{1}{1+\rho}\mathbb{E}_{\chi}\mathbb{E}_{\theta}\frac{1}{Mr}\sum_{a=1}^{M}\chi^{a}\tanh\left(\beta\langle n\rangle\frac{1}{Mr}\sum_{a=1}^{M}\chi^{a}+\sqrt{\alpha\beta\langle p\rangle}\theta\right),\\
\langle q\rangle & =\mathbb{E}_{\chi}\mathbb{E}_{\theta}\tanh^{2}\left(\beta\langle n\rangle\frac{1}{Mr}\sum_{a=1}^{M}\chi^{a}+\sqrt{\alpha\beta\langle p\rangle}\theta\right),\\
\langle m\rangle & =\mathbb{E}_{\chi}\mathbb{E}_{\theta}\tanh\left(\beta\langle n\rangle\frac{1}{Mr}\sum_{a=1}^{M}\chi^{a}+\sqrt{\alpha\beta\langle p\rangle}\theta\right).
\end{aligned}
\label{eq:sce_Mfinite}
\end{equation}
\end{cor}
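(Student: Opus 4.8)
The plan is to read off the self-consistency equations \eqref{eq:sce_Mfinite} as the stationarity conditions of the replica-symmetric free energy \eqref{eq:rs_freeenergy}, treating $\langle q\rangle$, $\langle p\rangle$ and $\langle n\rangle$ as independent variational parameters and imposing $\partial f/\partial\langle q\rangle=\partial f/\partial\langle p\rangle=\partial f/\partial\langle n\rangle=0$; the equation for $\langle m\rangle$ is obtained separately by differentiating the free energy with respect to the auxiliary field $J_m$ and setting $J_m=0$, using $\langle m\rangle=\left.\frac{dG_{\alpha,\beta,\rho}(t=1,J_m)}{dJ_m}\right|_{J_m=0}$. The justification that the physical order parameters sit at a stationary point is the standard feature of Guerra's scheme: the $\Delta$-terms in Lemma \ref{lemma1} vanish in the RS thermodynamic limit, so that the free energy is extremal in the trial values precisely when they coincide with the corresponding Boltzmann averages.

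First I would carry out $\partial f/\partial\langle q\rangle=0$. Writing $D:=1-\beta(1-\langle q\rangle)$, the derivatives of the two $\alpha$-terms $\frac{\alpha}{2\beta}\log D$ and $-\frac{\alpha}{2}\langle q\rangle/D$ partially cancel, leaving $\frac{\alpha}{2}\beta\langle q\rangle/D^{2}$; combined with the $-\frac{\alpha}{2}\langle p\rangle$ arising from $\frac{\alpha}{2}\langle p\rangle(1-\langle q\rangle)$ this immediately yields the first line of \eqref{eq:sce_Mfinite}, $\langle p\rangle=\beta\langle q\rangle/D^{2}$. This step is pure algebra.

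The genuinely delicate step is $\partial f/\partial\langle p\rangle=0$. Differentiating the $\cosh$-term produces a factor $\mathbb{E}_{\theta}[\theta\tanh(\,\cdot\,)]$, which must be simplified by Gaussian integration by parts (Stein's lemma, i.e. Wick's theorem for the standard Gaussian $\theta$), $\mathbb{E}_{\theta}[\theta F(\theta)]=\mathbb{E}_{\theta}[F'(\theta)]$. Applied to $F=\tanh\big(\beta\langle n\rangle\frac{1}{Mr}\sum_{a}\chi^{a}+\sqrt{\alpha\beta\langle p\rangle}\,\theta\big)$ this turns the term into $\sqrt{\alpha\beta\langle p\rangle}\,\mathbb{E}_{\theta}[1-\tanh^{2}(\,\cdot\,)]$; the $1/\sqrt{\langle p\rangle}$ coming from $\partial_{\langle p\rangle}\sqrt{\alpha\beta\langle p\rangle}$ cancels the $\sqrt{\langle p\rangle}$ produced by the integration by parts, and balancing against $\frac{\alpha}{2}(1-\langle q\rangle)$ from the bilinear term gives $\langle q\rangle=\mathbb{E}_{\chi}\mathbb{E}_{\theta}\tanh^{2}(\,\cdot\,)$, the third line of \eqref{eq:sce_Mfinite}. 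I expect this integration-by-parts bookkeeping, together with carrying the frozen $\chi$-average through untouched, to be the main obstacle.

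The remaining two equations follow quickly. Extremizing over $\langle n\rangle$ balances $\langle n\rangle(1+\rho)$ from $\frac{\langle n\rangle^{2}}{2}(1+\rho)$ against the chain-rule derivative $-\mathbb{E}_{\chi}\mathbb{E}_{\theta}\frac{1}{Mr}\sum_{a}\chi^{a}\tanh(\,\cdot\,)$ of the $\cosh$-term, yielding the second line of \eqref{eq:sce_Mfinite}. Finally, only the $\cosh$-term carries explicit $J_m$-dependence, through the shift $-J_m\beta$ in its argument; differentiating $-\frac{1}{\beta}\log\cosh(\,\cdot\,-J_m\beta)$ and evaluating at $J_m=0$ produces $\mathbb{E}_{\chi}\mathbb{E}_{\theta}\tanh(\,\cdot\,)$, i.e. the last line $\langle m\rangle=\mathbb{E}_{\chi}\mathbb{E}_{\theta}\tanh(\,\cdot\,)$, where, since the order parameters are held at their stationary values, only this explicit dependence contributes.
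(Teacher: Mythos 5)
Your proposal is correct and follows exactly the paper's own route: the paper likewise obtains the first three equations by extremizing the replica-symmetric free energy \eqref{eq:rs_freeenergy} with respect to $\langle n\rangle$, $\langle q\rangle$, $\langle p\rangle$, and the fourth by imposing $\partial f_{\alpha,\beta,\rho}(J_m)/\partial J_m = \langle m\rangle$ and then sending $J_m\to 0$. The only difference is that the paper compresses the computation into ``with some algebra,'' whereas you explicitly carry it out --- including the Gaussian integration by parts on $\theta$ and the cancellation of the $\sqrt{\langle p\rangle}$ factors, which you correctly identify as the one nontrivial step.
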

\begin{proof}
Let us consider the explicit expression for the quenched free-energy given in \eqref{eq:rs_freeenergy} and extremize w.r.t. the order parameters: 
\begin{equation}
\frac{\partial f_{\alpha, \beta, \rho}(J_{m})}{\partial\langle n_{1}\rangle}=\frac{\partial f_{\alpha, \beta, \rho}(J_{m})}{\partial\langle q\rangle}=\frac{\partial f_{\alpha, \beta, \rho}(J_{m})}{\partial\langle p\rangle}=0,
\end{equation}
we also pose, by construction,
\begin{equation}
\frac{\partial f_{\alpha, \beta, \rho}(J_{m})}{\partial J_{m}}=\langle m\rangle.
\end{equation}
With some algebra and by sending $J_{m}\to0$, we obtain eqs. \eqref{eq:sce_Mfinite}.
\end{proof}
\begin{cor}
In the large dataset scenario $M \gg 1$, the replica-symmetric self-consistent equations \eqref{eq:sce_Mfinite} for the HN \eqref{model} implementing the supervised Hebbian learning can be expressed as
\begin{equation}
\begin{aligned}\langle n\rangle & =\frac{1}{1+\rho}\frac{\langle m\rangle}{1-\frac{\rho}{1+\rho}\beta(1-\langle q\rangle)},\\
\langle q\rangle & =\mathbb{E}_{z}\tanh^{2}\left[\beta\langle n\rangle+z\beta\sqrt{\langle n\rangle^{2}\rho+\frac{\alpha\langle q\rangle}{\left[1-\beta\left(1-\langle q\rangle\right)\right]^{2}}}\right],\\
\langle m\rangle & =\mathbb{E}_{z}\tanh\left[\beta\langle n\rangle+z\beta\sqrt{\langle n\rangle^{2}\rho+\frac{\alpha\langle q\rangle}{\left[1-\beta\left(1-\langle q\rangle\right)\right]^{2}}}\right].
\end{aligned}
\label{eq:sce}
\end{equation}
\end{cor}
\begin{proof}
Note that, for $M \gg 1$, we can write
\begin{equation}
\frac{1}{M}\sum_{a=1}^{M}\chi^{a}\sim r+\lambda\sqrt{\frac{1-r^{2}}{M}}~\mathrm{with}~\lambda\sim\mathcal{N}(0,1).
\end{equation}
First, let us tackle $\langle n_{1}\rangle$ as follows
\begin{eqnarray}
\nonumber
(1+\rho)\langle n\rangle &=& \mathbb{E}_{\lambda}\mathbb{E}_{\theta}\left(1+\sqrt{\rho}\lambda\right)\tanh\left(\beta\langle n\rangle\left(1+\sqrt{\rho}\lambda\right)+\sqrt{\alpha\beta\langle p\rangle}\theta\right)=\\
\nonumber
&=& \langle m\rangle+\sqrt{\rho}\mathbb{E}_{\lambda}\mathbb{E}_{\theta}\partial_{\lambda}\tanh\left[\beta\langle n\rangle\left(1+\sqrt{\rho}\lambda\right)+\sqrt{\alpha\beta\langle p\rangle}\theta\right]=\\
&=& \langle m\rangle+\beta\langle n\rangle\rho(1-\langle q\rangle)
\label{eq:sce_n_Minf}
\end{eqnarray}
then we move on to $\langle m\rangle$
\begin{eqnarray}
\nonumber
\langle m\rangle &=& \mathbb{E}_{\lambda}\mathbb{E}_{\theta}\tanh\left[\beta\langle n\rangle\left(1+\sqrt{\rho}\lambda\right)+\sqrt{\alpha\beta\langle p\rangle}\theta\right]=\\
\nonumber
&=& \mathbb{E}_{\lambda}\mathbb{E}_{\theta}\tanh\left[\beta\langle n\rangle+\sqrt{\left(\beta\langle n\rangle\right)^{2}\rho}\lambda+\sqrt{\alpha\beta\langle p\rangle}\theta\right]=\\
&=& \mathbb{E}_{z}\tanh\left[\beta\langle n\rangle+z\sqrt{\left(\beta\langle n\rangle\right)^{2}\rho+\alpha\beta\langle p\rangle}\right], 
\label{eq:sce_m_Minf}
\end{eqnarray}
and, analogously for $\langle q\rangle$, we obtain \eqref{eq:sce}.
\end{proof}
As a last manipulation we work out explicitly the self-consistencies in the zero fast-noise limit $\beta \to \infty$ as this allows a characterization of the ground state of the network too; results are provided in the next
\begin{cor}
The zero-temperature $\beta \to \infty$, large dataset scenario $M\gg 1$ and infinite volume limit $N \to \infty$ of the self-consistencies for the order parameters of the Hopfield model implementing the supervised Hebbian learning \eqref{model} read as
\begin{equation}
\begin{aligned}\langle m\rangle & =\langle n\rangle\left[1+\rho(1-\Delta)\right],\\
G & =\sqrt{2\langle n\rangle^{2}\rho+2\frac{\alpha}{\left(1-\Delta\right)^{2}}},\\
\Delta & =\frac{1}{G}\partial\mathrm{erf}\left(\frac{\langle n\rangle}{G}\right),\\
\langle n\rangle & =\frac{1}{1+\rho(1-\Delta)}\mathrm{erf}\left(\frac{\langle n\rangle}{G}\right),
\end{aligned}
\label{eq:sce_Tzero}
\end{equation}
where 
\begin{equation}
\begin{aligned}\mathrm{erf}(x):=\int_{0}^{x}\frac{2}{\sqrt{\pi}}\exp(-t^{2})\,dt,\\
\mathrm{\partial erf}(x):=\frac{2}{\sqrt{\pi}}\exp(-x^{2}).
\end{aligned}
\end{equation}
\end{cor}
\begin{proof}
Following the strategy already used by AGS (e.g., see \cite{CKS}) we pose $\Delta:=\beta(1-\langle q\rangle)$ and we use this definition in Eq.~\eqref{eq:sce} obtaining
\begin{eqnarray}
\langle n\rangle &=& \frac{1}{1+\rho}\frac{\langle m\rangle}{1-\frac{\rho}{1+\rho}\Delta},\\
\Delta &=& \beta\left\{ 1-\mathbb{E}_{z}\tanh^{2}\left[\beta\langle n\rangle+z\beta\sqrt{\langle n\rangle^{2}\rho+\frac{\alpha\left(1-\frac{\Delta}{\beta}\right)}{\left[1-\Delta\right]^{2}}}\right]\right\}, \\
\langle m\rangle &=& \mathbb{E}_{z}\tanh\left[\beta\langle n\rangle+z\beta\sqrt{\langle n\rangle^{2}\rho+\frac{\alpha\left(1-\frac{\Delta}{\beta}\right)}{\left[1-\Delta\right]^{2}}}\right].
\label{eq:sce-1}
\end{eqnarray}
Then, we add a field $\beta x$ inside the hyperbolic tangents,
as this allows to rewrite $\Delta$ as $\frac{\partial m(x)}{\partial x}$ and thus
\begin{eqnarray}
\langle n\rangle &=& \frac{1}{1+\rho}\frac{\langle m\rangle}{1-\frac{\rho}{1+\rho}\Delta},\\
\Delta &=& \frac{\partial m(x)}{\partial x},\\
\langle m\rangle(x) &=& \mathbb{E}_{z}\tanh\left[\beta x+\beta\langle n\rangle+z\beta\sqrt{\langle n\rangle^{2}\rho+\frac{\alpha\left(1-\frac{\Delta}{\beta}\right)}{\left[1-\Delta\right]^{2}}}\right],
\label{eq:sce_Tzero_2}
\end{eqnarray}
such that, by taking the limit $\beta\to\infty$ and afterwards $x\to0$,
we get \eqref{eq:sce_Tzero}.
\end{proof}
The numerical solution of the self-consistent equations \eqref{eq:sce_Mfinite} were used to draw the phase diagram presented in the main text, while here we explicitly show the numerical solution for the magnetization in Fig.~\ref{Bruttone} $a$ and results obtained by MC simulations in Fig.~\ref{Bruttone} $b$-$c$, and discuss their consistency. In particular, in Fig.~\ref{Bruttone} $b$-$c$ we compare the outcomes for the estimates of the magnetization $\langle m \rangle$ and its derivative $\partial \langle m \rangle / \partial \rho$ w.r.t. $\rho$ that plays as a susceptibility of system performance versus the dataset entropy. Notice that $\langle m \rangle$ exhibits a flex at a point $\rho_c$ that corresponds to the peak in the susceptibility and this point gets closer and closer to the transition point derived analytically. Further, by increasing the network size $N$, peaks in the susceptibility get sharper. Therefore, these sizes can nicely spot the existence and the location of the phase transition, on the other hand, the first-order nature of the transition is well evidenced by the numerical solution of self-consistencies, as shown in Fig.~\ref{Bruttone} $a$.

\begin{figure}[tb]
\noindent \begin{centering}
\includegraphics[width=0.9\textwidth]{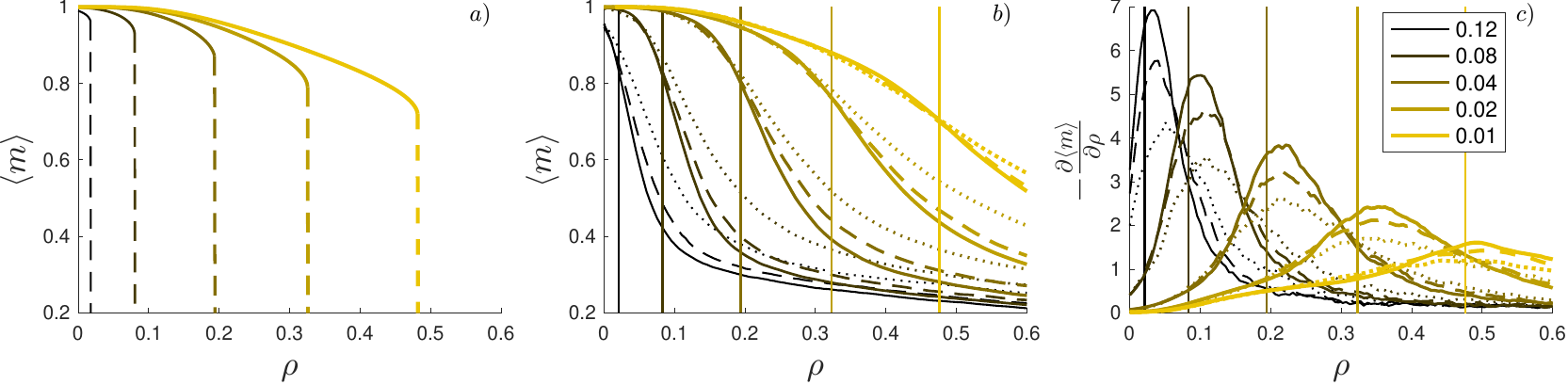}
\par\end{centering}
\caption{Theoretical solution and finite-size scaling by MC simulations. Panel $a$: numerical solution of the self-consistent equation \eqref{eq:sce} for the archetype magnetization versus $\rho$. Notice that, as expected for first-order transitions, the order parameter exhibits a first-order singularity and this happens at a value $\rho_c(\alpha,\beta)$ that is the one highlighted by vertical lines in panels $b$ and $c$. Panel $b$: expected magnetization for the archetype versus the entropy of the dataset resulting from MC simulations. Panel $c$: Susceptibility w.r.t. $\rho$ versus the dataset entropy resulting from MC simulations. For different choices of $\alpha$, corresponding to curves of different colors as explained by the legend, we varied the network size $N$ and the network load $K$ from $N \times K = 3000$ (solid line) to $N \times K = 14000$ (dotted line) and to $N \times K = 43000$ (dashed line); vertical lines represent the transition points $\rho_c(\alpha,\beta)$ as predicted theoretically. Note that, in panel $b$, the curves exhibit a flex approximately corresponding to the intersection with the vertical lines and in panel $c$ the curves peak approximately at the intersection with the vertical lines, and these matches get sharper as the network size is increased. Simulations and numerical solutions are obtained for $\beta=10$. }\label{Bruttone}
\end{figure}

\section{RBM training} \label{sec:RBM}
In this section we briefly address machine learning by one of its classical architecture, that is the RBM, in order to derive its supervised learning rules and compare their learning skills with those shown by the HN described in the previous sections, see Fig.~\ref{fig:cartoon}. For the sake of simplicity here, as standard in machine learning investigations, we fix $\beta=1$, with no loss of generality as the fast noise can always be re-introduced in the network by rescaling the couplings. In a statistical-mechanics framework RBMs are nothing but bipartite spin-glasses \cite{Bipartite} and, from this perspective we give the following
\begin{defn}
We consider a RBM built of two-layers made of, respectively, $N$ binary neurons $\sigma_i$, $i \in (1,...,N)$ and $K$ real-valued neurons $z_{\mu}$, $\mu \in \{1,...,K\}$ equipped with a standard Gaussian prior and whose Hamiltonian reads as
\begin{equation}
\label{eq:RBM_model}
\mathcal{H}^{\textrm{\tiny{RBM}}}(\boldsymbol{\sigma},\boldsymbol{z}|\boldsymbol{W})=-\frac{1}{\sqrt{N}}\sum_{i,\mu=1}^{N,K}W_{i\mu}\sigma_{i}z_{\mu}
\end{equation}
where $\boldsymbol W \in \mathbb{R}^{N\times K}$ is the matrix of weights among the two layers, while there are no interactions within the same layer (whence the {\em restriction}). 
\end{defn}
\begin{defn}
The partition function of the RBM \eqref{eq:RBM_model} is
\begin{equation}
\mathcal{Z}^{\textrm{\tiny{RBM}}}(\boldsymbol{W})=\sum_{\{\sigma\}} \int \prod_{\mu=1}^K dz_{\mu} \exp\left(\sum_{i,\mu=1}^{N,K}W_{i\mu}\sigma_{i}z_{\mu}\right) \exp\left(-\frac12 \sum_{\mu=1}^K z_{\mu}^2\right), \label{linp}
\end{equation}
where the factor $\propto e^{-\left(\frac12 \sum_{\mu=1}^K z_{\mu}^2\right)}$ is the Gaussian prior for the $z$ neurons.
The joint probability density related to the partition function \eqref{linp}
assumes the following form
\begin{equation}
\mathcal{P}(\boldsymbol{\sigma},\boldsymbol{z}|\boldsymbol{W})=\frac{1}{\mathcal{Z}^{\textrm{\tiny{RBM}}}(\boldsymbol{W})}\exp\left(\sum_{i,\mu=1}^{N,K}W_{i\mu}\sigma_{i}z_{\mu} - \sum_{\mu=1}^K \frac{z_{\mu}^2}{2}\right).\label{prob}
\end{equation}
\end{defn}
%
%
\begin{figure}[tb]
\begin{centering}
\includegraphics[width=0.45\textwidth]{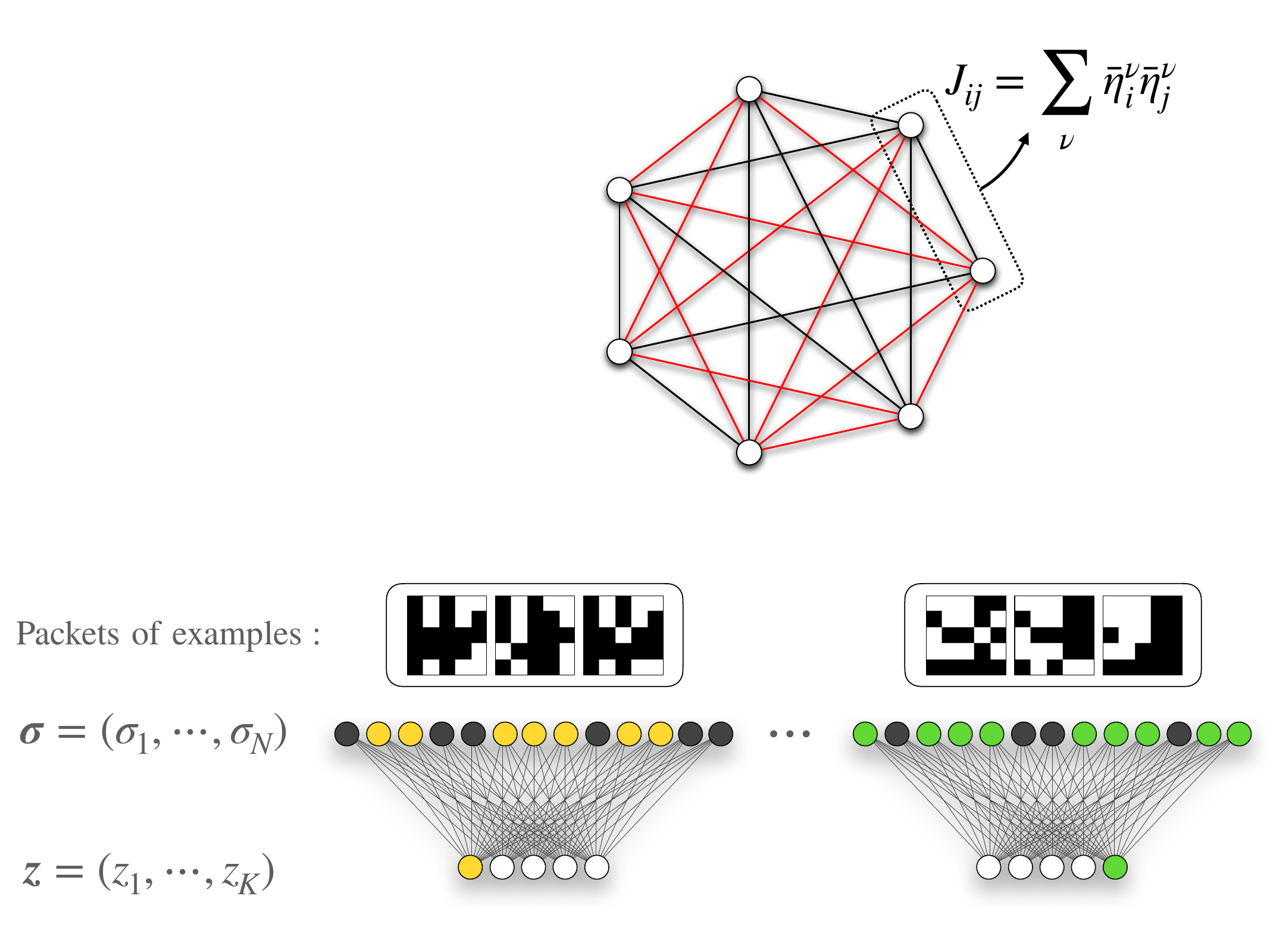}
\par\end{centering}
\caption{Schematic representation of the models investigated in this work. In the upper part of this figure we show the HN implementing the supervised Hebbian learning discussed in Secs.~\ref{Signal2Noise} and \ref{sec:SM}. Notice that the coupling between each neuron pair can be written as the scalar product between the empirical averages of the blocks $\boldsymbol \eta_i^{\mu} = (\eta_i^{\mu 1}, ..., \eta_i^{\mu M})$. In the lower part of the figure we show a RBM which is trained according to the grandmother-cell protocol: while the examples pertaining to the first archetype are presented and the visible neurons are clamped as $\boldsymbol \sigma = \boldsymbol \eta^{1 a}$ for $a=1,...,M$, the first hidden neuron is clamped as active, while the remaining hidden neurons are set quiescent. The same operation is repeated for the other group pf examples.}\label{fig:cartoon}
\end{figure}
%
As a learning criterion, we minimize the Kullback-Leibler cross-entropy $D_{KL}(\mathcal{Q} \Vert \mathcal{P})$ between the model distribution $\mathcal{P}$ and the target distribution $\mathcal Q$, that is the empirical distribution of the experimental dataset.
For the moment we retain an arbitrary expression for $\mathcal Q$, depending on both $\boldsymbol \sigma$ and $\boldsymbol z$, and then we shall discuss specific choices for unsupervised and supervised protocols.
The KL cross-entropy is defined as
\begin{equation}
D_{KL}(\mathcal Q \Vert \mathcal{P})=-\sum_{\{\boldsymbol \sigma, \boldsymbol z\}} \mathcal Q(\boldsymbol \sigma, \boldsymbol z)\log\frac{\mathcal{P}(\boldsymbol{\sigma},\boldsymbol{z}|\boldsymbol{W})}{ \mathcal Q(\boldsymbol{\sigma},\boldsymbol{z})}\label{dist}
\end{equation}
and its minimization takes place by properly tuning the weights $\boldsymbol{W}$ in order for $\mathcal{P}(\boldsymbol{\sigma},\boldsymbol{z}|\boldsymbol{W})$ to be as ``close'' as possible to $\mathcal Q(\boldsymbol{\sigma},\boldsymbol{z})$. This can be accomplished by the gradient descent method that yields the following iterative rule
for $\boldsymbol{W}$
\begin{equation}
W_{i \mu}^{(n+1)}=W_{i \mu}^{(n)}-\epsilon\frac{dD_{KL}(\mathcal Q \Vert \mathcal{P})}{dW_{i \mu}},\label{iteration}
\end{equation}
where $\epsilon>0$ is the {\em learning rate}.
\begin{defn}
Given an observable $O(\boldsymbol{\sigma},\boldsymbol{z})$ depending
on the variables $\{\boldsymbol{\sigma},\boldsymbol{z}\}$, we indicate
with $\langle\cdot\rangle_{\textrm{free}}$ the standard average with respect
to the model probability density \eqref{prob}
\begin{equation}
\langle O\rangle_{\textrm{free}}:=\sum_{\{\boldsymbol \sigma, \boldsymbol z\}}\mathcal{P}(\boldsymbol{\sigma},\boldsymbol{z}|\boldsymbol{W})O(\boldsymbol{\sigma},\boldsymbol{z}),
\end{equation}
and with $\langle\cdot\rangle_{\textrm{clamped}}$ the average with respect
to the empirical probability density $\mathcal Q(\boldsymbol{\sigma},\boldsymbol{z})$
\begin{equation}
\langle O\rangle_{\textrm{clamped}}:=\sum_{\{\boldsymbol \sigma, \boldsymbol z \}} \mathcal Q(\boldsymbol{\sigma},\boldsymbol{z})O(\boldsymbol{\sigma},\boldsymbol{z}).
\end{equation}
\end{defn}
\begin{rem}
By using the fact that
\begin{equation}
\frac{dD_{KL}(\mathcal Q \Vert \mathcal{P})}{dW_{i \mu}}=\langle z_{\mu}\sigma_{i}\rangle_{\textrm{free}}-\langle z_{\mu}\sigma_{i}\rangle_{\textrm{clamped}},
\end{equation}
iteration \eqref{iteration} can be written as
\begin{equation}\label{sost1}
W_{i \mu}^{(n+1)}=W_{i \mu}^{(n)}+\epsilon\left(\langle z_{\mu}\sigma_{i}\rangle_{\textrm{clamped}}-\langle z_{\mu}\sigma_{i}\rangle_{\textrm{free}}\right).
\end{equation}
\end{rem}
As for the clamped average, we outline the following training modes:
\begin{defn}
\textbf{Supervised setting (grandmother-cell ansatz)}. 
Given a training set $\mathcal S := \{\boldsymbol \eta^{\mu a} \}_{\mu=1,...,K}^{a=1,...,M}$ generated according to Definition \eqref{def:dataset}, where for each item we are aware of the label $\mu$,
we envisage the following supervised-learning protocol to clamp the $\sigma$'s and the $z$'s variables: for every new example presented to the network, say $\boldsymbol \eta^{\nu a}$, we set $\boldsymbol \sigma = \boldsymbol \eta^{\nu a}$ and $\boldsymbol z = \boldsymbol z^{(\nu)}$, where $\boldsymbol z^{(\nu)}$ has entries $z^{(\nu)}_{\mu} = \delta_{\mu \nu}$. 
The target distribution therefore reads as
\begin{equation}
\mathcal Q^{\text{sup}}(\boldsymbol \sigma , \boldsymbol z) = \sum_{\mu,a} \delta (\boldsymbol \eta^{\mu a} - \boldsymbol \sigma) \delta (\boldsymbol z^{(\mu)} - \boldsymbol z)
\end{equation}
and the (batch) clamped-average appearing in the learning rule \eqref{sost1} reads as
\begin{equation}
\left\langle \sigma_{i}z_{\mu}\right\rangle _{\textrm{clamped}} \to \langle \sigma_i z_{\mu} \rangle_{\boldsymbol \sigma \& \boldsymbol z} = \frac{1}{M}\sum_{a=1}^{M}\eta_{i}^{\mu a} = \bar{\eta}_i^{\mu},
\end{equation}
where in the bracket subscript we highlighted that both kinds of degrees of freedom are clamped.
Thus, the iterative scheme \eqref{iteration} turns out to be
\begin{equation}\label{RBMsup}
W_{i,\mu}^{n+1} = W_{i,\mu}^{n} + \epsilon \left( \langle \sigma_i z_{\mu} \rangle_{\boldsymbol \sigma \& \boldsymbol z} -  \langle \sigma_i z_{\mu} \rangle_{\textrm{free}} \right).
\end{equation}
\end{defn}
\begin{defn}
\textbf{Unsupervised setting}. Given a training set $\mathcal S := \{\boldsymbol \eta^{\mu a} \}_{\mu=1,...,K}^{a=1,...,M}$ generated according to Definition \eqref{def:dataset}, where for each item the label $\mu$ is not disclosed, we envisage the following unsupervised-learning protocol to clamp the $\sigma$'s variables: for every new example presented to the network, say $\boldsymbol \eta^{\nu a}$, we set $\boldsymbol \sigma = \boldsymbol \eta^{\nu a}$, while the machine is left free to arrange its hidden degrees of freedom, namely it simply learns the statistical properties of the set of examples.
The target distribution therefore reads as
\begin{equation}
\mathcal Q^{\text{unsup}}(\boldsymbol \sigma) = \sum_{\mu,a} \delta (\boldsymbol \eta^{\mu a} - \boldsymbol \sigma), 
\end{equation}
and the (batch) clamped-average appearing in the learning rule \eqref{iteration} reads as
\begin{equation}
\left\langle \sigma_{i}z_{\mu}\right\rangle _{\textrm{clamped}} \to \langle \sigma_i z_{\mu} \rangle_{\boldsymbol \sigma}.
\end{equation}
Thus, the iterative scheme \eqref{iteration} turns out to be
\begin{equation}\label{RBMunsup}
W_{i,\mu}^{n+1} = W_{i,\mu}^{n} + \epsilon \left( \langle \sigma_i z_{\mu} \rangle_{\boldsymbol \sigma} -  \langle \sigma_i z_{\mu} \rangle_{\textrm{free}} \right).
\end{equation}
\end{defn}

As for the free average, it can be estimated by MC simulations or similar computational routes; here we followed the ``Persistent Contrastive Divergence'' criterion, see e.g., \cite{pers}. 
In particular, at the beginning of the simulation (in the first epoch)
the set of variables $(\boldsymbol \sigma, \boldsymbol z)$ is sampled randomly (i.e., $\sigma_i \in \{-1, +1\}$ is sampled by a symmetric Bernoulli distribution with parameter $1/2$ and $z_{\mu} \in \mathbb R$ is sampled by a standard Gaussian distribution, i.i.d. for any $i =1,...,N$ and $\mu =1,...,K$), then
they are updated for 10 steps by applying an alternating Gibbs sampling  \cite{Monasson}: 
$\boldsymbol{\sigma}$ is drawn from the following probability density
\begin{equation}
\mathcal{P}(\boldsymbol \sigma | \boldsymbol z, \boldsymbol W)=\frac{\mathcal{P}(\boldsymbol \sigma, \boldsymbol z| \boldsymbol W)}{\sum_{\{\boldsymbol \sigma\}}\mathcal{P}(\boldsymbol \sigma, \boldsymbol z| \boldsymbol W)}=\prod_{i=1}^{N}\frac{1}{2}\left[1+\tanh\left(\sum_{\mu=1}^{K}z_{\mu}W_{i \mu}\sigma_{i}\right)\right],
\end{equation}
and, analogously, $\boldsymbol{z}$ is drawn from
\begin{equation}
\mathcal{P}(\boldsymbol z| \boldsymbol \sigma, \boldsymbol W)=\frac{\mathcal{P}(\boldsymbol \sigma, \boldsymbol z| \boldsymbol W)}{\int d \boldsymbol z ~ \mathcal{P}(\boldsymbol \sigma, \boldsymbol z | \boldsymbol W)}=\prod_{\mu=1}^{K}\frac{1}{\sqrt{2\pi}}\exp\left[-\frac12 \left(z_{\mu}-\sum_{i=1}^{N}W_{i \mu}\sigma_{i}\right)^2 \right].
\end{equation}
Computationally, it is convenient to realize the sampling according
to the simplified rules:
\begin{eqnarray}
\sigma_{i}^{(t+1)} & = & \mathrm{sign}\left [ \tanh\left(\sum_{\mu=1}^{K}W_{i \mu}z_{\mu}^{(t)}\right)+\zeta_{i}^{t+1}\right] \quad\zeta_{i}^{t+1}\sim \mathcal U(-1,+1),\label{eq:mciterationBM}\\
z_{\mu}^{(t+1)} & = & \sum_{i=1}^{N}W_{i \mu}\sigma_{i}^{(t+1)} +\zeta_{\mu}^{t+1} \quad\quad\quad\quad\quad\quad\quad\zeta_{\mu}^{t+1}\sim \mathcal N(0,1).\label{eq:mciterationBM2}
\end{eqnarray}
At the end of each epoch averages are performed and the weights are updated. 

The learning rate $\epsilon$ is chosen according to
the Robbins-Monro scheme \cite{Monro}, namely $\epsilon_{n}=\frac{50}{100+n}$ that is
$\sum_{n=1}^{\infty}\epsilon_{n}=\infty,\sum_{n=1}^{\infty}\epsilon_{n}^{2}<\infty$.
Finally, the magnetization of the archetype $\langle m\rangle$ is cheaply
evaluated via Hinton's one-step MC approximation
\begin{equation}
\label{eq:1-one}
\langle m\rangle=\frac{1}{K}\sum_{\mu=1}^{K}\left|\frac{1}{N}\sum_{i=1}^{N}\xi_{i}^{\mu}\tanh\left(W_{i \mu}\right)\right|.
\end{equation}
Even if this is only an approximation of the magnetization, it has
very low variance and computational cost and still captures the essential
behavior of the model.

Results for supervised learning of structureless datasets are presented in Fig.~$2$ in the main text, where we showed that, remarkably, the values of $\alpha, \rho$ that ensure a successful information-processing are the same for both biological learning (i.e., by the HN via Hebb's rule) and artificial learning 
(i.e., by the RBM via contrastive divergence).

As evidenced in \cite{Leonelli,ALM-AMC2022,FAAB-IEEE} for structureless datasets, by training the RBM following the protocols described above one ends up with weight distributions that are sharply peaked and, in particular, the expected value for $W_{i \mu}$ corresponds to $\bar{\eta}_i^{\mu}$, therefore, in this case, the setting $W_{i \mu} = \bar{\eta}_i^{\mu}$ corresponds to a trained machine that is able to reconstruct archetypes and to generate new examples. 
To see this, one can look at the conditional probabilities: $\mathcal P_{\beta} (\boldsymbol z | \boldsymbol \sigma = \boldsymbol \eta^{\nu a}, \boldsymbol W) \propto \prod_{\mu} e^{-\beta (z_{\mu} - (\boldsymbol W \cdot \boldsymbol \eta^{\nu a})_{\mu})^2/2}$ should be peaked at $ \boldsymbol z^{(\nu)}$, while $\mathcal P_{\beta} (\boldsymbol \sigma | \boldsymbol z = \boldsymbol z^{(\nu)}, \boldsymbol W) \propto \prod_i e^{\beta \sigma_i W_{i \nu}}$ should be peaked at $\boldsymbol \xi^{\nu}$, therefore $\boldsymbol W_{\mu}$ should be simultaneously orthogonal to $\boldsymbol \eta^{\nu a}$ for any $\nu \neq \mu$ and parallel to $\boldsymbol \xi^{\mu}$ for any $\mu$, so the optimal solution is given by $\boldsymbol W = \bar{\boldsymbol \eta}$.
Notably, in the structured case, the setting $W_{i \mu} = \bar{\eta}_i^{\mu}$ does not correspond to an accomplished training but still it provides an effective pre-training as highlighted in \cite{FAAB-IEEE,Zilman}.


\section{Maximum Entropy Approach for the supervised Hebbian learning} \label{sec:MAE}
In this section we aim to reach the expression for the probability distribution of the trained RBM, that is $\mathcal{P}^{\textrm{\tiny{RBM}}}(\boldsymbol \sigma, \boldsymbol z | \boldsymbol W = \bar{\boldsymbol \eta})$, which, in turns, coincides with the probability distribution stemming from the linearized partition function $\mathcal{Z}^{\textrm{\tiny{HN}}}_{\beta=1}(\mathcal S)$ found for the supervised HN in Proposition \ref{prop:int_rep}, from another perspective. Let us recall the learning rule for the supervised RBM,
\begin{equation}
W_{i \mu}^{(n+1)}=W_{i \mu}^{(n)}+\epsilon\left(\langle z_{\mu}\sigma_{i}\rangle_{\textrm{clamped}}-\langle z_{\mu}\sigma_{i}\rangle_{\textrm{free}}\right),
\end{equation}
which makes use of the correlation function $\langle z_{\mu}\sigma_{i}\rangle_{\textrm{clamped}}$, thus this is the only information that the RBM requires from a dataset. In this section we will show how it is possible to retrieve both the RBM and the HN by looking for the least structured density function that reproduces the clamped correlation function.
This construction requires the minimal number of constraints in order to recover the key models of this work.
Let us spell out the crucial assumptions of this derivation:
\begin{itemize}
	\item the density function depends on the set of binary variables $\boldsymbol \sigma \in \{-1 ,+1\}^N$ and on the set of real variables $\boldsymbol z \in \mathbb{R}^K$,
	\item we require that the expected value of the $L_2-$norm of the $\boldsymbol{z}$ variables is bounded;
	\item we require that the density function correctly reproduces the correlations between the $\boldsymbol z's$ and the $\boldsymbol \sigma$'s variables.
\end{itemize}

\begin{defn}\label{def:maxscorr}
\label{cor}We introduce the following two-point empirical correlation functions
\begin{eqnarray}\label{ElleDue}
C_{z^{2}}&=&\sum_{\mu=1}^{K}\overline{ z_{\mu}^{2}},\\ \label{ElleUno}
C_{\sigma z}^{i,\mu}&=&\overline{\sigma_{i}z_{\mu}}.
\end{eqnarray}
where the bar denotes the empirical average evaluated over the examples making up the dataset $\mathcal S$.
\end{defn}
\begin{defn}
Given a probability density function $\mathcal{P}(\boldsymbol{\sigma},\boldsymbol{z})$, the expectation
of an observable $O(\boldsymbol{\sigma},\boldsymbol{z})$ is defined
as
\begin{equation}
\langle O\rangle_{\mathcal{P}}=\sum_{\sigma}\int\prod_{\mu=1}^{K}\left(\frac{dz_{\mu}}{\sqrt{2\pi}}\right)O(\boldsymbol{\sigma},\boldsymbol{z})\mathcal{P}(\boldsymbol{\sigma},\boldsymbol{z}),
\end{equation}
and, to lighten the notation, we introduce the trace operator $\text{Tr}$
\begin{equation}
\text{Tr}[O(\boldsymbol{\sigma},\boldsymbol{z})]:=\sum_{\boldsymbol \sigma}\int\left[\prod_{\mu=1}^{K}\frac{dz_{\mu}}{\sqrt{2\pi}}\right]O(\boldsymbol{\sigma},\boldsymbol{z})
\end{equation}
such that 
\begin{equation}
\langle O\rangle_{\mathcal{P}}=\text{Tr}[O(\boldsymbol{\sigma},\boldsymbol{z})\mathcal{P}(\boldsymbol{\sigma},\boldsymbol{z})].
\end{equation}
\end{defn}
\begin{thm}
The least structured probability distribution reproducing the sets of correlation functions $C_{z^{2}}$ and $C_{\sigma z}^{i,\mu}$ is
\begin{equation}
\mathcal{P}(\boldsymbol{\sigma},\boldsymbol{z} )= \frac{1}{\mathcal Z} \exp\left[ -\frac{\lambda_1 }{2}\sum_{\mu=1}^{K}z_{\mu}^{2}+\sum_{i,\mu=1}^{N,K}\Lambda_{i \mu}\sigma_{i}z_{\mu}\right]\label{prob-1}
\end{equation}
where $\mathcal Z$ is the normalizing factor, the parameters $\lambda_1,  \{ \Lambda_{i \mu} \}_{i=1,...,N}^{\mu=1,...,K}$ are Lagrangian multipliers to be set in order constrain the correlation functions as prescribed in Definition \eqref{def:maxscorr} and therefore their value is dataset dependent; this probability distribution corresponds to the density function of a RBM and, by marginalising w.r.t. the $z_\mu$ variables, we recover the density function of a HN
\begin{equation}
\mathcal{P}(\boldsymbol{\sigma} )=\frac{1}{\mathcal{Z}}\exp\left[\frac{1}{2\lambda_{1}}\sum_{\mu=1}^{K}\sum_{i,j=1}^{N,N}\Lambda_{i \mu}\sigma_{i}\Lambda_{j \mu}\sigma_{j}\right].\label{comp1}
\end{equation}
\begin{proof}
We introduce the following Lagrangian 
\begin{equation}
S [\mathcal{P}]=-\text{Tr}(\mathcal{P}\log\mathcal{P})+\lambda_{0}[\text{Tr}(\mathcal{P})-1]+\frac{\lambda_{1}}{2}\left[\text{Tr}\left(\sum_{\mu=1}^{K}z_{\mu}^{2}\mathcal{P}\right)-C_{z^{2}}\right]  +\sum_{i,\mu=1}^{N,K}\Lambda_{i \mu}\left[\text{Tr}(\sigma_{i}z_{\mu}\mathcal{P})-C_{\sigma z}^{i,\mu}\right],
\label{act}
\end{equation}
that accounts for maximum entropy and the above-mentioned constraints. By  extremizing $S[\mathcal{P}]$ with respect to $\mathcal{P}$ and $\lambda_{0,1}$, $\Lambda_{i \mu}$, playing as the Lagrangian multipliers, we obtain 
%
\begin{equation}
\mathcal{P}(\boldsymbol{\sigma},\boldsymbol{z})= \exp\left[1 - \lambda_0 -\frac{\lambda_{1}}{2}\sum_{\mu=1}^{K}z_{\mu}^{2}+\sum_{i,\mu=1}^{N,K}\Lambda_{i,\mu}\sigma_{i}z_{\mu}\right],\label{probc}
\end{equation}
and, posing $\mathcal Z = \exp(\lambda_0- 1)$, we recover \eqref{prob-1}.
By integrating \eqref{probc} with respect to $z_{\mu}$ we get the marginal, that can be recast in the probability distribution for the HN
\begin{equation}
\mathcal{P}(\boldsymbol{\sigma}) \propto \exp\left[\frac{1}{2\lambda_{1}}\sum_{\mu=1}^{K}\sum_{i,j=1}^{N,N}\Lambda_{i,\mu}\sigma_{i}\Lambda_{j,\mu}\sigma_{j}\right].
\end{equation}
\end{proof}
\end{thm}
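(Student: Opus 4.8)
The plan is to read ``least structured'' as maximum Shannon--Gibbs entropy and to solve the resulting constrained optimization over probability densities by the calculus of variations, exactly in the spirit of Jaynes. First I would assemble the Lagrangian functional that augments the differential entropy $-\mathrm{Tr}(\mathcal{P}\log\mathcal{P})$ with a multiplier $\lambda_0$ enforcing normalization $\mathrm{Tr}(\mathcal{P})=1$, a multiplier $\lambda_1/2$ enforcing the $L_2$ bound $\mathrm{Tr}\big(\sum_\mu z_\mu^2\,\mathcal{P}\big)=C_{z^2}$, and one multiplier $\Lambda_{i\mu}$ for each cross-correlation constraint $\mathrm{Tr}(\sigma_i z_\mu \mathcal{P})=C_{\sigma z}^{i,\mu}$. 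This is precisely the functional $S[\mathcal{P}]$ displayed in \eqref{act}, so no guesswork is needed to set it up.

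Next I would extremize. The only nonlinear dependence on $\mathcal{P}$ sits in the term $-\mathcal{P}\log\mathcal{P}$, whose pointwise derivative is $-\log\mathcal{P}-1$, while every constraint is linear in $\mathcal{P}$; hence $\delta S/\delta\mathcal{P}=0$ reduces to a pointwise algebraic equation that I can solve explicitly for $\log\mathcal{P}$. Exponentiating and absorbing the normalization-related constant into a partition function $\mathcal{Z}$ produces the Gibbs form \eqref{prob-1} (equivalently \eqref{probc}). I would emphasize here that the quadratic constraint on $\boldsymbol z$ is exactly what makes the $z$-dependence Gaussian, i.e. the RBM's standard Gaussian prior on the hidden units emerges automatically; the density is then recognized as that of the RBM \eqref{prob} with weight matrix $W_{i\mu}=\Lambda_{i\mu}$, up to the scale fixed by $\lambda_1$. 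Because the entropy is strictly concave and the constraints linear, the stationary point is the unique global maximizer, so ``least structured'' is genuinely attained; I would also record that integrability of the $z$-integral forces $\lambda_1>0$, and that the multipliers are pinned implicitly by the data through the constraint equations, which is why they are dataset dependent.

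Finally, to reach the Hopfield form I would marginalize over the real neurons. Each $z_\mu$ enters only through the factor $\exp\big[-\tfrac{\lambda_1}{2}z_\mu^2+(\sum_i \Lambda_{i\mu}\sigma_i)z_\mu\big]$, so the integral factorizes over $\mu$ and each factor is the Gaussian integral $\int dz\,\exp(-\tfrac{\lambda_1}{2}z^2+Az)\propto\exp\!\big(A^2/2\lambda_1\big)$, the scaled version of the identity already invoked in Proposition~\ref{prop:int_rep}. Setting $A=\sum_i \Lambda_{i\mu}\sigma_i$ and summing the resulting exponents over $\mu$ gives $\frac{1}{2\lambda_1}\sum_\mu\sum_{i,j}\Lambda_{i\mu}\Lambda_{j\mu}\sigma_i\sigma_j$, which is \eqref{comp1}, with effective Hebbian coupling $J_{ij}\propto\sum_\mu \Lambda_{i\mu}\Lambda_{j\mu}$.

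I expect the only genuinely delicate point to be the justification that the variational stationary point is a true maximum and that multipliers reproducing the prescribed correlations actually exist; both follow from strict concavity of the entropy together with linearity of the constraints, so they are standard rather than hard. The Gaussian marginalization, although a routine integral, is the conceptual crux of the statement, since it is exactly the operation that realizes the RBM--HN duality at the level of the maximum-entropy measure and thereby closes the circle with Proposition~\ref{prop:int_rep}.
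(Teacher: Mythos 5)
Your proposal is correct and follows essentially the same route as the paper: the identical Lagrangian $S[\mathcal{P}]$ with multipliers $\lambda_0,\lambda_1,\Lambda_{i\mu}$, functional extremization to obtain the Gibbs form, and the factorized Gaussian integration over the $z_\mu$ to recover the Hopfield marginal. Your additional remarks (strict concavity guaranteeing a unique global maximizer, and $\lambda_1>0$ for integrability) are sound refinements that the paper's own proof leaves implicit.
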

\begin{rem}
The dataset-dependent expression for the Lagrangian multipliers $\lambda_{1}$, $\Lambda_{i \mu}$ that reproduce the specific correlation functions of our trained RBM model are
\begin{equation}
\lambda_{1}=1,
\end{equation}
\begin{equation}
\Lambda_{i \mu}=\sqrt{\frac{\beta}{N\mathcal{R}}}\frac{1}{M}\sum_{a=1}^{M}\eta_{i}^{\mu a}.
\end{equation}
In fact, with this choice, we get 
\begin{equation}
\mathcal{P}(\boldsymbol {\sigma}, \boldsymbol{z} | \mathcal S) \propto \exp\left[-  \sum_{\mu=1}^K \frac{z_{\mu}^2}{2}  + \sqrt{ \frac{\beta}{N\mathcal{R}}} \frac{1}{M}\sum_{\mu=1}^{K}\sum_{i=1}^{N}\left(\sum_{a=1}^{M}\eta_{i}^{\mu a}\right)\sigma_{i} z_{\mu} \right].\label{pkan_bo}
\end{equation}
and, by marginalizing over the $\boldsymbol z$'s variable, we recover the probability distribution for the supervised HN defined in Eq.~\eqref{model}
\begin{equation}
\mathcal{P}(\boldsymbol{\sigma} | \mathcal S) \propto \exp\left[\frac{1}{2}\frac{\beta}{N\mathcal{R}}\sum_{\mu=1}^{K}\sum_{i,j=1}^{N,N}\left(\frac{1}{M}\sum_{a=1}^{M}\eta_{i}^{\mu a}\right)\left(\frac{1}{M}\sum_{a=1}^{M}\eta_{j}^{\mu a}\right)\sigma_{i}\sigma_{j}\right].\label{pkan}
\end{equation}
\end{rem}
%
%

 \section{RBM for structured datasets} \label{sec:structured}
In this section we show how to generalize the previous grandmother-cell setting, worked out for RBMs dealing with structureless datasets, to the case of structured datasets. To this goal we must inspect the structure of the information content in the dataset: the underlying idea is that the machine inner-representation should be a ``carbon copy'' of the external information available, hence understanding its organization is pivotal to suitably generalize the processing scheme.

Before proceeding in that direction we introduce the notation. We consider a structured dataset $\{\boldsymbol {\breve\zeta}^{\mu a}\}_{\mu=1,...,K}^{a=1,...,M}$, with $\boldsymbol {\breve\zeta}^{\mu a} \in \mathbb{R}^N$, where $\mu$ labels the class each item belongs to and $a$ labels the items pertaining to the same class\footnote{Here we assume that classes display the same size to simplify the presentation, but this constraint can be relaxed.}. In the following we will consider as examples of structured datasets the MNIST \cite{MNIST} and the fashion-MNIST \cite{FMNIST}, each counting $K=10$ classes and $M=6000$ items per class, also, items are made of $N=28 \times 28 = 784$ pixels each.
\newline
As mentioned in the main text, before designing the network and setting the related weights, data must be pre-treated as we are going to explain.

First we binarize data, namely for each pixel $i$ we evaluate the average $\bar{\bar{\zeta_i}}$ over $\mu$ and over $a$, that is 
\begin{equation}
\bar{\bar{\zeta_i}} : = \frac{1}{K M} \sum_{\mu=1}^K \sum_{a=1}^M \zeta_i^{\mu a} 
\end{equation}
and set 
\begin{equation}
\zeta_i^{\mu a} = 
\begin{cases}
+1 ~~\textrm{if}~~ \breve{\zeta}_i^{\mu a} > \bar{\bar{\zeta_i}}\\ 
-1 ~~\textrm{if}~~ \breve{\zeta}_i^{\mu a} <\bar{\bar{\zeta_i}}
\end{cases}.
\end{equation}
The original dataset $\{\boldsymbol {\breve\zeta}^{\mu a}\}_{\mu=1,...,K}^{a=1,...,M}$ has therefore been mapped into the binary dataset $\{\boldsymbol \zeta^{\mu a}\}_{\mu=1,...,K}^{a=1,...,M}$.

\subsection{Ultrametricity in the structured datasets}\label{subsec:GG}
Given the dataset $\{\boldsymbol \zeta^{\mu a}\}_{\mu=1,...,K}^{a=1,...,M}$, we start inspecting its structure by tools inspired by the statistical mechanics of glassy systems. 
More precisely, looking at the items pertaining to the same class as different \emph{replicas}, we aim to evidence any signature of replica-symmetry or replica-symmetry-breaking.
A straightforward approach consists in measuring the distribution of the replica overlaps that is estimated by a MC routine as explained hereafter.
\newline
In each MC step, we draw a class $\mu$ uniformly in the range $\{1, ..., K\}$ and in that class we draw two labels $(a, b)$ uniformly and without repetition in the range $\{1, ..., M\}$; to simplify the notation let us take $(a, b) = (1, 2)$ without loss of generality.
With these extracted items we evaluate the replica overlap
\begin{equation}
q_{12} = \frac{1}{N} \sum_{i=1}^N \zeta_i^{\mu 1} \zeta_i^{\mu 2} .
\end{equation}
We repeat the operation $1.5 \times 10^4$ times and we build the histogram represented in panels $a,b,c$ of Fig.~\ref{fig:GG}, which provides our empirical representation of the overlap distribution $\mathcal P(q)$. 
To check the robustness of results versus the item size, we mimicked a finite-size scaling by progressively drop out a subset of the pixels making up the items selected.
More precisely, denoting with $\mathbb{1}$ the characteristic function, for any pixel $i$ we introduce $N^+_i = \sum_{\mu, a} \mathbb{1} [\zeta_i^{\mu a} =1]$ and $N^-_i = \sum_{\mu, a} \mathbb{1} [\zeta_i^{\mu a} =-1]$,
whence we evaluate $r_i = \frac{|N_i^+ - N_i^- |}{N_i^+ + N_i^-}$, representing the pixel quality.
Then, pixels are ranked based on their quality and we select the first quartile to make up the low-size sample, the second quartile to make up the medium-size sample and finally the fourth quartile (namely all pixels are retained) to make the large-size sample. 
Remarkably, unlike the random dataset \footnote{which has been generated in order to satisfy this two lowest order moments $\langle\frac{1}{N}\sum_{i=1}^N\zeta^{\mu a}_i\rangle = 0$ , $\langle\frac{1}{N}\sum_{i=1}^N\zeta^{\mu a}_i\zeta^{\nu b}_i\rangle = \delta_{\mu\nu}[\delta_{ab} + (1-\delta_{ab})r^2]$} that gives rise to a replica-symmetric overlap distribution (see Figure \ref{fig:GG}$a$), the histograms obtained for MNIST and fashion-MNIST (see Figure \ref{fig:GG}$b,c$) both exhibit a bimodal shape whose broadness does not shrink as the size of the dataset is enlarged and, for the latter, a central plateau also emerges as the item size $N$ gets larger.

\begin{figure}[tb]
\noindent \begin{centering}
\includegraphics[width=0.65\textwidth]{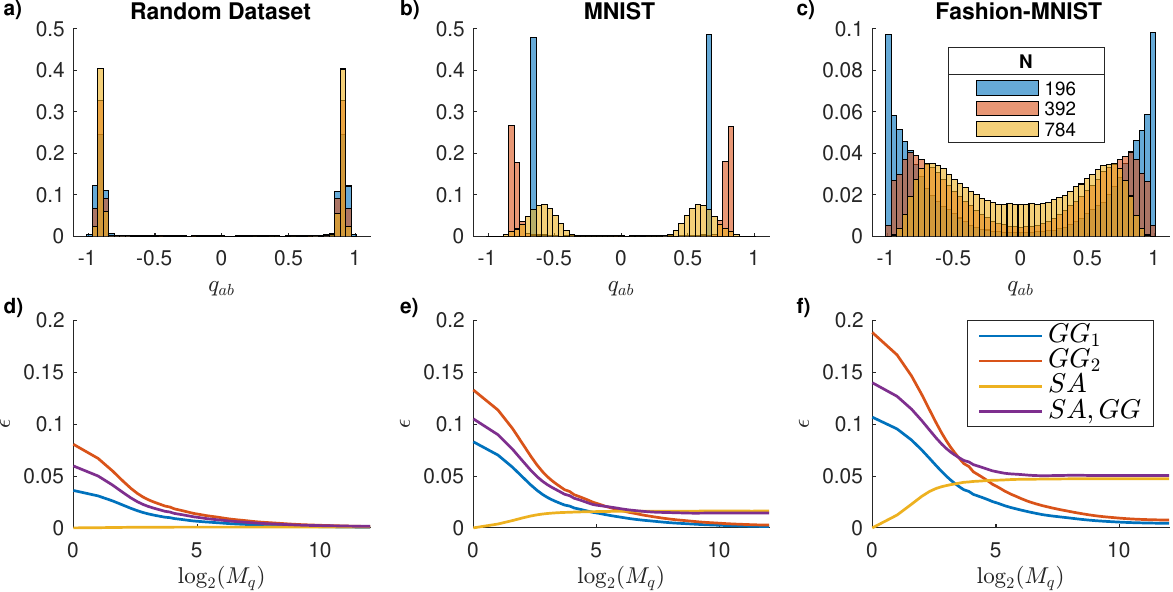}
\par\end{centering}
\caption{\label{fig:GG}Evidence of replica-symmetry breaking in datasets. In the first row we compare the empirical overlap distribution $\mathcal P(q)$ obtained for the random (panel $a$), the MNIST dataset (panel $b$), and the fashion-MNIST (panel $c$) datasets; three different item sizes are also considered, as explained by the legend. From left to right, we move from a replica-symmetry scenario where $\mathcal P(q)$ exhibits two peaks that get sharper and shaper as the item size is larger and larger, to a replica-symmetry-breaking scenario where $\mathcal P(q)$ is bimodal but with increasing broadness mirroring Parisi's overlap distribution \cite{MPV}.
In the second row we compare the value of the quantities introduced in eqs.~(\ref{a})-(\ref{b}), obtained for the random (panel $d$), the MNIST (panel $e$), and the fashion-MNIST (panel $f$) datasets. Again, from left to right, we move from a replica-symmetry scenario where the self-averaging relations hold and the Ghirlanda-Guerra relations (corresponding to trivial identities) are fast vanishing, to a replica-symmetry-breaking scenario where the self-averaging relations do not hold and the Ghirlanda-Guerra relations (this time not trivial) are also satisfied. We point out that, although these plots are built by aggregating results pertaining to all classes (for each $\mu = 1,\cdots,K$), we checked that they are in agreement with single-class results,  in fact, the results have been aggregated in order to increase the statistics at our disposal.}
\end{figure}

To further inspect this point, we analyze another feature characterizing ultrametricity induced by replica symmetry breaking, namely the validity of relations  known as ``Ghirlanda-Guerra identities'' \cite{GG,AC,P,cicio,Silvio} and that read as follows:
\begin{eqnarray}
\label{eq:GG1}
\lim_{N \to \infty} \left( \langle q_{12}^{4}\rangle-2\langle(q_{12}q_{13})^{2}\rangle+\langle q_{12}^{2}\rangle^{2} \right) &=&0\\
\label{eq:GG2}
\lim_{N \to \infty} \left( \langle q_{12}^{4}\rangle-3\langle(q_{12}q_{34})^{2}\rangle+2\langle q_{12}^{2}\rangle^{2}  \right) &=&0,
\end{eqnarray}
where $q_{ab}$ is the standard replica between two spin configurations labelled as, respectively, $a$ and $b$, sampled from the same Boltzmann-Gibbs distribution (see Definition \ref{def:q}), and the brackets $\langle \cdot \rangle$ denote the Boltzmann-Gibbs expectation. 
\newline
On the other hand, for configurations exhibiting replica symmetry we expect that
\begin{equation}  \label{eq:SA}
\lim_{N \to \infty}\left(\langle q_{12}^{2}\rangle-\langle q_{12}\rangle^{2}\right)=0, \ \ \lim_{N\to\infty} \left( \langle q_{12}^{4}\rangle-\langle q_{12}^{2}\rangle^{2}\right)=0.
\end{equation}
We now evaluate the validity of the previous equations for overlaps built over the dataset items. Again, we rely on MC estimates as described hereafter. 
In each MC step, we draw a class $\mu$ uniformly in the range $\{1, ..., K\}$ and in that class we draw four labels $(a, b, c, d)$ uniformly and without repetition in the range $\{1, ..., M\}$; to simplify notation let us take $(a, b, c, d) = (1, 2, 3, 4)$ without loss of generality.
From the extracted sample $\{ \boldsymbol \zeta^{\mu 1}, \boldsymbol\zeta^{\mu 2}, \boldsymbol\zeta^{\mu 3},\boldsymbol \zeta^{\mu 4}\}$ we evaluate the replica overlaps
\begin{equation}
q_{12} = \frac{1}{N} \sum_{i=1}^N \zeta_i^{\mu 1} \zeta_i^{\mu 2} 
\end{equation}
and analogously for $q_{13}, q_{14}, q_{24}$.
We repeat the procedure $M_q$ times, collect these overlaps and finally obtain averages of overlaps and overlaps correlations that are denote by the brackets $\langle \cdot \rangle$.
These quantities are then combined to build the following quantities
\begin{eqnarray}
\label{a}
\epsilon_{GG1} & = & \langle q_{12}^{4}\rangle-2\langle(q_{12}q_{13})^{2}\rangle+\langle q_{12}^{2}\rangle^{2}\\
\epsilon_{GG2} & = & \langle q_{12}^{4}\rangle-3\langle(q_{12}q_{34})^{2}\rangle+2\langle q_{12}^{2}\rangle^{2}\\
\epsilon_{SA} & = & \langle q_{12}^{4}\rangle-\langle q_{12}^{2}\rangle^{2}\\
\label{b}
\epsilon_{SA,GG} & = & 2\langle(q_{12}q_{13})^{2}\rangle-3\langle(q_{12}q_{34})^{2}\rangle+\langle q_{12}^{4}\rangle
\end{eqnarray}
The first three equations stem directly from the Ghirlanda-Guerra relations \eqref{eq:GG1}-\eqref{eq:GG2} and from the self-averaging property \eqref{eq:SA}; the forth relation was obtained by combining
\begin{equation}
\epsilon_{GG1}-\epsilon_{GG2}=-2\langle(q_{12}q_{13})^{2}\rangle+3\langle(q_{12}q_{34})^{2}\rangle-\langle q_{12}^{2}\rangle^{2}
\end{equation}
and reconstructing $\epsilon_{SA}$ as
\begin{eqnarray}
\nonumber
\epsilon_{SA}&=&\epsilon_{GG1}-\epsilon_{GG2}+2\langle(q_{12}q_{13})^{2}\rangle-3\langle(q_{12}q_{34})^{2}\rangle+\langle q_{12}^{4}\rangle\\
&=&\epsilon_{GG1}-\epsilon_{GG2}+\epsilon_{SA,GG}.
\end{eqnarray}
Therefore, any evidence of non-vanishing $\epsilon_{SA}$ and $\epsilon_{SA,GG}$ could be interpreted as a signature of replica-symmetry-breaking and, in that case, non-null values for $\epsilon_{GG1}$ and $\epsilon_{GG2}$ would suggest a non-trivial information organization.
\newline
Our empirical estimates for $\epsilon_{GG1}, \epsilon_{GG2}, \epsilon_{SA}, \epsilon_{SA,GG}$ are shown in Fig.~\ref{fig:GG}$d$-$f$ versus the number of MC steps, namely versus the number of extracted items on a logarithmic scale.
First, we notice that, as expected, for $M_q=1$ the SA ansatz is exact.
Moreover, for the random dataset the replica-symmetry scenario holds, while for structured datasets we see that, as $M_q$ gets larger, and therefore as our estimates get more and more reliable since based on larger and larger samples, $\epsilon_{SA}$ and $\epsilon_{SA,GG}$ tend to settle on non-null values that approximately coincide, while $\epsilon_{GG1}$ and $\epsilon_{GG2}$ tend to vanish.

Summarizing these empirical findings, we can speculate that:
\newline
- the replica-symmetry characterizing the structureless datasets guarantees that a unique hidden layer suffices to classify (note that classification is intrinsically a  {\em replica symmetric concept}), hence the grandmother-cell setting stands alone in that simple limit.
\newline
- the replica-symmetry-breaking characterizing the structured datasets -- that we infer by the lack of self-averaging in the overlap distribution and by the validity of ultrametric identities -- suggests that the grandmother-cell alone is no longer enough for structured datasets and hints at a generalization. 
\newline
How generalizing? Let us focus on the MNIST case as a practical example (the same holds also for other datasets): by construction, the last layer should be composed of $10$ neurons (one per archetype, namely one per digit) and that layer performs the replica-symmetric classification. Therefore, we must introduce in the machine architecture further internal layers, between the input layer and the final replica-symmetric layer, where the assumption that any example pertaining to the same digit (archetype) lights a unique hidden neuron does not hold any longer. In other words, we must introduce at least a 1-RSB extra hidden layer, where there are a few 1-RSB-archetypes for the same digit (e.g., a vertical line, a right-oriented oblique line and a left-oriented oblique line, that are all corresponding to the number $1$ in the last layer where replica symmetric classification takes place, etc.), see Fig.~\ref{fig:architettura} for a sketch. We discuss the consequences of this idea in the next subsection.   

\begin{figure}
\begin{center}
\includegraphics[scale=0.13]{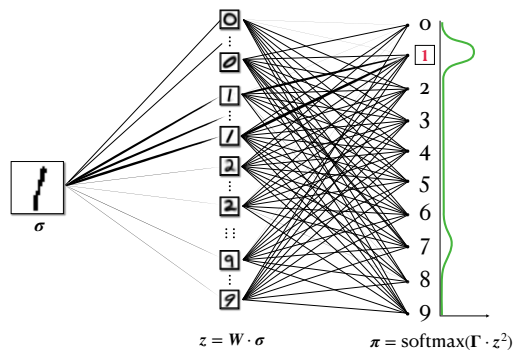}
\caption{Schematic representation of a three-layer RBM for the MNIST dataset. From left to right: visible layer $\boldsymbol \sigma \in \{-1, +1\}^N$ receiving digits to be classified; hidden layer  $\boldsymbol z \in \mathbb R^{\hat K}$ where each node corresponds to a pseudo archetype as sketched; softmax layer $\boldsymbol \pi \in [0, +1]^K$ for classification. 
}\label{fig:architettura}
\end{center}
\end{figure}

\subsection{Ultrametric generalized grandmother-cell ansatz}
We now proceed with the pre-treatment of data to reach a generalized setting.

First, for each class we assess the related quality denoted as $r_{\mu}$ and obtained as follows: 
given $i$ and $\mu$, we count the number of positive $N^+_{i \mu}$ and negative $N^-_{i \mu}$ pixels over the class sample $\{\zeta_i^{\mu a}\}_{a=1,...,M}$ and set 
\begin{equation}
r_{\mu} = \frac{1}{N} \sum_{i=1}^N \frac{|N^+_{i \mu} - N^-_{i \mu} |}{N^+_{i \mu} + N^-_{i \mu}}.
\end{equation}
Exploiting the scaling \eqref{eq:scalscal}, for an arbitrary $\rho \in \mathbb R$, we can also introduce 
\begin{equation}
M_{\mu} = \frac{1 - r_{\mu}^2}{\rho r_{\mu}^2}
\end{equation} 
that represents the number of items needed to correctly process the $\mu$-th class.
\newline
Further, we define the centroids $\hat{\boldsymbol \zeta}^{\mu \ell} \in \mathbb R^N$, for $\ell=1, ..., M_{\mu}$, obtained by applying the $k$-means clustering algorithm within the $\mu$-th sample and setting $k=M_{\mu}$.

\begin{figure}[bt]
\noindent \begin{centering}
\includegraphics[width=0.6\textwidth]{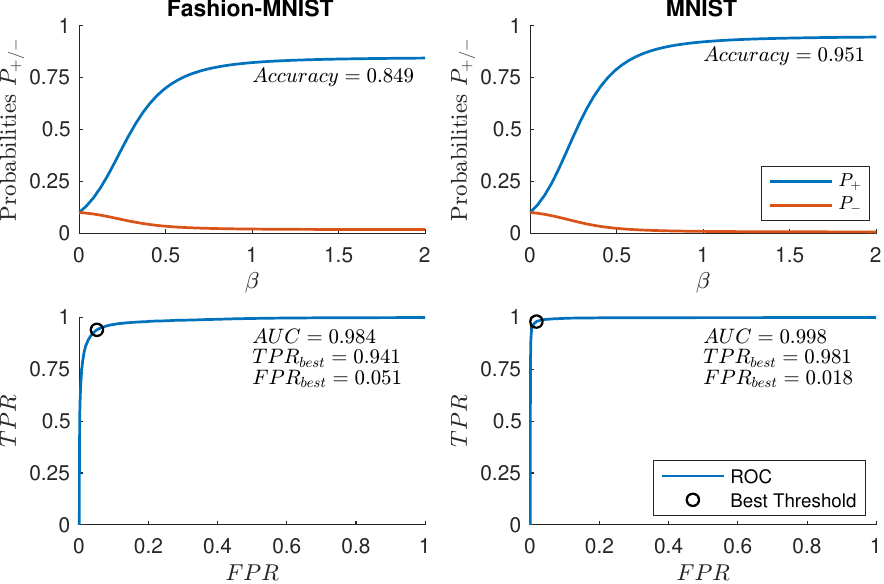}
\par\end{centering}
\caption{The ultrametric RBM (URBM) has been tested on two structured datasets, namely Fashion-MNIST (left) and MNIST (right). In the upper panels the classification probabilities are summarised as $P_+$, the probability of correctly classifying a given example ($\pi_{\nu}(\boldsymbol \sigma=\tilde{ \boldsymbol \zeta}^{\nu})$), and $P_-$, the class-specific probability of misclassifying a given example ($\pi_{\mu}(\boldsymbol \sigma=\tilde{ \boldsymbol \zeta}^{\nu})$), and these are shown by varying the parameter $\beta$. These probabilities are normalised as $P_+ + 9 P_- = 1$ since in total there are $10$ different classes for both datasets. Remarkably, the URBM achieves an accuracy of $95\%$ on the MNIST dataset and, as expected, for the more-challenging Fashion-MNIST dataset (which exhibits stronger RSB-like effects), the accuracy is reduced to $85\%$. In the lower panels the performance of these classifiers is more thoroughly analysed by the classical ROC approach; both datasets at the optimal threshold show excellent TPR/FRP ratios and AUC values which are very close to the ideal value of $1$.}\label{fig:MNIST}
\end{figure}

We are now ready to design the RBM to classify this dataset, that shall be made of $i.$ an input visible layer of size $N = 784$, corresponding to the number of pixels of each item and whose neurons are denoted as $\boldsymbol \sigma = (\sigma_1, ..., \sigma_N)$, $ii.$ a hidden layer of size $\hat{K} = \sum_{\mu} M_{\mu}$, corresponding to the overall number of centroids and whose neurons are denoted as  $\boldsymbol z = (z_1, ..., z_{\hat K})$, $iii.$ an output visible layer of size $K$ and whose neurons are denoted as $\boldsymbol \pi =(\pi_1, ..., \pi_K)$, see Fig.~4 in the main text.
As explained below, the latter is a softmax layer
in such a way that $\pi_{\mu}$ represents the probability that the input supplied to the network belongs to the $\mu$-th class. \\
Let us now specify the activation functions of the hidden and of the output neurons. 
Inputting a generic item ${\boldsymbol \zeta}^{\mu}$ pertaining to the $\mu$-th class, we initialize the visible layer as $\boldsymbol \sigma = {\boldsymbol \zeta}^{\mu}$, and, accordingly, the hidden layer is set as 
$\boldsymbol z = \boldsymbol W \cdot \boldsymbol \sigma = \boldsymbol W \cdot {\boldsymbol \zeta}^{\mu} $, where the weight matrix $\boldsymbol W \in \mathbb{R}^{\hat K \times N}$ is built analogously to the random case described in the main text, but here we refer to the $\hat K$ centroids instead of the empirical averages, thus the matrix has rows $\boldsymbol W_{j} = \hat{ \boldsymbol \zeta}^{j}/N$ for $j=1,...,\hat K$ (here with a slight abuse of notation, the couple $(\mu,\ell)$ indexing the centroids has been recast in the simple label $j$).
Finally, as anticipated, the output layer is determined as $\boldsymbol \pi = \textrm{softmax} (\boldsymbol  \Gamma \cdot \boldsymbol z^2)$, where $\boldsymbol \Gamma  \in \mathbb{R}^{  K\times \hat K}$, and returns the classification probability for the input; the square of the hidden neuron amplitude $\boldsymbol z^2 = \boldsymbol z \odot \boldsymbol z$ is an Hadamard product (i.e. it is carried out element-wise) and it is meant to preserve the gauge invariance characterizing the model.
The matrix $\boldsymbol \Gamma$ has to be determined by handling the sample $\{\boldsymbol \zeta^{\mu a}\}_{\mu=1,...,K}^{a=1,...,M}$: the procedure can indeed be looked at as a training and it can be accomplished by directly accounting for the whole sample in a simple algebraic passage as detailed hereafter.
%
\newline
Basically, our goal is to obtain a matrix that, applied to $\boldsymbol z^{2} = (\boldsymbol W \cdot {\boldsymbol \zeta}^{\mu})^{2}$, returns a one-hot vector $\boldsymbol c =  \boldsymbol \Gamma \cdot  \boldsymbol z^2 = \boldsymbol \Gamma \cdot  (\boldsymbol  W  \cdot {\boldsymbol \zeta}^{\mu } )^2 \in \{0, 1\}^K$ whose unique non-null entry is the one corresponding to the wanted class $\mu$.\\
In order to account for the whole training set, we introduce a capital notation, namely\\ $\boldsymbol \Sigma = (\zeta^{11}, \zeta^{12},...,\zeta^{1M}; \zeta^{21}, \zeta^{22},...,\zeta^{2M};$ $...; \zeta^{K1},\zeta^{K2},..., \zeta^{KM}) \in \{-1, +1\}^{N \times ( K M)}$, $\boldsymbol Z = \boldsymbol W \cdot \boldsymbol \Sigma \in \mathbb {R}^{\hat K \times (KM)} $ and $\boldsymbol C = \boldsymbol \Gamma \cdot \boldsymbol Z^2 \in \{0, 1\}^{K \times (K M)}$ defined in such a way that $C_{\mu, (\nu,\ell)} = \delta_{\mu,\nu}$, then we solve 
 \begin{equation}
 \boldsymbol \Gamma \cdot \boldsymbol Z^2 =  \boldsymbol \Gamma \cdot (\boldsymbol W \cdot \boldsymbol \Sigma)^2 = \boldsymbol C.
 \end{equation}
%

for $\boldsymbol \Gamma$ by applying the pseudo-inverse rule as
\begin{equation}
\boldsymbol \Gamma  = \boldsymbol C (\boldsymbol Z^2)^T \cdot [\boldsymbol Z^2 \cdot (\boldsymbol Z^2)^T]^{-1}.
\end{equation}
In this way $\boldsymbol \Gamma$ is tailored in such a way that  $\boldsymbol c = \boldsymbol \Gamma \cdot \boldsymbol z^2$ is approximately binary and has entry approximately equal to $1$ for the correct class and approximately equal to zero for the remaining, hence generalizing the grandmother cell setting described in Sec.~\ref{sec:RBM}. 
Notice that the obtained matrix $\boldsymbol \Gamma$ has size $K \times \hat K$ and can therefore be directly implemented in the RBM receiving single items as inputs.

\begin{figure} [tb]
\begin{center}
\includegraphics[scale=0.6]{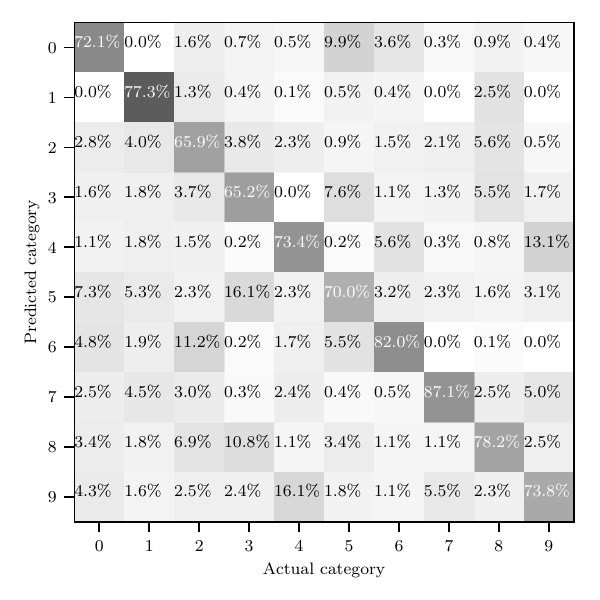}
\includegraphics[scale=0.6]{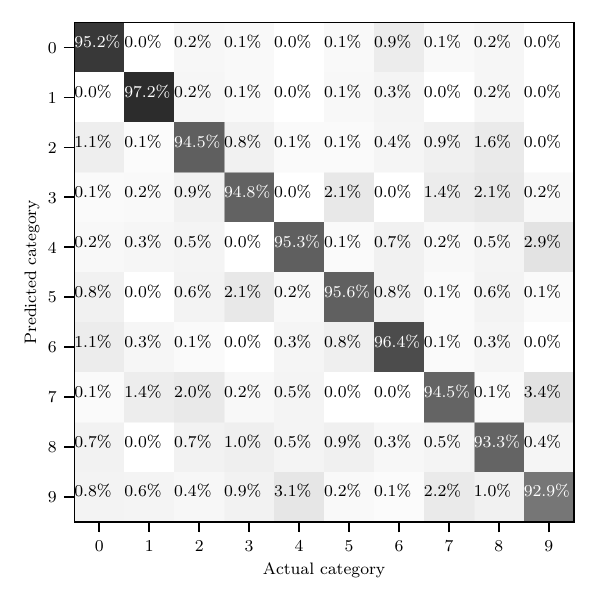}
\includegraphics[scale=0.6]{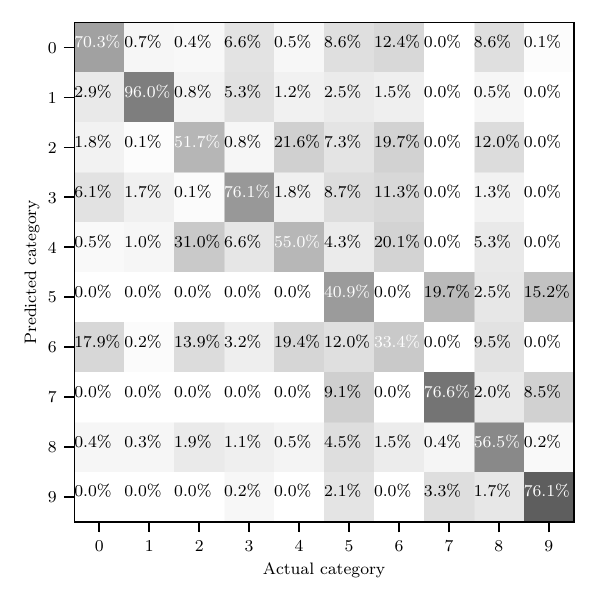}
\includegraphics[scale=0.6]{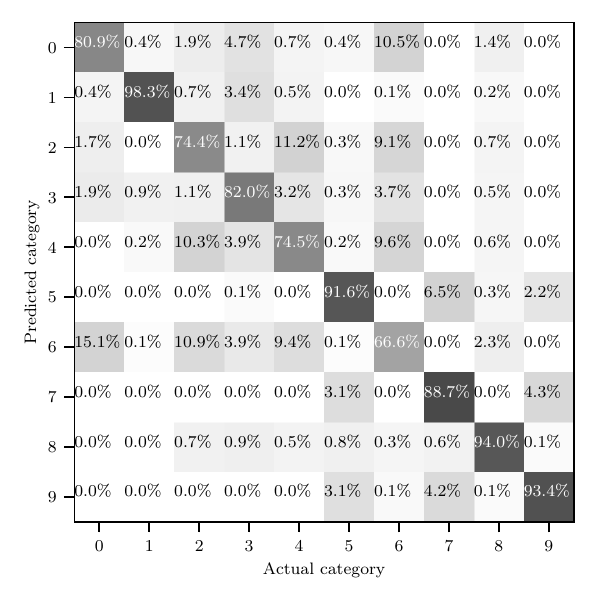}
\caption{Confusion matrices for the MNIST (first row) and the fashion-MNIST datasets (second row) obtained by a RBM trained under the grandmother-cell protocol and made up of one hidden layer of size $K=10$ corresponding to the number of classes (first column) and made up of one hidden layer of size $\hat{K}=300$ corresponding to the overall number of pseudo archetypes identified (second column).}\label{fig:confucio}
\end{center}
\end{figure}

Before proceeding we also recall that in the softmax layer is applied with a free parameter denoted with $\beta$, that is
\begin{equation}
\pi_{\nu} = \frac{e^{\beta c_{\nu}}}{\sum_{\mu=1}^Ke^{\beta c_{\mu}} },
\end{equation}
in such a way that $\beta$ tunes the broadness of the distribution (playing a role similar to the temperature in the statistical mechanics framework). 
Notice that if we let $\beta \to \infty$ the softmax collapses to a delta function peaked at $\textrm{argmax}_{\mu=1,...,K}(\boldsymbol \Gamma \boldsymbol z^2)$, while if we let $\beta \to 0$ the softmax collapses to a uniform distribution.

We trained the RBM as specified above for the MNIST and the fashion-MNIST dataset and then evaluated their accuracy over the test sample; results are shown in Fig.~\ref{fig:MNIST}.
In particular, in panels $a$ and $b$, we present the probability $\pi_{\nu}(\boldsymbol \sigma=\tilde{ \boldsymbol \zeta}^{\nu})$ that the network classifies correctly a test example and the probability $\pi_{\mu}(\boldsymbol \sigma=\tilde{ \boldsymbol \zeta}^{\nu})$ that the network misclassifies; in both cases probabilities are averaged over the test sample. Notice that, as $\beta$ gets larger, the former grows and the latter decreases, anyhow the former is always larger than the latter -- the two trivially coincide when $\beta=0$.
The accuracy of this machine, defined as the ratio of correct answers versus the number of trials, coincides with the test-sample average of $\pi_{\nu}(\boldsymbol \sigma=\tilde{ \boldsymbol \zeta}^{\nu})$ as $\beta \to \infty$ and this quantity is as well reported in Fig.~\ref{fig:MNIST}.
Further, we built the ROC curve and obtained an area under the curve which is approximately $1$ for MNIST and approximately $0.98$ for Fashion-MNIST.

Finally, it is instructive to compare the confusion matrices obtained for the same structured datasets exploiting a simple (RS) RBM and an ultrametric (RSB) RBM, see Fig.~\ref{fig:confucio}. Again, we notice that the performance for the fashion-MNIST is slightly lower than that for the MNIST and this may be related to a larger extent of internal structure as discussed in Sec.~\ref{subsec:GG}. However, in both cases, the accuracy resulting from the URBM is significantly larger than that resulting from the bare RBM.
indeed, for the former we reach an accuracy of about $95 \%$ for MNIST and $84 \%$ for fashion-MNIST, to be compared with, respectively,  $75 \%$ and $63 \%$ obtained for the simple RBM.

\end{document}